\PassOptionsToPackage{usenames,dvipsnames}{xcolor}
\documentclass[11pt]{article}

\usepackage{amsmath,amsthm,mathtools,thmtools}
\usepackage{enumitem}

\usepackage[T1]{fontenc}
\usepackage{url}
\usepackage{nicefrac}

\usepackage{hyperref}
\hypersetup{
    colorlinks=true,
    linkcolor=blue,
    filecolor=magenta,      
    urlcolor=cyan,
    pdfpagemode=FullScreen,
    }

\usepackage{cleveref}
\usepackage{float}

\declaretheorem[numberwithin=section]{theorem}

\declaretheorem[numberlike=theorem,style=definition]{definition}
\declaretheorem[numberlike=theorem]{claim,observation,corollary,proposition,lemma}
\declaretheorem[numberlike=theorem,style=remark]{remark}

\Crefname{assumption}{Assumption}{Assumptions}
\declaretheorem[numberwithin=theorem,name=Claim]{claim-inside-theorem}
\Crefname{claim-inside-theorem}{Claim}{Claims}
\declaretheorem[numberwithin=theorem,name=Lemma]{lemma-inside-theorem}
\Crefname{lemma-inside-theorem}{Lemma}{Lemmas}
\declaretheorem[numberlike=claim-inside-theorem,name=Observation]{observation-inside-theorem}
\declaretheorem[numberwithin=lemma,name=Claim]{claim-inside-lemma}
\Crefname{claim-inside-lemma}{Claim}{Claims}
\declaretheorem[numberwithin=claim-inside-lemma,name=Subclaim]{subclaim-inside-claim-inside-lemma}
\Crefname{subclaim-inside-claim-inside-lemma}{Subclaim}{Subclaims}
\declaretheorem[numberlike=claim-inside-lemma,name=Corollary]{corollary-inside-lemma}
\declaretheorem[numberwithin=lemma,name=Lemma]{lemma-inside-lemma}
\declaretheorem[numberlike=claim-inside-lemma,name=Observation]{observation-inside-lemma}
\declaretheorem[numberwithin=lemma-inside-theorem,name=Claim]{claim-inside-lemma-inside-theorem}
\declaretheorem[numberwithin=lemma-inside-theorem,name=Lemma]{lemma-inside-lemma-inside-theorem}
\declaretheorem[numberwithin=lemma-inside-theorem,name=Observation]{observation-inside-lemma-inside-theorem}
\crefname{claim}{claim}{claims}
\Crefname{algocf}{Algorithm}{Algorithms}

\newcounter{personalcounterzero}
\newcounter{personalcounterone}
\newcounter{personalcountertwo}
\newcounter{personalcounterthree}

\Crefname{paragraph}{Paragraph}{Paragraphs}

\newcommand{\C}{\mathcal{C}} 

\renewcommand{\epsilon}{\varepsilon}

\usepackage[vlined,ruled]{algorithm2e}

\SetKw{AAnd}{and}
\SetKw{Or}{or}
\SetKw{Not}{not}
\SetKw{Break}{break}
\SetAlFnt{\small}
\SetAlCapFnt{\small}
\SetAlCapNameFnt{\small}
\SetAlCapHSkip{-1ex}
\IncMargin{-\parindent}
\SetAlgoCaptionSeparator{}

\usepackage[edges]{forest}		
\usepackage{tikz}
\usetikzlibrary{arrows.meta}
\usepackage{tikz-3dplot}
\usetikzlibrary{calc,arrows,automata,patterns,graphs,shapes,petri,decorations.pathmorphing,decorations.markings,decorations.pathreplacing} 
\tikzset{->-/.style={decoration={markings,mark=at position .5 with {\arrow{>}}},postaction={decorate}}}
\tikzset{vertex/.style={draw,circle,inner sep=0pt,minimum size=18pt},>=latex'}
\tikzset{c/.style={draw,circle,inner sep=0pt,minimum size=15pt},>=latex'}
\tikzset{dot/.style={draw,circle,inner sep=0pt,minimum size=3pt, fill=black},>=latex'}

\usepackage{xcolor}
\definecolor{lightblue}{rgb}{0.38,0.82,0.90}
\definecolor{darkgreen}{rgb}{0.2,0.8,0.55}

\usepackage[colorinlistoftodos]{todonotes}

\newcommand{\hide}[1]{}

\newcommand{\shortv}[1]{\hide{#1}}

\newcommand{\vnar}{\textup{\textsc{VnaR}}\xspace}

\newcommand{\hampath}{\textup{\textsc{Hamiltonian Path}}\xspace}
\newcommand{\tourn}{\mathcal{T}}
\newcommand{\pop}{\mathrm{pop}}

\newcommand{\beats}{\rightarrow}
\newcommand{\nph}{\textup{\textsf{NP-hard}}\xspace}
\newcommand{\vcprob}{\ensuremath{\mathrm{VC_{prob}}}\xspace}
\newcommand{\score}{\ensuremath{\mathrm{score}}\xspace}
\newcommand{\bigoh}{\mathcal{O}}

\newcommand{\FPT}{\textup{FPT}}
\newcommand{\XP}{\textup{XP}}

\newcommand{\vect}{\textsf{vec}}

\newcommand{\opt}{\textsf{OPT}\xspace}

\usepackage{graphicx}

\usepackage{lineno}

\newcommand{\etal}{et al.\xspace}

\newcommand{\ctc}{CtC\xspace}

\usepackage[margin=1in]{geometry}
\usepackage{charter}
\usepackage[english]{babel}
\usepackage{setspace}
\usepackage{tcolorbox}
\usepackage{amsmath,amssymb}

\makeatletter
\def\algbackskip{\hskip-\ALG@thistlm}
\makeatother

\title{Robust Value Maximization in Challenge the Champ Tournaments with Probabilistic Outcomes}
\author{
    Umang Bhaskar
    \thanks{Tata Institute of Fundamental Research, Mumbai. {\tt umang@tifr.res.in}} \and 
    Juhi Chaudhary 
    \thanks{Indian Institute of Petroleum and Energy, Visakhapatnam. {\tt juhi.math@iipe.ac.in}} \and 
    Sushmita Gupta 
    \thanks{The Institute of Mathematical Sciences, HBNI, Chennai. {\tt sushmitagupta@imsc.res.in}} \and 
    Pallavi Jain 
    \thanks{Indian Institute of Technology Jodhpur, Jodhpur. {\tt pallavi@iitj.ac.in}} \and 
    Sanjay Seetharaman
    \thanks{The Institute of Mathematical Sciences, HBNI, Chennai. {\tt sanjays@imsc.res.in}}
 }
\date{}

\allowdisplaybreaks

\begin{document}

\maketitle

\begin{abstract}
Challenge the Champ is a simple tournament format, where an ordering of the players --- called a seeding --- is decided. The first player in this order is the initial champ, and faces the next player. The outcome of each match decides the current champion, who faces the next player in the order. Each player also has a popularity, and the value of each match is the popularity of the winner. Value maximization in tournaments has been previously studied when each match has a deterministic outcome. However, match outcomes are often probabilistic, rather than deterministic. We study robust value maximization in Challenge the Champ tournaments, when the winner of a match may be probabilistic.
That is, we seek to maximize the total value that is obtained, irrespective of the outcome of probabilistic matches. 
We show that even in simple binary settings, for non-adaptive algorithms, the optimal robust value --- which we term the \textsc{VnaR}, or the value not at risk --- is hard to approximate. 
However, if we allow adaptive algorithms that determine the order of challengers based on the outcomes of previous matches, or restrict the matches with probabilistic outcomes, we can obtain good approximations to the optimal \textsc{VnaR}. 
\end{abstract}

\thispagestyle{empty}
\newpage
\setcounter{page}{1}

\section{Introduction}

\begin{sloppypar}
Tournaments are a fundamental mechanism for structuring competition. Beyond their well-established role in organizing play in sports and e-sports, where both competitive fairness and entertainment value matter, they also arise in other decision-making contexts, such as elections, where pairwise comparisons are meaningful~\cite{vu2009complexity,buchanan1980toward,rosen1985prizes,laslier1997tournament}. 
\end{sloppypar}

In tournaments, some matches naturally generate more audience interest than others --- either because they feature a marquee player who is highly popular or because the matchup, such as between arch-rivals, is anticipated to be dramatic. Organizers of sporting events, as well as competitive platforms more broadly, often try to structure play to maximize such engagement value, and in turn the revenue. An especially simple yet versatile format is the Challenge the Champ (CtC) tournament: a fixed “champion seat” is defended against a sequence of challengers whose order --- the {\it seeding} --- is chosen by the organizer. The seeding strongly shapes the tournament’s trajectory, and thus, by choosing the seeding, the organizer significantly influences the excitement it generates. 

Formally, a \ctc tournament, is a permutation of the players, called the seeding. The first player is the initial champion and is challenged by the second player. The winner of this match then plays the third player, and so on. With $n$ players, a winner (i.e., the final champion) is determined in $n-1$ matches. \ctc is a specific instance of single-elimination tournaments, historically used in boxing, martial arts, and other combat sports where challengers rise to take on the current champion (or master).

A major goal of tournament organizers is to maximize the value --- this could be the revenue, total viewership, tickets sold, etc. While there are many tournament value functions studied, typically each possible match has a value, and the value of the tournament is the sum of the values of the matches played. If outcomes were perfectly predictable, this design question would then be easy to state: choose a seeding that maximizes the total value of the tournament. 
In real-world scenarios, some pairwise results may be essentially certain, perhaps due to a large skill gap, while others are inherently uncertain and are modeled by probabilities. This probabilistic setting is a natural way to model real-life competitions, a fact that was noted in earlier papers~\cite{vu2009complexity,aziz2014fixing,BlazejG0S24}. 
Existing works on designing \ctc tournaments, and more general ones, have either assumed full determinism~\cite{bhaskar2025maximizing} or aimed to optimize expectation-based metrics such as expected wins or revenue~\cite{VuShoham2011}. But expectation alone can be misleading: a seeding that is strong on average may fail to produce enough, or any, exciting games if a few upsets occur. This motivates the search for a \emph{robust guarantee} in tournament design.\footnote{This is different from the notion of robustness discussed in \cite{ChatterjeeIbsenJensenTkadlec2016}, where it pertains to the error in the probability values.}

In this paper, we introduce a risk-averse alternative, called the {\sc Value Not At Risk} (\vnar), defined as the minimum total value an algorithm can obtain, over all realizations of the uncertain matches. The \vnar measures the value the organizer can guarantee, {\it no matter how the uncertain games turn out}. This parallels risk-aware reasoning in finance, where measures such as {\sc Value At Risk} ({\sc VaR}) quantify the guaranteed part of a portfolio’s return: instead of maximizing average value, the designer secures a safe baseline excitement level even under unfavorable outcomes.

Our contribution in this work is two-fold: Conceptually, we propose \vnar as a natural, risk-averse tournament design objective for the realistic setting where game outcomes are not fully predictable.
Algorithmically, we present the first study of \vnar-maximization in \ctc tournaments, and present a comprehensive set of results. For non-adaptive algorithms, we first establish the computational hardness of the problem even under severe restrictions, including additive and multiplicative approximation hardness and propose almost matching polynomial-time approximation algorithms. We then study adaptive algorithms, where decisions regarding future matches are deferred until earlier game outcomes are known, and show that good approximations can indeed be obtained. We note that adaptive tournament formats are used in real-world tournaments also, such as the FIDE Grand Swiss Tournament, where the contestants are paired to ensure that each player plays an opponent with a similar running score without playing the same opponent more than once. Match pairing for each round is done after the previous round has ended and depends on its results.

In the past, \ctc tournaments, also known as \emph{stepladder} tournaments in prior work, have been studied from multiple perspectives. These include satisfying axiomatic notions of fairness~\cite{ArlegiD20} and characterizing strength graphs under which a designated favorite player can win the tournament~\cite{knockout,YangD21}. When the strength graphs are probabilistic, Mattei et al.~\cite{mattei2015complexity} studied how players could be bribed to lower their chances of beating the initial champion, aiming to maximize the champion’s overall probability of winning while staying within a given budget. Building on this, Chaudhary et al.~\cite{chaudhary2025parameterized} explored the problem further from the perspective of parameterized complexity. In this paper, we study adaptivity in the presence of probabilistic events. It is worth noting that adaptivity in \emph{knockout tournaments}, a form of single-elimination tournament, has been examined recently by Chaudhary~\etal\cite{chaudhary2025adaptive}, though primarily from the perspective of coalition manipulation. In that setting, coalition players can strategically decide in each round which games to throw, depending on the outcomes of the previous matches. To the best of our knowledge, this is the first work on probabilistic \ctc tournaments, where the goal is to maximize an objective function that quantifies the number of valuable games.

\paragraph{Related Work.} Starting with the influential work of Vu \etal~\cite{vu2009complexity}, a number of tournament formats have been widely adopted and systematically studied, including the \emph{knockout} format \cite{suksompong2021tournaments,williams_moulin_2016,blavzej2024controlling,efremenko2025tournament}, the \emph{round-robin} format \cite{russell2009manipulating,suksompong2016scheduling}, the \emph{double-elimination} format \cite{aziz2018fixing,stanton2013structure}, and the \emph{Swiss system} \cite{sauer2024improving,sziklai2022efficacy}. Each of these formats represents a distinct organizational principle, offering particular advantages and limitations in terms of ranking outcomes and elimination procedures. We note that across these various tournament models\hide{that are studied in computational social choice,} the goal is often to determine if there is a seeding that allows a particular player to win, or to study the effects of bribery, winner determination, etc. Quite naturally, one of the primary goals of tournaments is to maximize revenue, generated either through advertising, sale of tickets, sale of broadcast rights, and so on. This aspect of tournaments is studied in prior work on knockout tournaments ~\cite{chaudhary2024make, gupta2024exercise, dagaev2018competitive} and is now being studied for \ctc tournaments as well~\cite{bhaskar2025maximizing}.

Vu \etal\cite{vu2009complexity} define the \emph{probabilistic tournament fixing} question, where the input consists of a set of players, of which one is distinguished and the probability value $p_{ij}$ associated with a game involving players $i$ and $j$. The paper presented a recursive formula to calculate the winning probability of a player, and for the case of a balanced tournament, showed NP-hardness of deciding whether the distinguished player wins with at least some probability, even under additional restrictions. Subsequently, Aziz \etal\cite{aziz2014fixing} and Blazej \etal\cite{BlazejG0S24} have furthered the study of probabilistic tournament fixing in the knockout format.

\paragraph{Our Setting.} We study tournament value maximization when the outcome of matches can be uncertain. We focus on \ctc tournaments when the value of each match is either 1 or 0, determined by the popularity of the winner (this is termed \emph{player-popularity-based} tournament values). Our input thus consists of a popularity (1 or 0) for each player, as well as a strength graph where additionally each directed edge, say $(p_1,p_2)$, has a weight in $[0,1]$, indicating the probability of the outcome determined by the edge direction. Thus, if edge $(i,j)$ has weight $p_{ij} \in [0,1]$, then in a match between these players, player $i$ beats player $j$ with probability $p_{ij}$, while $j$ beats $i$ with probability $1-p_{ij}$. If the weight on an edge is 1, such a match (and the corresponding edge) is called \emph{deterministic}, while a weight strictly less than 1 is an \emph{uncertain} edge and the resulting outcome is said to be probabilistic. We say a player $i$ {\it beats} $j$ (or $j$ is beaten by $i$) to mean that $i$ beats $j$ deterministically, i.e., $p_{ij} = 1$. 

Define $\mathcal{P}$ as the set of popular players, $\mathcal{U}$ as the set of unpopular players that are beaten\hide{ deterministically} by at least one popular player, and $\mathcal{W}$ be the set of remaining unpopular players. Let $n_p = |\mathcal{P}|$ and $n_u = |\mathcal{U}|$. Prior work by Bhaskar~\etal\cite{bhaskar2025maximizing} shows that in \ctc tournaments, in the absence of uncertainty (i.e., when every match is deterministic), the optimal seeding --- the sequence of matches that maximizes the value --- can be obtained in polynomial time. In fact, if $n_p$ is the number of players with value 1 (termed \emph{popular players}) and $n_u$ is the number of players with value 0 (termed \emph{unpopular players}) that are beaten by at least one popular player, then the optimal tournament value is exactly $n_p + n_u -1$. 

\paragraph{Value not at Risk.} In this work, where some matches are uncertain, we focus on maximizing the total value obtained  (equivalently, the number of matches won by popular players) 
in the \emph{worst case with probability 1}. Our objective is thus to maximize the total value for a risk-averse tournament organizer. 
Formally stated, for a seeding $\sigma$, the \emph{value not at risk} (the $\vnar(\sigma$)) is the total value obtained with probability 1. That is, for any possible outcomes of uncertain matches, the total value obtained is at least the \vnar. Our primary objective is to obtain a seeding with maximum possible \vnar. Note that this may require uncertain matches to be played. See \Cref{fig:example-with-pu-prob-edges} for an illustration. 
The following proposition shows that, as in the deterministic setting, the unpopular players in $\mathcal{W}$ do not contribute to the optimal value.

\begin{restatable}{proposition}{optimalvnar}
\label{prop:optimalvnar}
    The optimal \vnar obtainable by any algorithm is at most $n_p + n_u - 1$.
\end{restatable}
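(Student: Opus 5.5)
Fix any algorithm, adaptive or not. Since the \vnar is the value guaranteed over every realization of the uncertain matches, it suffices to exhibit one realization in which the total value is at most $n_p+n_u-1$. I will use the realization in which popular players lose every uncertain match they play (ties between two popular players can be resolved arbitrarily). Against an adaptive algorithm, this outcome is fixed online as each match is played, so the resulting run is consistent with whatever the algorithm does. The value of the run is the number of matches won by popular players, and I will charge each such match to a distinct player in $\mathcal{P}\cup\mathcal{U}$ other than one designated player.

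\textbf{Charging.} Every player except the final champion loses exactly one match. Hence the number of matches won by popular players equals the number of players who lost to a popular player, and each such loser is charged for its match. I will check which players can be charged.
\begin{itemize}
\item A player in $\mathcal{W}$ is never beaten deterministically by any popular player, by definition of $\mathcal{W}$. So in a match between a popular player and a player in $\mathcal{W}$, either the $\mathcal{W}$-player wins deterministically, or the edge is uncertain and our realization lets the $\mathcal{W}$-player win. Thus no $\mathcal{W}$-player is ever charged.
\item Every charged player therefore lies in $\mathcal{P}\cup\mathcal{U}$, and the charges are distinct because each player loses at most once.
\end{itemize}
This already gives value at most $n_p+n_u$. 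It remains to remove one unit.

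\textbf{Removing one unit.} There are two cases.
\begin{itemize}
\item If the final champion lies in $\mathcal{P}\cup\mathcal{U}$, it never loses and so is never charged. The value is then at most $n_p+n_u-1$.
\item Otherwise the final champion lies in $\mathcal{W}$. Consider the last popular player to hold the champion seat. If no popular player ever holds it, the value is $0$, and we are done provided $n_p+n_u\ge 1$. Otherwise this popular player eventually loses, and since the champion seat never returns to a popular player afterwards, it loses to an unpopular player. That loss is a match of value $0$, and the loser is a player of $\mathcal{P}$ who is not charged.
\end{itemize}
In either case some player of $\mathcal{P}\cup\mathcal{U}$ is uncharged, which gives the bound. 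The degenerate case $n_p=0$ gives value $0$ directly; the statement then reads $\le -1$, so I would note that it is meant for $n_p\ge 1$.

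\textbf{Main obstacle.} The delicate step is the second case of the subtraction argument: one must make sure that the uncharged player is genuinely not already counted. This holds because the last popular champion loses to a non-popular player, so its own loss contributes no value. The rest is bookkeeping, together with the observation that the adversarial realization is well-defined against adaptive algorithms.
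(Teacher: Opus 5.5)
Your proof is correct and uses the same idea as the paper: fix the adversarial realization in which no $\mathcal{W}$-player ever loses to a popular player. Your charging of value-1 matches to their losers, together with the case analysis on the final champion, spells out the passage from ``matches involving $\mathcal{W}$ have value 0'' to the bound $n_p+n_u-1$, which the paper asserts in one line; the extra adversarial choices (popular players also losing uncertain matches to $\mathcal{U}$ and to each other) are harmless but unnecessary, and your remark that the bound requires $n_p\ge 1$ is a valid caveat about the degenerate case.
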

\begin{proof}
   Note that, by definition, players in $\mathcal{W}$ are not (deterministically) beaten by any player in $\mathcal{P}$. Hence, for any algorithm (adaptive or non-adaptive), consider the total value obtained when all players in $\mathcal{W}$ beat all players in $\mathcal{P}$. Clearly, in any match with a player in $\mathcal{W}$, an unpopular player must win the match, contributing value 0. Hence, the total value obtainable is at most $n_p + n_u - 1$, with is an upper bound on the \vnar.
\end{proof}

If $\vnar^*$ denotes the optimal \vnar of a given instance, we say an algorithm is an $f$-additive approximation if the value obtained for any outcome is at least $\vnar^*-f$. Similarly, an algorithm is an $f$-multiplicative approximation if the value obtained for any outcome is at least $f \cdot \vnar^*$.

\paragraph{Our Contribution.} We first note that a non-adaptive seeding with \vnar $n_p + \lceil \frac{n_u}{n_p}\rceil -1$ is easily obtained as follows. Let $a_p$ be a popular player that\hide{ deterministically} beats the maximum number of unpopular players in $\mathcal{U}$, say $k$. Clearly, $k \ge \lceil n_u / n_p \rceil$. In the seeding, we first put $a_p$, followed by the $k$ unpopular players\hide{ deterministically} beaten by $a_p$. This is followed by the remaining popular players, and finally, all the other players. Under this seeding, the first $n_p+k-1$ matches are all won by popular players, yielding a value of at least $n_p + \frac{n_u}{n_p}-1$. From~\Cref{prop:optimalvnar}, if $n_u \le c \cdot n_p$, a multiplicative $\max \{1/n_p, 1/(c+1)\}$-approximate non-adaptive seeding is thus readily obtainable.

We start with studying non-adaptive algorithms that output a \emph{fixed seeding} without seeing any probabilistic outcomes. Even in this simple setting of binary values, we show strong negative results. In fact, our negative results hold when the optimal \vnar is $n_p + n_u - 1$, the maximum possible. 
These results imply that computing a fixed seeding that obtains a large value with probability 1 is hard, even when the optimal \vnar is actually the largest possible value, $n_p+n_u - 1$.

Our first hardness result is the following.
\begin{restatable}{theorem}{thmadditivepnp}
    \label{thm:additive approximation NP hardness}
    Unless $\textup{\textsf{P=NP}}$, there is no non-adaptive polynomial-time algorithm that obtains a seeding that additively approximates the optimal \vnar with a factor of 
    $(n_p)^{1-\varepsilon}/(1+\alpha)$
    for any fixed $0 < \varepsilon \le 1$ and $\alpha>0$.
\end{restatable}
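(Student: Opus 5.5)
The plan is a polynomial-time gap reduction from \hampath, followed by amplification through many disjoint copies. The goal is that every produced instance has optimal \vnar equal to $n_p+n_u-1$ when the source graph has a Hamiltonian path, and that otherwise every fixed seeding loses at least $k$ value in some realization. Here $k$ is chosen so that $k > (n_p)^{1-\varepsilon}/(1+\alpha)$. An algorithm achieving the stated additive guarantee would then separate the two cases and decide \hampath, giving the theorem.

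\textbf{Base gadget (gap $1$).} Given a graph $G$ on $n$ vertices, I create one popular player $a_v$ per vertex $v$. Matches between popular players are always worth $1$, so these edges may be arbitrary. For each edge $uv$ of $G$ I would attach an unpopular player $b_{uv}$ whose matches are arranged so that it is beaten deterministically by a popular player only if the current champion is certainly known. The idea is that uncertain edges make the identity of the popular champion ambiguous unless the seeding follows a path in which each challenger deterministically dominates the relevant predecessors, so that the champion is forced. Concretely, I want the following two properties.
\begin{itemize}
\item From a Hamiltonian path $v_1,\dots,v_n$ one reads off a seeding in which every unpopular player of $\mathcal{U}$ is faced only by a champion who beats it deterministically. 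This gives \vnar $n_p+n_u-1$, the bound of \Cref{prop:optimalvnar}.
\item In any seeding, if the popular players are not met along a Hamiltonian path of $G$, there is a step where the champion is ambiguous. At that step some unpopular player either wins or is faced by a popular player it can beat, so some realization loses at least one unit.
\end{itemize}
All players in $\mathcal{W}$ (if any) are placed last and are harmless.

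\textbf{Amplification.} I take $k$ disjoint copies $G_1,\dots,G_k$ of the gadget. All cross-copy popular–popular matches are made deterministic, and cross-copy popular–unpopular matches are made to favour the unpopular player. This prevents a champion from one copy from scoring against another copy's unpopular players. In the yes-case the copies are concatenated, joined by popular–popular matches, and the value is $n_p+n_u-1$. In the no-case I argue that the loss is at least $k$: restrict the seeding to each copy and show that the adversary's losing realization in copy $i$ can be chosen independently of the other copies. This is because the outcomes of uncertain edges inside different copies are independent choices of the adversary, and cross-copy matches never rescue value. We have $n_p=kn$. Choosing $k=n^{c}$ with $c>(1-\varepsilon)/\varepsilon$ (plus a constant absorbing $1+\alpha$) gives $(kn)^{1-\varepsilon}/(1+\alpha)<k$ for large $n$, which is polynomial since $\varepsilon$ and $\alpha$ are fixed.

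\textbf{Main obstacle.} I expect the hardest step to be the soundness of the base gadget: proving that a non-Hamiltonian ordering forces a loss in some realization. This requires care with seedings that interleave unpopular players, or that revisit after ambiguity is resolved by a later deterministic win. I would first try to make every popular–popular edge of $G$ uncertain and every non-edge deterministic in a fixed direction. I would then show by induction over the seeding that the set of possible current champions stays a singleton only while the visited popular players form a path in $G$. The additivity of the loss across copies is the second delicate point. I would handle it by a combined adversary argument that fixes, copy by copy, the outcomes that realize each copy's loss.
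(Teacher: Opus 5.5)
The step that fails is the amplification. In a \ctc tournament the losses of disjoint copies do not add up, because all copies share one champion line and the seeding may interleave them. Your justification (the adversary fixes each copy's bad outcomes independently, and cross-copy matches never rescue value) ignores that the designer controls this interleaving.

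Concretely, instantiate your scheme with the paper's own gap-$1$ gadget from \Cref{lem:vnar-nph-correctness}, so the unspecified base gadget is not the issue. Let every popular player of copy $j$ deterministically beat every popular player of copy $i$ for $j>i$; this is the natural way to make your yes-case concatenation work. Let unpopular players beat the popular players of other copies, as you propose. Take a no-instance $D$ covered by two vertex-disjoint directed paths $P_1,P_2$ (e.g.\ $D$ is just two disjoint paths), and use the following seeding:
\begin{enumerate}
\item For $i=1,\dots,k$ in turn, play the copy-$i$ players of $P_1$ backwards, each popular player immediately followed by its unpopular copy.
\item Introduce $y^1_u$, the copy-$1$ unpopular player of the last vertex $y$ of $P_2$. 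It beats the current copy-$k$ champion, and is followed by $y^1_p$.
\item For $i=1,\dots,k$ in turn, play the remaining copy-$i$ players of $P_2$ backwards in the same way.
\end{enumerate}
Every match is deterministic, and only the match won by $y^1_u$ has value $0$. So this seeding has \vnar $n_p+n_u-2$ for every $k$.

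The same interleaving, with one jump per additional path, shows that some seeding always loses at most the path-cover number of $D$ minus one. That is below $\hat n=|V(D)|$. You need a loss of at least $k=\hat n^{c}$ with $c>(1-\varepsilon)/\varepsilon$, and this exceeds $\hat n$ whenever $\varepsilon<1/2$.

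Making cross-copy popular--unpopular matches uncertain defeats this particular seeding. But additivity is then a statement about one global worst-case realization, where the champion entering a copy depends on earlier outcomes. It would need a global accounting argument like that of \Cref{clm:backbone-pop-lb}, which you do not give.

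The base gadget is also not actually constructed. Whether $b_{uv}$ is beaten deterministically is a fixed property of the strength graph; it cannot depend on whether the champion is known. Your tentative choice (edges of $G$ uncertain, non-edges deterministic) also inverts the invariant you want, since uncertain matches are exactly what make the champion ambiguous. The paper does the opposite: $D$-edges are deterministic and non-edges are uncertain.

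The paper avoids amplification altogether. It keeps the single-copy instance of \Cref{lem:vnar-nph-correctness} and reduces from the Bj\"orklund--Husfeldt--Khanna result (\Cref{prop:source-hardness}): finding a path of length $(\hat n-1)/\hat n^{1-\varepsilon}$ is hard even when a Hamiltonian path is promised. It argues in the search direction. A seeding with \vnar $n_p-1+\gamma$ has at least $\gamma-1$ popular players whose unpopular copies are their children (\Cref{clm:backbone-pop-lb}), and at most $n_p-\gamma$ unpopular backbone players (\Cref{clm:backbone-unpop-ub}). By pigeonhole, about $(\gamma-1)/(n_p-\gamma)$ of these popular players occur consecutively with no unpopular backbone player between them, and \Cref{clm:backbone-pop-det-path} stitches them into a path of $D$. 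The polynomial gap is thus inherited from long-path inapproximability, whose proof already contains the path-specific amplification that a disjoint union of tournament copies cannot supply.
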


Under the Exponential-time Hypothesis (ETH) \cite{DBLP:journals/jcss/ImpagliazzoP01}, we have a stronger statement of hardness if $n_p^{\varepsilon} > (f(n_p)/2){\log^2n_p}$. 

\begin{restatable}{theorem}{thmc}\label{thm:VNAR is ETH-hard}
\label{thm:additive approximation ETH hardness}
\begin{sloppypar}
    Assuming the \textup{ETH}, there is no non-adaptive polynomial-time algorithm that obtains a seeding that additively approximates the optimal \vnar with a factor
    $\frac{n_p}{f(n_p)\log^2 n_p}$,
    for any polynomial-time computable and non-decreasing function $f \in \omega(1)$.
\end{sloppypar}
\end{restatable}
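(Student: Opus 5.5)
The plan is a reduction from an ETH-hard gap problem. The instance will be built so that the optimal \vnar is $n_p+n_u-1$ in both cases, and the two cases are separated by a large additive gap in the best value any \emph{fixed} seeding can guarantee. I expect this to reuse the reduction behind \Cref{thm:additive approximation NP hardness}. There, a \textsf{3-SAT} instance (or an equivalent constraint problem) on $N$ variables is encoded with gadgets, and gadgets are repeated to amplify the gap. The popular players come in blocks. Each unpopular player in $\mathcal{U}$ is beaten deterministically by some popular player. Uncertain edges make it impossible for a seeding to place many unpopular players safely behind a popular champion unless the seeding encodes a satisfying assignment.

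In the NP-hardness proof, the amplification blows the instance up to size $\mathrm{poly}(N)$. The blow-up is chosen so that a gap of roughly $n_p/\mathrm{poly}$ becomes $n_p^{1-\varepsilon}$. For the ETH version, I would instead start from a near-linear-size hard instance. By the Sparsification Lemma, \textsf{3-SAT} with $N$ variables and $O(N)$ clauses has no $2^{o(N)}$ algorithm. The reduction is designed so that the number of popular players is $n_p = 2^{\Theta(N/g(N))}$ for a suitably slowly growing $g$, with construction time $\mathrm{poly}(n_p)$. The idea is to group the variables into blocks of size $O(\log n_p)$. Each block is handled by enumerating its $2^{O(\log n_p)}=\mathrm{poly}(n_p)$ partial assignments as separate gadgets, in the style of "group-and-enumerate" ETH reductions.

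The key soundness claim I would prove has two directions. In the yes case, the adaptive-agnostic optimum \vnar is $n_p+n_u-1$; the seeding follows the satisfying assignment and every uncertain match involves only players whose losses cost nothing. In the no case, every fixed seeding loses at least $n_p/(\log^2 n_p)$ value in some outcome realization, because each violated block-constraint lets an adversarial outcome hand the champion seat to an unpopular player in $\mathcal{W}$. That player then absorbs a run of popular players. The $\log^2 n_p$ loss comes from two places: each of the $\Theta(n_p/\log n_p)$ blocks contributes $\Theta(1/\log n_p)$ fraction of violated checks.

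Next, suppose a polynomial-time algorithm achieved additive error $n_p/(f(n_p)\log^2 n_p)$ with $f\in\omega(1)$. Choosing the gap constant appropriately, it would distinguish the two cases. Since $f$ is non-decreasing and polytime computable, we may set $g$ in terms of $f$ so that $\mathrm{poly}(n_p)=2^{o(N)}$. This would decide \textsf{3-SAT} in subexponential time, contradicting ETH.

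The main obstacle is the soundness analysis. I must show that for a \emph{non-adaptive} seeding, a single adversarial realization of the uncertain matches simultaneously forces $\Omega(n_p/\log^2 n_p)$ lost wins, rather than only one loss per violated constraint. I would first try to argue this locally. Each inconsistency in a gadget lets an uncertain edge put a $\mathcal{W}$ player on the throne. That player then beats every popular player until a deterministic "reset" unpopular-beater appears. The gadget ordering ensures each such takeover wastes $\Theta(\log n_p)$ popular matches, and takeovers in distinct gadgets can be triggered independently.
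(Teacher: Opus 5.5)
\textbf{There is a genuine gap.} Your proposal is a template rather than a proof. No players, edges or gadgets are ever specified, and the soundness claim you flag as the main obstacle is exactly the content of the theorem; it is left unresolved. The parts you do sketch would also fail as stated.
(i) Losses caused by players of $\mathcal{W}$ seizing the throne cannot separate yes-instances from no-instances. The organizer chooses the seeding freely, and every seeding can put all of $\mathcal{W}$ at the very end. There they only play matches that are already excluded from the bound $n_p+n_u-1$. So any forced loss must come from players of $\mathcal{P}\cup\mathcal{U}$, which have to be introduced for their value to be collected.
(ii) Sparsified \textsf{3-SAT} carries no gap: a no-instance may admit an assignment violating a single clause. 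Claiming that many blocks each contribute a $\Theta(1/\log n_p)$ fraction of violated checks therefore needs a quasi-linear PCP, which you do not invoke. The alternative is a gadget in which one violation destroys $\Omega(n_p/\log^2 n_p)$ value in a single realization, which is precisely the obstacle you leave open.
(iii) The counting is inconsistent. Grouping $N$ variables into blocks of $\Theta(\log n_p)$ variables gives about $N/\log n_p=\Theta(g(N))$ blocks, not $\Theta(n_p/\log n_p)$, when $n_p=2^{\Theta(N/g(N))}$. So the $\log^2 n_p$ in your gap is never actually derived.
Finally, your opening sentence says the optimal \textsc{VnaR} equals $n_p+n_u-1$ in both cases, which contradicts there being a gap in the optimum.

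The missing idea is that no new reduction is needed. \Cref{thm:additive approximation NP hardness} is not proved by a \textsf{3-SAT} gadget construction with amplification. It is a search-to-search reduction from \textsc{Hamiltonian Path}, and it carries over to \textsf{ETH} verbatim. In that construction each vertex gets a popular $v_p$ and an unpopular copy $v_u$, with $v_p$ beating $v_u$ and $a_u$ beating $b_p$ for $a\neq b$. Arcs of $D$ become deterministic popular--popular edges, and all other pairs are uncertain. The optimum is then $2n_p-1$ whenever $D$ is Hamiltonian. From any seeding with \textsc{VnaR} $n_p-1+\gamma$ one extracts, in polynomial time, a path of length about $(\gamma-1)/(n_p-\gamma)-1$ in $D$:
\begin{itemize}
\item by \Cref{clm:backbone-pop-lb}, at least $\gamma-1$ popular players have their copy as a child;
\item by \Cref{clm:backbone-unpop-ub}, at most $n_p-\gamma$ unpopular players lie on the backbone;
\item pigeonhole then yields a long run of such popular players with no unpopular backbone player between them;
\item \Cref{clm:backbone-pop-det-path} stitches consecutive ones into a deterministic path.
\end{itemize}
One then plugs in the \textsf{ETH} form of the Bj\"orklund--Husfeldt--Khanna result: in digraphs guaranteed to have a Hamiltonian path, no polynomial-time algorithm finds a path of length $f(n)\log^2 n$. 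Take $\delta=1+(f(n_p)/2)\log^2 n_p$ and $\gamma=(1+n_p\delta)/(1+\delta)$. The extracted path then has length $\delta-1=(f(n_p)/2)\log^2 n_p$. The ruled-out additive error is $n_p-\gamma=(n_p-1)/(1+\delta)$, which exceeds $n_p/(f(n_p)\log^2 n_p)$ for large $n_p$. The $\log^2 n_p$ is inherited directly from the long-path hardness, not produced by any block accounting.
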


\begin{sloppypar}
Furthermore, we show that unless $\textup{\textsf{P=NP}}$, no non-adaptive polynomial-time algorithm can multiplicatively approximate the optimal \vnar by a factor better than $1/\sqrt{n_p}$. 
\end{sloppypar}

\begin{restatable}{theorem}{thma}
\label{thm:multiplicative-approximation-NP-hardness}
  
    Unless \textup{\textsf{P=NP}}, there is no non-adaptive polynomial-time algorithm that obtains a seeding that multiplicatively approximates the optimal \vnar with a factor
    $\frac{1}{n_p^{1-\varepsilon}}$ for any fixed $0.5 < \varepsilon \le 1$.
\end{restatable}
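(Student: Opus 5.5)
The plan is a gap-producing reduction from \hampath, tuned so that in the yes case the optimal \vnar is the maximum possible, $n_p+n_u-1$ (matching \Cref{prop:optimalvnar}), while in the no case every fixed seeding guarantees only about $n_p$ plus a negligible number of unpopular wins. Since the trivial seeding already achieves $n_p+\lceil n_u/n_p\rceil-1$, the gap can only be of order $n_p/(n_p+n_u)$. To reach $1/n_p^{1-\varepsilon}$ we need $n_u\approx n_p^{2-\varepsilon}$. This is consistent with $n_u\le n_p\cdot\max_v|\text{block}_v|$ and suggests blocks of size roughly $n_p^{1-\varepsilon}$ attached to each popular player. I expect the restriction $\varepsilon>1/2$ to come from this trade-off between block size and the number of popular players that the reduction needs.

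\textbf{Construction.} Given a graph $G$ on $n$ vertices, create one popular player $a_v$ per vertex $v$. Between $a_u$ and $a_v$, place a deterministic edge (in some fixed direction) if $uv\in E(G)$, and an uncertain edge otherwise. Attach to each $a_v$ a private block $B_v$ of $L$ unpopular players, deterministically beaten by $a_v$. Every other popular player $a_w$ has an uncertain edge to each member of $B_v$, so all of them lie in $\mathcal{U}$ and no player is in $\mathcal{W}$. The intended mechanism is as follows. A match between two popular players always yields value $1$. An unpopular challenger from $B_v$ yields guaranteed value only if the current champion is \emph{certainly} $a_v$, because otherwise the adversary may let it defeat the actual champion. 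In the yes case, we seed the popular players along a Hamiltonian path oriented so that each consecutive match is deterministic, and we insert $B_v$ right after $a_v$ becomes champion. Every match is then won by a popular player, giving $n_p+n_u-1$.

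\textbf{Soundness.} The main step is to show that in the no case every fixed seeding guarantees at most $n_p+o(n_u\, n_p^{\varepsilon-1})$. I would argue through the set of possible champions at each moment. Once an uncertain popular--popular match occurs, the adversary can keep the champion ambiguous. An unpopular player from $B_v$ facing an ambiguous champion set can then be made to win, after which subsequent popular challengers face an unpopular champion. The delicate point is that certainty may be \emph{re-established}, for instance by a popular player that deterministically beats every candidate champion. Each maximal stretch of certainty therefore corresponds to a path in the deterministic graph, and the number of blocks collected is bounded by a path-cover quantity rather than simply by ``no Hamiltonian path''. To convert this into a polynomial gap, I would start from a gap version, namely the inapproximability of longest path (or of path cover by few paths) obtained via the self-improvement/graph-powering argument of Karger--Motwani--Ramkumar, with $L$ chosen as a suitable power of $n$.

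\textbf{Main obstacle.} The hard part is controlling this adversary argument so that the no case collects only a $n_p^{-(1-\varepsilon)}$ fraction of the blocks. One needs both a gap source whose no case forbids many long deterministic paths, and a proof that ambiguity cannot be cheaply reset. If resets turn out to be problematic, I would first try making every popular--popular edge between different ``layers'' uncertain, so that a reset requires a deterministic dominator that the construction does not contain. The final calculation would then balance $n_p$ (including any gadget players) against $L$. That balancing should yield exactly the range $1/2<\varepsilon\le 1$.
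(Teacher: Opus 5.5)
Your proposal has a genuine gap. The soundness direction, which is the entire content of the theorem, is only sketched, and the hardness source you plan to use cannot supply the required gap. Karger--Motwani--Ramkumar-type self-improvement concerns undirected longest path. Under \textsf{P}$\neq$\textsf{NP} it rules out only constant factors and never gives an $n^{\Omega(1)}$ gap.

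Directedness is also intrinsic here. To reach value $n_p+n_u-1$, every match must be won by a popular player, so each $a_v$ must be the certain champion when $B_v$ is played. Hence each $a_v$ other than the initial champion must deterministically beat the preceding popular champion. The champions therefore trace a \emph{directed} Hamiltonian path of the deterministic popular--popular digraph. Orienting the edges of an undirected $G$ ``in some fixed direction'' already breaks your completeness claim, since a Hamiltonian path of $G$ need not be directed in that orientation.

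The paper instead starts from a simple digraph $D$ guaranteed to contain a Hamiltonian path and invokes Bj\"orklund--Husfeldt--Khanna (\Cref{prop:source-hardness}). It needs no statement about no-instances at all. It only needs a polynomial-time procedure that converts any seeding with \textsc{VnaR} at least $\ell d+n_p-1$, where $\ell=n_p^{\varepsilon}$, into a long path of $D$.

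That conversion is the missing idea. Your observation that stretches of certainty correspond to deterministic paths matches the paper's clean blocks (\Cref{lem:clean-block}). These are obtained by an exchange argument on the canonical worst outcome $A_\sigma$, in which every uncertain popular--unpopular match goes to the unpopular player. However, you treat resets as essentially free, which would only bound the collected blocks by a path-cover quantity. In this construction resets are expensive. An unpopular player is deterministically beaten only by its own popular copy, so once it is champion after its copy has played, it can be made to win every remaining match.

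Applying this to the event in which $w^i$ wins the block and then $v^{i,2}_u$ wins everything, the paper does two things:
\begin{itemize}
\item it truncates at the first low-scoring tagged block while retaining score at least $\ell d/2-n_p$ (\Cref{lem:score-first-low-block});
\item it shows every surviving tagged block contains at least $\ell/2$ popular players.
\end{itemize}
Hence there are at most about $2n_p/\ell$ resets. Averaging then yields a reset-free chunk with about $\ell^2/(4n_p)=n_p^{2\varepsilon-1}/4$ consecutive collecting players, which stitch into a path of polynomial length in $D$. This counting, not a block-size trade-off, is the source of $\varepsilon>1/2$.

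Finally, your choice $L\approx n_p^{1-\varepsilon}$ is a poor one. It puts the target $\mathrm{OPT}/n_p^{1-\varepsilon}\approx n_p+n_p^{\varepsilon}$ within $n_p^{\varepsilon}$ of the trivial value $n_p+L-1$. That leaves no room for the $\Theta(n_p)$ additive error terms (popular--popular wins, wins over unpopular backbone players) that such an analysis incurs. The paper takes $d=n_p^2$ copies per player so that these terms are negligible and a factor $n_p^{1-\varepsilon}$ translates into roughly $n_p^{\varepsilon}$ collected blocks.
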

Let \opt denote the optimal \vnar of an instance. We note that \Cref{thm:additive approximation NP hardness} rules out a polynomial-time algorithm that obtains a seeding of \vnar at least $\opt- (n_p)^{1-a}/(1+\alpha)$, where $0< a\leq 1$, while \Cref{thm:multiplicative-approximation-NP-hardness} rules out one with \vnar at least $\opt/(n_p)^{1-b}$, where $0.5 < b \leq 1$. 
Thus, when $n_p$ is sufficiently large, the latter gives a stronger inapproximability guarantee than the former.

Given the negative results for non-adaptive algorithms, we then study adaptive algorithms that can obtain, by deferring the decision about who is introduced next, a large value with probability 1. Thus such an algorithm constructs a seeding adaptively, depending on the outcome of previous matches.
For adaptive algorithms, we are able to show an additive approximation to the optimal \vnar that depends on the \emph{chromatic number} of a certain graph.\footnote{ The \emph{chromatic number} of a graph is the minimum number of colors required to obtain a proper coloring of the graph, i.e., an assignment of colors to the vertices of such that no two adjacent vertices receive the same color.} 
In particular, let $G_p=(V_p,E_p)$ be the subgraph of the strength graph such that $V_p$ is the set of popular players and $E_p$ is the set of uncertain edges incident on $V_p$. 

\begin{restatable}{theorem}{thmb} \label{thm:chromatic}
\label{thm:pp-prob-edges-chrom-number}
Given a proper coloring of $G_p$ with $k$ colors, there is a polynomial-time adaptive algorithm that obtains $\vnar$ at least $(n_p+n_u-1)-(k-1)$.
\end{restatable}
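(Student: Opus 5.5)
The key observation is that a match between two popular players always contributes value $1$, whatever its outcome. So value is lost only in matches where an unpopular player is the challenger (or the champion) and wins. The plan is to build an adaptive seeding with the following invariant. The current champion is always popular until the popular players are exhausted. An unpopular player $u\in\mathcal{U}$ is introduced only when the (now observed) current champion beats $u$ deterministically. Every such match then has value $1$. Adaptivity is exactly what makes this possible: before choosing the next challenger we know who the champion is. The goal is to show that at most $k-1$ players of $\mathcal{U}$ are never cleared this way. Those players, together with $\mathcal{W}$, are dumped at the end; as in \Cref{prop:optimalvnar}, this yields at least $(n_p+n_u-1)-(k-1)$.

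Let $C_1,\dots,C_k$ be the color classes of the given coloring of $G_p$. Inside a class, every edge between popular players is deterministic, so each $C_i$ induces a deterministic tournament. Every tournament has a Hamiltonian path $v_1\to v_2\to\dots\to v_m$ (computable in polynomial time), where $v_j$ beats $v_{j+1}$. Seeding the class in the reverse order $v_m,v_{m-1},\dots,v_1$ makes each newcomer beat the current champion deterministically. Hence every player of the class becomes champion in turn.

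While a player is champion, we greedily introduce every remaining $u\in\mathcal{U}$ that it beats, and only then bring in the next popular player. Within a class, then, every popular player gets its turn and clears all of its remaining prey. We process the classes $C_1,\dots,C_k$ in this way.

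The only uncertainty arises at the $k-1$ transitions between classes. There, the first newcomer $x$ from $C_{i+1}$ plays an uncertain match against the champion from $C_i$. If $x$ loses, $x$ never becomes champion.

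I would first try to make the newcomer at a transition one whose loss is cheap. I also need to resume the Hamiltonian chain of $C_{i+1}$ correctly regardless of whether $x$ or the old champion won. To handle the latter, I would introduce the rest of $C_{i+1}$ adaptively, choosing at each step a remaining class member that beats the current champion when one exists. The intended accounting charges at most one uncleared $\mathcal{U}$-player to each transition, for a total loss of at most $k-1$.

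The main obstacle is this charging step. A failed newcomer $x$ may be the unique popular beater of several unpopular players, so a naive count loses more than one unit per transition. To get around this, I would try two things. First, pick $x$ as the sink $v_m$ of the class path and argue, perhaps after re-choosing the path, that all of $v_m$'s prey can be made to be beaten also by a later champion. Second, failing that, restructure the transition so that the uncertain popular--popular match happens when the outcome can only affect which of two already-active champions continues. Then at most the one unpopular player being held at the boundary is lost.
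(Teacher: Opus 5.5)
There is a genuine gap, and it is exactly the charging step you flag yourself; neither remedy closes it.

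The first remedy cannot work in general. Take popular $a,b$ joined by an uncertain edge, so $k=2$ with singleton classes. Let $a$ deterministically beat $x_1,\dots,x_r$, let $b$ deterministically beat $y_1,\dots,y_r$, and make all other edges uncertain. There is no path to re-choose, no later champion beats $b$'s prey, and the instance is symmetric in $a,b$, so class order is irrelevant. Your invariant says an unpopular player only faces a champion who deterministically beats it. Under it, once $b$ plays $a$ directly and loses, no $y_j$ can ever be cleared. The adversary then holds the value to $r+1<2r=(n_p+n_u-1)-(k-1)$ for $r\ge 2$. The second remedy describes the end state you want but not a procedure. In particular, it never says how $x$'s prey get cleared if $x$ loses, which is the whole difficulty.

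The missing idea is to drop that invariant at transitions and use adaptivity to probe the old champion with the newcomer's prey \emph{before} the newcomer enters. This is what the paper's algorithm does. Let $c$ be the current (popular) champion and $x$ the next member of $C_{i+1}$, starting from the sink of its Hamiltonian path. Introduce the not-yet-played unpopular players deterministically beaten by $x$ one at a time against $c$. Each such match is either won by $c$, which gives value $1$ and clears that player, or is an upset.

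At the first upset, introduce $x$ at once. It deterministically beats the upsetting player, becomes champion, and clears the rest of its prey. The remainder of the path of $C_{i+1}$ then proceeds deterministically with no further loss.

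If no upset occurs, all of $x$'s prey are already cleared, and only then does $x$ play $c$. By your own key observation, this popular--popular match is worth $1$ either way. If $x$ wins, the rest of the path proceeds as above. If $c$ wins, nothing is lost, and you repeat with the next member of the path.

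Hence each of $C_2,\dots,C_k$ costs at most one value-$0$ match, namely the single upset, and $C_1$ costs none. Every player of $\mathcal{U}$ is eventually introduced, which gives the bound. The paper's separate case, where some prey deterministically beats $c$, is just an upset that is certain to happen.
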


We note that, in general the chromatic number is hard to approximate~\cite{KhannaLS00} but if the number of uncertain edges is small (say $q$), one can indeed compute the chromatic number of $G_p$ in time $O^*(4^q)$, where the $O^*$ notation hides polynomial factors~\cite{BjorklundHK09}.
Moreover, using a 2-approximation algorithm for Vertex Cover on $G_p$, we can obtain a seeding with \vnar at least $(n_p + n_u - 1) - (2\vcprob - 1)$ in polynomial time, where $\vcprob$ is the size of a minimum vertex cover in $G_p$. 
Similarly, we obtain polynomial-time algorithms when $G_p$ belongs to graph classes where an optimal coloring can be computed efficiently (e.g., perfect graphs).

By observing that $k \le n_p$, and combining with the deterministic seeding that obtains \vnar $n_p + \lceil n_u/n_p\rceil -1$, we obtain the following result, demonstrating the large gap between adaptive and non-adaptive algorithms for this problem.

\begin{sloppypar}
\begin{restatable}{theorem}{finaladaptive} \label{thm:finaladaptive}
There is a polynomial-time adaptive algorithm that obtains \vnar at least $\left(n_p + n_u - 1\right)/2$.
\end{restatable}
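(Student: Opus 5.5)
We obtain \Cref{thm:finaladaptive} by running whichever of two algorithms has the better guarantee. The first is the adaptive algorithm of \Cref{thm:pp-prob-edges-chrom-number}; the second is the non-adaptive seeding described before the theorem statements, with \vnar $n_p + \lceil n_u/n_p\rceil - 1$. Write $N = n_p + n_u - 1$. For the first algorithm we need a proper coloring of $G_p$ computable in polynomial time. We use the trivial one, giving each of the $n_p$ popular players its own color, so $k = n_p$. \Cref{thm:pp-prob-edges-chrom-number} then guarantees \vnar at least $N - (n_p - 1) = n_u$. The second guarantee is at least $n_p + \lceil n_u/n_p\rceil - 1 \ge n_p$, since $n_u \ge 0$ and the case $n_p = 0$ is trivial. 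If $n_u = 0$ the ceiling vanishes and the bound is exactly $n_p - 1 = N$, which is fine.

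Since $n_u$ and $n_p$ are known in advance, the algorithm compares the two guarantees and runs the seeding procedure with the larger one. Its \vnar is at least
\[
\max\{n_u,\; n_p + \lceil n_u/n_p\rceil - 1\} \ \ge\ \max\{n_u, n_p\} \ \ge\ \frac{n_u + n_p}{2} \ \ge\ \frac{N}{2}.
\]
Both procedures run in polynomial time, so the combined algorithm does too.

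There is one subtle point. The value $n_u$ from the first algorithm depends on \Cref{thm:pp-prob-edges-chrom-number} holding with the trivial coloring, including any edge cases. We must also check that $k - 1 = n_p - 1$ is the right loss even when $G_p$ has no uncertain edges. That case is harmless: the bound then only improves, because a single color suffices and the deterministic bound $N$ is recovered. Otherwise the argument is just this case comparison, and I expect no real obstacle beyond citing the earlier results correctly.
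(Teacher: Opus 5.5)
Your proof is correct and is essentially the paper's own: the paper splits into the cases $n_u \le n_p$ (use the non-adaptive seeding with value $n_p + \lceil n_u/n_p \rceil - 1$) and $n_u > n_p$ (run the adaptive algorithm of \Cref{thm:chromatic} with the trivial $n_p$-coloring, which gives $n_u$), and this is exactly your ``take the better of the two guarantees.'' One small slip: your displayed inequality $\max\{n_u,\, n_p+\lceil n_u/n_p\rceil-1\}\ge\max\{n_u,n_p\}$ is false when $n_u=0$, but your separate remark that the bound then equals $n_p+n_u-1$ covers that case.
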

\end{sloppypar}
\begin{proof}
    We consider two cases. If $n_u \le n_p$, then the non-adaptive seeding that obtains value $n_p + \lceil n_u / n_p \rceil - 1$ is a multiplicative 2-approximation. Else, we run the adaptive algorithm from~\Cref{thm:chromatic}, with the chromatic number upper bounded by $n_p$, i.e., each popular player is in its own color class. From the theorem, the adaptive algorithm gives \vnar $n_u$, which is a 2-approximation for the optimal \vnar $n_p + n_u - 1$. 
\end{proof}

Interestingly, we show that adaptivity is only necessary when there is uncertainty between popular and unpopular players. 
If we restrict the instance such that uncertain edges exist \emph{only} within the set of popular players and within the set of unpopular players (implying all matches between popular and unpopular players are deterministic), we can achieve the same bound as \Cref{thm:finaladaptive} without adaptivity.

\begin{restatable}{theorem}{chromaticrestricted} \label{thm:chromatic1}
 Suppose that for each uncertain edge in the strength graph, the end points are either both popular or both unpopular.
 There is a seeding with \vnar at least $(n_p+n_u-1)-(k-1)$, where $k$ is the chromatic number of $G_p$.
\end{restatable}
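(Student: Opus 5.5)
The plan is to build one fixed seeding out of $k$ consecutive blocks, one block per colour class of $G_p$. Since every uncertain edge joins two popular or two unpopular players, the edges of $G_p$ are exactly the uncertain popular--popular edges, and every popular--unpopular match is deterministic. Fix an optimal proper colouring $C_1,\dots,C_k$ of $G_p$.

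\textbf{Step 1: a deterministic chain inside each class.} Every match inside one colour class $C_i$ is deterministic, so the players of $C_i$ induce an ordinary (deterministic) tournament $T_i$. Assign each $u\in\mathcal{U}$ to one popular player that beats it, and hence to a class. Let $U_i$ be the unpopular players assigned to members of $C_i$.

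\begin{itemize}
\item By R\'edei's theorem, $T_i$ has a Hamiltonian path. Order $C_i$ as $v_1,\dots,v_m$ so that $v_{j+1}$ beats $v_j$ for every $j$.
\item In the block for $C_i$, list $v_1,\dots,v_m$ in this order. Immediately after each $v_j$, insert the players of $U_i$ assigned to $v_j$.
\end{itemize}

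If $v_1$ is the champion when the block's first unpopular player arrives, then from that point on every match in the block is deterministic and won by a popular player. The reason is that each $v_j$ dethrones $v_{j-1}$ and then deterministically beats its own assigned unpopular players. (This is essentially the deterministic argument of Bhaskar~\etal{} applied inside one class.)

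\textbf{Step 2: chaining the blocks.} Concatenate the blocks for $C_1,\dots,C_k$, and append all players of $\mathcal{W}$ at the end. Two facts control the chain:

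\begin{itemize}
\item A match between two popular players always has value $1$, however it turns out. So every popular player's entry contributes value $1$, except the initial champion, which plays no incoming match.
\item Unpopular players are the only source of loss. Such a loss happens when an unpopular player meets a champion that does not deterministically beat it.
\end{itemize}

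The aim is to show that each block boundary $C_i\to C_{i+1}$ costs at most one unit of value. That gives total value at least $(n_p-1)+n_u-(k-1)$, which is the claimed bound.

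\textbf{Step 3: the main obstacle.} At the start of block $i+1$, the incoming champion $c$ comes from an earlier class. Its matches against $v_1,v_2,\dots$ of $C_{i+1}$ may be uncertain, so $c$ might stay champion for several steps.

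\begin{itemize}
\item Planned fix: make the first player of each block deterministically beat every possible incoming champion. If that is impossible, place the block's unpopular players only after the class members that must have taken over the title.
\item To achieve this, I would order the classes and choose $v_1$ of each class using the deterministic relations between classes. I would also try to handle such a boundary by "sacrificing" a single unpopular player, or equivalently by placing one player of $U_{i+1}$ into a position where it is beaten no matter who the champion is. Charging this one sacrifice per boundary is what gives the $k-1$ loss.
\item Fallback: mimic the adaptive algorithm of \Cref{thm:chromatic}. Since popular--unpopular outcomes are deterministic, whenever that algorithm's choice depends on the current champion, the champion is determined up to a single colour class. Then check that the algorithm's decisions can be made oblivious to outcomes, at a cost of at most one unit per class change.
\end{itemize}

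Making this step fully rigorous is the part I expect to be hardest.
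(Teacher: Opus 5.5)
\textbf{Gap: Step 3 is not carried out.} Your observations in Steps 1--2 are correct, but Step 3 is the whole content of the theorem beyond the deterministic case, and it is only a plan. The seeding you actually specify does not achieve the bound.

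A proper colouring does not prevent the incoming champion $c$ from having uncertain edges to every member of $C_{i+1}$. In the realization where $c$ beats $v_1$, the unpopular players you insert right after $v_1$ face $c$. If they are beaten only by $v_1$, the first one takes the title and every later one plays an unpopular champion. So a single boundary can cost as many units as there are such players. In exactly this situation your planned fix is unavailable: no member of $C_{i+1}$ deterministically beats $c$, and none is certain to take the title.

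Moreover, a boundary that costs nothing may leave the champion undetermined among several players of earlier classes. So ``one unit per boundary'' also needs an invariant you never state. The sacrifice idea is the right one, but you never say when a suitable sacrificial player exists.

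\textbf{The missing ingredient.} The paper makes the assignment in Step 1 greedy rather than arbitrary. In the block of $C_i$, right after each $v_j$, play every not-yet-played unpopular player that $v_j$ beats. Since popular--unpopular edges are deterministic, every unpopular player still pending when block $i+1$ starts then beats every member of $C_1\cup\dots\cup C_i$.

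Now keep the invariant that the champion is always popular: either a player of an earlier class, or the most recently introduced member of the current class.
\begin{itemize}
\item Introduce members of $C_{i+1}$ that have nothing pending directly; a popular--popular match scores $1$ however it ends.
\item At the first member $y$ with a pending player $v$ that $y$ beats, put $v$ immediately before $y$. Then $v$ beats any earlier-class champion, and may or may not beat the previous member of $C_{i+1}$, so at most one match is lost.
\item Since $y$ deterministically beats both $v$ and that previous member, $y$ is champion with certainty, and the rest of the block is deterministic.
\end{itemize}

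This is what the paper's Step 4 does at the boundary between $f_i$ and $\ell_{i+1}$. Either it sacrifices some $v\in X$ that beats $f_i$ just before $\ell_{i+1}$, or, if all of $X$ is beaten by $f_i$, it feeds $X$ to $f_i$ first at no cost. Block $C_1$ costs nothing and each later block at most one unit, which gives the loss of $k-1$.
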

\setcounter{personalcountertwo}{\value{theorem}}
\setcounter{personalcounterthree}{\value{section}}

\subsection{Notation and Preliminaries}
\label{sec:notation-prelim}
Let $n = |N|$, where $N$ denotes the set of players in the input.
The \emph{strength graph} $G = (N,E)$ has a directed edge between every pair of players, and a weight in $[0,1]$ for each edge. 
We assume the match outcomes are independent. 

Recall that a seeding (usually denoted $\sigma$) is a permutation of the players, that determines the order of challengers in a \ctc tournament. The player $\sigma(1)$ is the initial champion, and $\sigma(2)$ is the first challenger. The winner of this match is challenged by player $\sigma(3)$, and so on. Hence,\hide{ in any \ctc tournament,} there are exactly $n-1$ matches. The \emph{value of a match} is determined by the popularity of the winner of the match, and the \emph{value of a seeding} is the sum of the values of the matches played. Note that certain matches may have uncertain outcomes, and hence the value of a seeding --- as well as the sequence of matches played ---may be probabilistic. 

\paragraph{Sub-seedings.} A seeding $\sigma_0$ is said to be a \emph{sub-seeding} of $\sigma$ if $\sigma_0$ is a contiguous sub-permutation of $\sigma$.
It is helpful to describe a sub-seeding using intervals on players.
That is, for two players $a, b \in N$, we use $\sigma[a,b]$ to denote the sub-seeding that includes all players between $a$ and $b$ in the permutation (including $a$ and $b$).
The sub-seedings $\sigma[a,b)$, $\sigma(a,b]$, $\sigma(a,b)$, exclude $b$, $a$, and both $a$ and $b$ respectively.
For a player $a \in N$, we use $\sigma[*,a)$ to denote the set of all players introduced until $a$, excluding $a$.
The sub-seedings $\sigma[a,*]$, $\sigma(a,*]$, and $\sigma[*,a]$ are defined analogously.

\paragraph{Arborescences and Backbones.} A \emph{caterpillar arborescence} (arborescence, in short) is a directed graph where removing all vertices with out-degree zero yields a directed path. 
A seeding $\sigma$ of the players induces several spanning arborescences among the players, depending on the outcomes of the probabilistic games. 
Note that if the seeding results only in deterministic games, then the spanning arborescence is unique. 
It is important to note that \emph{an arborescence implies a seeding as well as a sequence of match outcomes}, some of which may be probabilistic. This observation is used critically in our hardness proof.

For an arborescence, the \emph{backbone} refers to the path obtained after the removal of the leaves.
Note that the players in the backbone are \hide{precisely the ones who} those that win at least one game in the tournament.
Any player who loses a game upon introduction has out-degree zero in the arborescence and is called the \textit{child} of the player it is beaten by.

\noindent{\it Convention about backbone.} 
We note that an edge $u \rightarrow v$ on the backbone\hide{ of an arborescence} indicates the following: player $v$ was the champion when player $u$ was introduced, and $u$ subsequently defeated $v$. 
In all figures, we position player $u$ to the left of player $v$.
\begin{figure}
    \centering    
    \includegraphics[width=\linewidth]{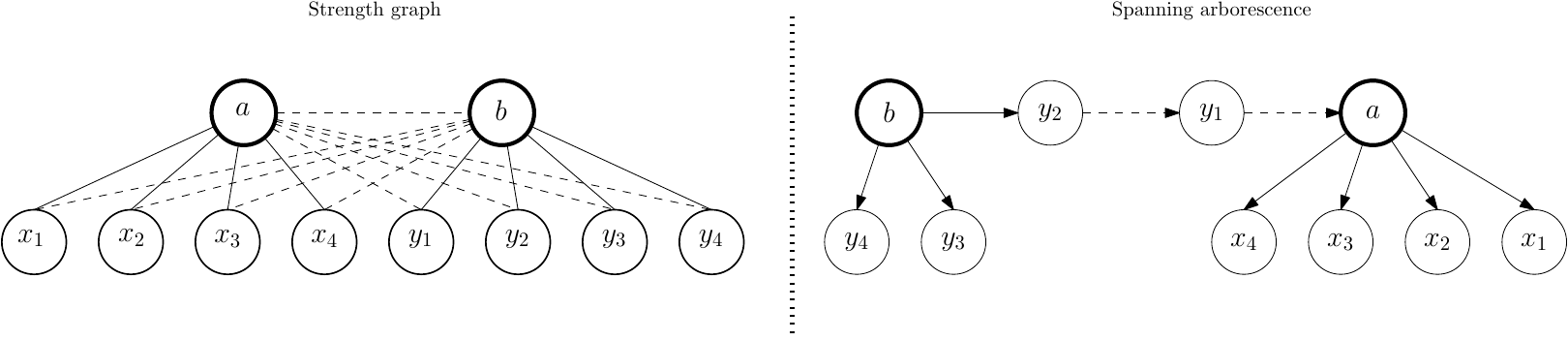}
    \caption{
    Illustration of a strength graph, and a spanning arborescence.
    The instance contains two popular players, $a$ and $b$, and eight unpopular players $x_1, \ldots, x_4$ and $y_1, \ldots, y_4$.
    Solid(dashed) edges are deterministic(uncertain).
    Edges between unpopular players are uncertain and omitted for clarity. The backbone of the arborescence is formed by the vertices $\{b, y_2, y_1, a\}$.
    The \vnar is 7, and is witnessed by the seeding $\langle a, x_1, x_2, x_3, x_4, y_1, y_2, b, y_3, y_4 \rangle$.
    }
    \label{fig:example-with-pu-prob-edges}
\end{figure}

\paragraph{Arborescence with the Minimum Value.} Note that for a given seeding, there can be many resulting arborescences as a result of the sequence of probabilistic outcomes. 
The one that yields the lowest value is of special interest to us. In fact, when we refer to the \vnar score of a seeding, we are actually referring to the value obtained by the least value arborescence. Thus, \vnar of the tournament $\tau$ is the maximum \vnar score among all possible seedings.

Mathematically, let $\sigma$ be a seeding and let $\mathcal{A}(\sigma)$ denote the set of all arborescences that can result from $\sigma$ under different realizations of the probabilistic games. 
We use $\operatorname{score}(A)$ to denote the value of an arborescence $A$; that is, the sum of the values of the matches played.
Moreover, the score of a sub-seeding $\sigma_0$ of $\sigma$ in the arborescence/outcome $A_\sigma$ is the sum of the popularity values of the matches played by players in $\sigma_0$ in the outcome $A_\sigma$.
The $\vnar$ score of $\sigma$ is defined as
$\vnar(\sigma) \;=\; \min_{A \in \mathcal{A}(\sigma)} \operatorname{score}(A).$ The $\vnar$ of the tournament is then defined as $\vnar (\tau)\;=\; \max_{\sigma \in \Sigma} \vnar(\sigma)$, where $\Sigma$ is the set of all possible seedings.

The following result about the {\it minimum-value arborescence} is a critical tool in our analysis.

\begin{lemma}
\label{lem:min-val-arborescence}
\label{backbone}
For any given seeding $\sigma$, a minimum-value arborescence can be computed in time $\bigoh(n^2)$. 
\end{lemma}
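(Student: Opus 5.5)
We compute the minimum-value arborescence by dynamic programming over the seeding, treating the outcomes of uncertain matches as chosen by an adversary. The key observation is that the state of the tournament after the first $i$ players of $\sigma$ have been introduced is fully described by the identity of the current champion. The future evolution depends only on this champion and on the remaining suffix $\sigma(\sigma(i),*]$. Moreover, the champion after step $i$ is always one of $\sigma(1),\dots,\sigma(i)$, so there are at most $n$ possible states at each step.

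Concretely, write $\sigma=\langle s_1,\dots,s_n\rangle$. For $1\le i\le n$ and $j\le i$, define $D[i][j]$ to be the minimum total value of the matches among $s_1,\dots,s_i$ over all outcome realizations in which $s_j$ is the champion after $s_i$ is introduced, with $D[i][j]=\infty$ if no such realization exists. Initialize $D[1][1]=0$.

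For the transition from $i$ to $i+1$, take each state $j$ with $D[i][j]<\infty$; the match is $s_j$ versus $s_{i+1}$.
\begin{itemize}
\item If $s_j$ can win (i.e., $p_{s_j s_{i+1}}>0$), update $D[i+1][j]$ with $D[i][j]+\pop(s_j)$.
\item If $s_{i+1}$ can win (i.e., $p_{s_j s_{i+1}}<1$), update $D[i+1][i+1]$ with $D[i][j]+\pop(s_{i+1})$.
\end{itemize}
The answer is $\min_j D[n][j]$. The argmin can be traced back through stored predecessor pointers to recover the actual sequence of outcomes, and hence the arborescence: each loser becomes a child of the winner, and the successive champions form the backbone.

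Correctness follows by induction on $i$. Each realization of outcomes for the first $i$ matches corresponds to exactly one path through the table. The value contributed by the next match depends only on the current champion and $s_{i+1}$, so optimal substructure holds. Independence of match outcomes, assumed in the preliminaries, ensures that every combination of individually possible outcomes is a realizable element of $\mathcal{A}(\sigma)$. Conversely, every arborescence in $\mathcal{A}(\sigma)$ arises from such a combination. Hence the minimum over table paths equals $\min_{A\in\mathcal{A}(\sigma)}\operatorname{score}(A)$.

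For the running time, there are $n$ layers with at most $n$ states each, and each state has $O(1)$ transitions. This gives $\bigoh(n^2)$ total time, and backtracking adds only $\bigoh(n)$.

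The only subtle point is justifying that the champion identity is a sufficient state. This holds because the value of a match is the popularity of its winner, and that winner is determined solely by the two players involved and the adversary's choice. No further history is needed, so the argument is routine.
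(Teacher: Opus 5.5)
Your proposal is correct and is essentially the paper's proof. The paper runs the same champion-indexed dynamic program: for each prefix it stores the set $W[i]$ of possible current champions and the minimum score $\vect[i](a)$ for each $a \in W[i]$. That is exactly your table $D[i][\cdot]$ restricted to its finite entries, with the same $\bigoh(n^2)$ count and recovery of the arborescence by backtracking. Your transition rule (keep $s_j$ if $p_{s_j s_{i+1}}>0$, move to $s_{i+1}$ if $p_{s_j s_{i+1}}<1$) is stated more cleanly than the paper's. In the paper's update for $\vect[i](v_i)$, the minimum should range over the previous champions that do not deterministically beat $v_i$, namely $W[i-1]\setminus T_{i-1}$, and should use $\vect[i-1]$; as printed it ranges over $W[i-1]\setminus S_{i-1}$ with $\vect[i]$.
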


\begin{proof}
    Let $N = [n]$ be the set of $n$ players.
    For each $i \in [n]$, we compute 
    \begin{itemize}[itemsep=0pt]
        \item the set $W[i]$ that is the set of possible winners after the introduction of the $i^{th}$ player in $\sigma$;
        \item a $|W[i]|$-dimensional vector $\vect[i]$ that contains in each dimension the minimum possible score of the tournament when the corresponding player in $W[i]$ is the champion after the introduction of the $i^{th}$ player in $\sigma$.
        
        For a player $a \in W[i]$, we use $\vect[i](a)$ to denote the entry in the dimension corresponding to $a$ in that vector.
    \end{itemize}
    Note that by definition, the minimum value in $|\vect[n]|$ is the \vnar of the seeding $\sigma$.
    The values can be computed as follows.
    Let $v_i$ denote the $i^{th}$ player introduced in the tournament.
    When $i=1$, we have $W[1] = \{v_1\}$ and $\vect[1](v_1) = 0$.
    When $i>1$, we do the following.
    Let $S_{i-1} \subseteq W[i-1]$ denote the set of players in $W[i-1]$ that are deterministically beaten by $v_i$; that is, $S_{i-1} = \{ q \in W[i-1] \mid (v_i, q) \in \mathcal{T}\}$.
    Let $T_{i-1} \subseteq W[i-1]$ denote the set of players in $W[i-1]$ that deterministically beat $v_i$; that is, $T_{i-1} = \{ q \in W[i-1] \mid (q, v_i) \in \mathcal{T}\}$.
    If $T_{i-1}=W[i-1]$, then we set $W[i] = W[i-1]$, and $\vect[i](v_j) = \vect[i-1](v_j)+\pop(v_j)$ for each $v_j \in W[i-1]$.
    Note that in this case, all players in $W[i-1]$ deterministically beat $v_i$, and thus form the set of possible winners.
    Else we are in the case where $T_{i-1} \subsetneq W[i-1]$, and we set $W[i] = W[i-1] \setminus S_{i-1} \cup \{v_i\}$.
    This is because the players in $S_{i-1}$ are deterministically beaten by $v_i$ and there is at least one player in $W[i-1]$ that does not deterministically beat $v_i$.
    We set $\vect[i](v_i) = \min_{v_j \in W[i-1] \setminus S[i-1]} \vect[i](v_j)+\pop(v_i)$; and for each $v_j \in W[i-1] \setminus S_{i-1}$, we set $\vect[i](v_j) = \vect[i-1](v_j)+\pop(v_j)$.

    The \vnar of this seeding is given by $\min(\vect[n])$, and an arborescence that witnesses this can be obtained by backtracking.
\end{proof}

\section{The Hardness of Maximizing \vnar}
\label{sec:hardness-vnar}

In the \hampath problem, we are given a directed graph and the goal is to decide if there is a path that visits each vertex exactly once (such a path is called a \textit{Hamiltonian path}).
This problem has been shown to be \nph even in simple digraphs ~\cite{DBLP:books/daglib/0022205}, where any two vertices have at most one edge between them\footnote{We note that Theorem 6.1.3 in \cite{DBLP:books/daglib/0022205} shows that the closely related \textsc{Hamiltonian Cycle} is hard, and a simple modification (removal of certain edges) gives us the desired result.}. We prove the hardness of maximizing \vnar via a reduction from \hampath.

Let $D=(V,\hat{E})$ be an instance of \hampath where $D$ is a simple directed graph on $\hat{n}$ vertices.
We construct an instance $\tourn$ with strength graph $G=(V_p \cup V_u, E)$ as follows. 
\begin{itemize}[itemsep=0pt]
    \item $V_p = \{ v_p \colon v \in V \}$ and all players in $V_p$ have popularity $1$. They form the set of popular players.
    Note that the number of popular players $n_p = \hat{n}$.
    For the sake of simplicity, we sometimes refer to popular players and the popular vertices interchangeably;
    \item $V_u = \{ v_u \colon v \in V \}$ and all players in $V_u$ have popularity $0$. They form the set of unpopular players. The player $v_u$ is called the \textit{unpopular copy} of $v_p$, and $v_p$ is called the \textit{popular copy} of $v_u$;
    \item For each $v \in V$, add $v_p \beats v_u$ to $E$ (that is, $v_p$ beats $v_u$);
    \item For each pair $\{a, b\} \subseteq V$, add $a_u \beats b_p$ to $E$;
    \item For each pair $\{a, b\} \subseteq V$, 
    \begin{enumerate}[itemsep=0pt,wide=10pt]
        \item if $a \beats b \in \hat{E}$, we add $a_p \beats b_p$ to $E$,
        \item else (there is no edge between $a$ and $b$ in $\hat{E}$, and) we add an uncertain edge between $a_p$ and $b_p$ to $E$;
    \end{enumerate}
    \item For any pair $\{a, b\} \subseteq V_p \cup V_u$ that has not been considered before, we add an uncertain edge between $a$ and $b$ to $E$.
\end{itemize}

The following lemma establishes the equivalence between \hampath and maximizing \vnar. 

\begin{lemma}
    \label{lem:vnar-nph-correctness}
    $D$ has a Hamiltonian path if and only if there is a seeding in $\tourn$ with $\vnar \ge 2n_p-1$.
\end{lemma}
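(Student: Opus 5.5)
The plan is to prove the two directions separately. First note the structure of $\tourn$: $n_p=n_u=\hat n$, every unpopular player $v_u$ is beaten deterministically by exactly one popular player, namely $v_p$, and $2n_p-1$ is the total number of matches. So a seeding has $\vnar\ge 2n_p-1$ if and only if, \emph{in every realization}, every match is won by a popular player.

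\textbf{($\Rightarrow$)} Let $v^1\beats v^2\beats\dots\beats v^{\hat n}$ be a Hamiltonian path in $D$. Consider the seeding
\[
\langle v^{\hat n}_p, v^{\hat n}_u, v^{\hat n-1}_p, v^{\hat n-1}_u,\dots, v^1_p, v^1_u\rangle .
\]
Every match in it is deterministic. Indeed, $v^{i}_p$ beats the champion $v^{i+1}_p$ because $v^i\beats v^{i+1}\in\hat E$, which gives a deterministic edge $v^i_p\beats v^{i+1}_p$. Then $v^i_p$ beats its own copy $v^i_u$. Hence there is a single arborescence, all $2\hat n-1$ matches are won by popular players, and the value is $2n_p-1$.

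\textbf{($\Leftarrow$)} Fix a seeding $\sigma$ with $\vnar(\sigma)\ge 2n_p-1$. In every realization every match is won by a popular player, so from the first match on the champion is always popular.

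First, every unpopular $v_u$ must lose deterministically, so the player it faces must be $v_p$. There is one exception: $v_u$ is the first player of $\sigma$ and $v_p$ is the second, which is still fine. In either case $v_p$ is champion at some point in every realization.

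Since a dethroned player never returns, in any realization the sequence of popular champions lists each popular player exactly once. It must therefore coincide with the order $u^1,\dots,u^{\hat n}$ in which the popular players appear in $\sigma$. In particular, each $u^{i+1}$ must dethrone $u^i$ upon introduction, since $u^i$ is necessarily the champion at that time.

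Next we show this dethroning is deterministic. Suppose the edge between $u^i$ and $u^{i+1}$ were uncertain. Take a realization in which $u^i$ wins that match. Then $u^{i+1}$ is never champion, contradicting the previous paragraph. The same argument shows that no popular player can be introduced and lose. Hence $u^{i+1}\beats u^i$ is a deterministic edge of $G$ among popular players. By construction this means $u^{i+1}\beats u^i\in\hat E$. Reversing the order, $u^{\hat n},\dots,u^1$ is a Hamiltonian path in $D$.

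The main obstacle is making the ``every realization'' argument airtight. The key choice is the adversarial realization: whenever a match is uncertain, resolve it against the desired outcome, and check that this forces some later unpopular player to face a champion other than its popular copy, or some match to be won by an unpopular player. I would formalise this using the possible-winner sets $W[i]$ from the proof of \Cref{lem:min-val-arborescence}, showing by induction that $|W[i]|=1$ and that $W[i]$ is popular for every $i\ge 2$. The exceptional start $\langle v_u, v_p,\dots\rangle$ also needs a separate check; it behaves exactly like a start with $v_p$ followed immediately by $v_u$.
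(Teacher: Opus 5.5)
Your proof is correct and takes essentially the paper's route. The forward direction uses the same deterministic seeding. For the converse, both arguments observe that every match must be won by a popular player, so each $v_u$ loses to $v_p$, and then take an adversarial realization in which the incumbent wins an uncertain backbone match; the paper then lets the later player's unpopular copy win, which is exactly why your ``$u^{i+1}$ is never champion'' gives a contradiction. Your explicit argument that the champion order coincides with the introduction order, and your handling of the $\langle v_u, v_p,\dots\rangle$ start, are slightly more careful than the paper's write-up, and the $W[i]$ formalisation you sketch is not needed.
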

\begin{proof}
For the forward direction, let $v^1 \beats \cdots \beats v^{n_p}$ be a Hamiltonian path in $D$. 
Consider a seeding corresponding to the following backbone.
\begin{figure}[H]
    \centering
    \includegraphics[width=0.5\linewidth]{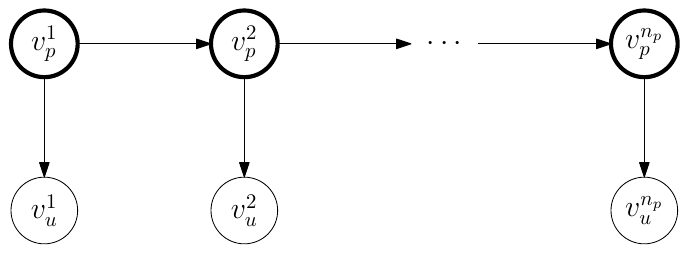}
    \caption{The backbone witnessed by the seeding $\langle v^{n_p}_p, v^{n_p}_u, \dots, v^1_p, v^1_u \rangle$}
    \label{fig:hp-tfp-backbone}
\end{figure}\noindent 
All the games are deterministic and thus the seeding has \vnar $2n_p-1$.

For the reverse direction, let $\sigma$ be a seeding with $\vnar \ge 2n_p-1$.
Consider the arborescence, denoted by $A_\sigma$, which gives the score $2n_p-1$. We note that among all the arborescences that are produced by $\sigma$, $A_\sigma$ has the smallest score. 

Since there are exactly $2n_p-1$ games, it follows that every game must have score 1. That is, it must be won by a popular player. This implies that all the popular players are in the backbone and every unpopular player must lose to its popular copy. This implies that every unpopular player is a leaf of its popular copy. Refer to \Cref{fig:hp-tfp-backbone} for an illustration. 

Next we will argue that the backbone must actually be a deterministic path for the \vnar to be at least $2n_p-1$. Consider the first two consecutive popular players, denoted by $v_p$ and $x_p$, in the backbone such that they share an uncertain edge and $x_p$ is introduced before $v_p$ in the seeding. Due to the assumption on the edge $(v_p, x_p)$, it follows that the backbone consists of a directed path from $x_p$ to the first player in the backbone. Additionally, due to the aforementioned property of the unpopular players being leaves of backbone vertices, it follows that in the seeding $\sigma$, player $v_p$ is followed by its unpopular copy $v_u$. 

Consider the probabilistic event in which $x_p$ beats $v_p$ and then is beaten by the unpopular player $v_u$ who then\hide{ is introduced next in the seeding. Following that $v_u$} wins all the remaining games. This results in an arborescence $A'_\sigma$ which is identical to $A_\sigma$ up to the introduction of $v_p$ but beyond that game gives value 0. Note that $A'_\sigma$ has score at least 1 less than $A_\sigma$. This is due to the game between $v_u$
and $x_p$ which gives value 0. But this is a contradiction to the assumption of $A_\sigma$. Therefore, we can conclude that the backbone cannot have a single uncertain edge and hence it must be a deterministic path on the $n_p$ popular players, implying that it is a Hamiltonian path in $D$. 
\end{proof}

In the following sections, we show that the above reduction establishes the hardness of finding even approximately good solutions.
Our hardness results rely on the following result by Bj{\"o}rklund, Husfeldt, and Khanna \cite[Theorem 4]{DBLP:conf/icalp/BjorklundHK04}, on finding long paths.

\begin{proposition}[\cite{DBLP:conf/icalp/BjorklundHK04}]
\label{prop:source-hardness}
    Unless \textup{\textsf{P=NP}}, there is no deterministic, polynomial-time algorithm that finds a path of length $(\hat{n}-1)/\hat{n}^{1-\varepsilon}$ for any fixed $0 < \varepsilon \le 1$ even in simple\footnote{Though not explicitly stated so, one can observe that the given reductions involve only simple directed graphs.} directed graphs on $\hat{n}$ vertices where a Hamiltonian path is guaranteed to exist.
\end{proposition}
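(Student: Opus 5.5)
The plan is a self-reducibility and amplification argument. First we isolate a hard promise problem. Then we use repeated graph composition to turn any algorithm that finds a path of length $(\hat n-1)/\hat n^{1-\varepsilon}$, i.e.\ roughly $\hat n^{\varepsilon}$, into one that finds a Hamiltonian path.

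\textbf{Base promise problem.} The starting point is that finding a Hamiltonian path in a simple digraph \emph{promised} to contain one is hard unless \textsf{P=NP}. Any finder $\mathcal{A}$ for the promise version decides \hampath: run $\mathcal{A}$ within its polynomial time bound and accept iff it outputs a valid Hamiltonian path, which can be checked directly. We may also guess the two endpoints $s,t$ of the path ($\hat n^2$ choices) and add pendant gadgets forcing every Hamiltonian path to run from $s$ to $t$. The result is a promise instance $(H,s,t)$ on $n$ vertices, still simple, in which a Hamiltonian $s$--$t$ path exists.

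\textbf{Composition.} Define $H^{(1)}=H$ and obtain $H^{(j+1)}$ from $H$ by replacing each vertex $v$ with a fresh copy $H^{(j)}_v$, with its own designated $s_v,t_v$. Each arc $u\beats v$ of $H$ becomes the single arc $t_u\beats s_v$. This keeps the graph simple and preserves the promise: a Hamiltonian path of $H$ lifted through Hamiltonian paths of the copies is Hamiltonian in $H^{(j+1)}$. The graph $H^{(k)}$ has $N=n^k$ vertices.

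Any path $P$ in $H^{(j+1)}$ splits into maximal segments inside copies. The sequence of copies it visits is a path in $H$, because consecutive copies are joined only by $t_u\beats s_v$ arcs and each copy is entered at most once, except possibly the first and last partial ones. Hence if $|P|=\ell$, either $P$ visits at least $\sqrt{\ell}$ copies, which yields a path of that length in $H$, or some internal segment has length at least $\sqrt{\ell}$ inside a copy of $H^{(j)}$, and we recurse. Unwinding over $k$ levels, a path of length $N^{\varepsilon}=n^{k\varepsilon}$ in $H^{(k)}$ yields, in polynomial time, a path of length about $n^{k\varepsilon/\text{(depth loss)}}$ in $H$. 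The accounting should be done multiplicatively: lengths multiply across levels, so the best level carries a factor of at least $(N^{\varepsilon})^{1/k}=n^{\varepsilon}$, up to the boundary segments.

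\textbf{Main obstacle.} The hard step is closing the loop: converting ``we can always find a path of length $n^{\varepsilon}$ in $H$'' into a Hamiltonian path. I would use the found path iteratively, as BHK do. Given a long path $Q$ in $H^{(k)}$, every copy that $Q$ traverses from $s_v$ to $t_v$ is traversed Hamiltonian-compatibly only if the segment is long. So I would instead contract the portion of $H$ certified by $Q$ into a single vertex, preserving the promise that a Hamiltonian $s$--$t$ path exists by re-guessing the attachment points (polynomially many branches). This shrinks the instance by a factor of $(1-n^{\varepsilon-1})$ per round, so $O(n^{1-\varepsilon}\log n)$ rounds suffice to reduce $H$ to constant size, where brute force finds the Hamiltonian path.

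The delicate points are the following. First, the contraction must keep a Hamiltonian path only when the contracted subpath is extendable. I would handle this by trying every vertex of $Q$ as a split point and verifying the final answer, so wrong branches are discarded. Second, the exponent $k$ must be a constant depending only on $\varepsilon$, so that $H^{(k)}$ stays polynomial. Third, the boundary segments must lose only constant factors per level. With these in place, a polynomial algorithm for $\hat n^{\varepsilon}$-long paths yields a polynomial algorithm for the promise Hamiltonian path problem, contradicting \textsf{P}$\neq$\textsf{NP}.
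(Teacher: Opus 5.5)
The paper does not prove this statement. It imports it from Bj\"{o}rklund, Husfeldt, and Khanna (their Theorem~4), adding only the remark that their reductions produce simple digraphs. So your argument has to stand on its own, and it has a genuine gap.

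First, the composition step gains nothing at a polynomial ratio. In $H^{(k)}$ the optimum is $N-1=n^k-1$, while the finder guarantees only $N^{\varepsilon}=n^{k\varepsilon}$. Your unwinding, which is essentially correct, then returns a path of length about $n^{\varepsilon}$ in $H$. That is exactly what you get by running the finder on $H$ directly. Product self-improvement turns a ratio $\rho$ on $H^{(k)}$ into $\rho^{1/k}$ on $H$. Here $\rho=N^{1-\varepsilon}=n^{k(1-\varepsilon)}$, so $\rho^{1/k}=n^{1-\varepsilon}$ again. The trick only helps when the ratio does not grow with the instance size, i.e., for constant or subpolynomial ratios.

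So everything rests on the contraction step, and that step fails. A path $Q$ of length $n^{\varepsilon}$ returned by the finder need not lie on any Hamiltonian path of $H$; for example, it may use an arc that no Hamiltonian path uses. Contracting $Q$, or any sub-piece of it, can therefore destroy the promise. Worse, you cannot detect this. A non-Hamiltonian graph typically still has paths of length $n^{\varepsilon}$, so the finder may keep returning valid-looking output, and a wrong branch is exposed only at the very end. Branching over split points and ``verifying the final answer'' therefore compounds over the $\Theta(n^{1-\varepsilon}\log n)$ rounds into superpolynomially many branches. Self-reducibility needs a decision oracle, and a promise-search oracle for short paths does not provide one.

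The missing ingredient is a genuine gap reduction. From an NP-complete problem, you must build digraphs that are Hamiltonian on YES-instances but have \emph{no} path of length $N^{\varepsilon}$ on NO-instances. A finder then decides the problem simply by checking whether it returns a valid long path. This is what Bj\"{o}rklund, Husfeldt, and Khanna do. Roughly, they start from the NP-complete directed two-vertex-disjoint-paths problem and nest copies of the instance in a binary-tree fashion. YES-instances then yield Hamiltonian graphs. On NO-instances, a path can go deep into only one branch of the nesting, so it has length $O(km)$ in a graph with about $2^k m$ vertices, where $m$ is the base instance size.

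Your composition $H^{(k)}$ preserves Hamiltonicity on the YES side but has no corresponding NO-side property. That property is the real content of the theorem.
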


\subsection{Additive Hardness of Approximation}
\label{subsec:additive-hardness-approx}
Let $D=(V,\hat{E})$ be a simple directed graph on $\hat{n}$ vertices where a Hamiltonian path is guaranteed to exist.
We construct an instance $\tourn$ with strength graph $G=(V_p \cup V_u, E)$ on $n = \hat{n}+\hat{n}$ players following the same reduction that establishes \Cref{lem:vnar-nph-correctness}. 
Note that $n_p = n_u = \hat{n}$. 
Observe that in the reduced instance $\tourn$, it is straightforward to find a seeding with \vnar $n_p-1$ by first introducing the popular players in an arbitrary order, followed by the unpopular players in an arbitrary order.
Towards establishing the hardness of finding an approximately good seeding, we study the interesting case where the computed seeding has a score strictly greater than $n_{p}-1$.
Suppose that a polynomial-time algorithm computes a seeding $\sigma$ with \vnar $n_p-1+\gamma$ for some $\gamma > 0$.

The following lemmas on the structure of a minimum-value arborescence $A_\sigma$ (obtained by applying \Cref{lem:min-val-arborescence}) allow us to establish the inapproximability.
\begin{lemma}
    \label{clm:backbone-pop-lb}
    \label{clm:arborescence-pop-lb}
    There are at least $\gamma-1$ popular players on $A_\sigma$ that have their unpopular copies as a child. 
\end{lemma}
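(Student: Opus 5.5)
We count the value of the minimum-value arborescence $A_\sigma$ match by match, and charge each unit of value to a popular player. The score of $A_\sigma$ is $n_p-1+\gamma$, and every unit of value comes from a match won by a popular player. In the reduced instance $\tourn$, a popular player $v_p$ beats only $v_u$ among the unpopular players: the edge $a_u \beats b_p$ is deterministic for every pair $\{a,b\}$, and all other popular–unpopular edges are uncertain. Every match won by a popular player in $A_\sigma$ is therefore of one of three kinds:
\begin{itemize}[itemsep=0pt]
    \item a win over another popular player;
    \item a win over its own unpopular copy;
    \item a win over some other unpopular player, through an uncertain edge.
\end{itemize}
Wins of the first kind number at most $n_p-1$, since each such match eliminates a distinct popular player and at least one popular player is never eliminated by a popular one. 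Hence the other two kinds together account for at least $\gamma$ units of value.

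The main step is to show that the third kind contributes at most one unit, and the key is minimality of $A_\sigma$. Suppose a popular champion $x_p$ meets an unpopular challenger $w_u \neq x_u$ in $A_\sigma$. The edge between them is uncertain, so the alternative realization in which $w_u$ wins is also in $\mathcal{A}(\sigma)$. Since $A_\sigma$ has minimum score, if $x_p$ wins this match in $A_\sigma$, then the alternative outcome (and its continuation, chosen to minimize value) must yield no less value. I would argue this as in the proof of \Cref{lem:vnar-nph-correctness}. After $w_u$ wins, later challengers can be made to keep value low: any later popular player $b_p$ loses deterministically to $w_u$, since $w_u \beats b_p$. So once an unpopular player becomes champion, all subsequent popular challengers are beaten and contribute nothing as long as that champion persists.

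The delicate point is making this precise. A later unpopular challenger may face the current unpopular champion via an uncertain edge, and we let the champion win. The champion then stays unpopular, and every remaining match has value $0$. So in the minimum arborescence, the first time a popular champion faces an unpopular non-copy, the minimizer can switch to value $0$ for all remaining matches. Minimality then forces the following structure: either such a match is won by the unpopular player, after which no further value accrues, or the tail already has value zero anyway. Wins of the third kind are thus confined to at most one exceptional match, namely the final one, since otherwise the deviation strictly lowers the score. I expect this accounting, including the edge case at the very end of the seeding, to be the main obstacle. It is what produces the slack of $1$ in the bound $\gamma-1$.

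Combining the two parts, at least $\gamma-1$ units of value come from matches in which a popular player beats its own unpopular copy. Each such match makes $v_u$ a child of $v_p$ in $A_\sigma$, and distinct such matches involve distinct popular players because each $v_u$ appears only once. This gives at least $\gamma-1$ popular players whose unpopular copy is their child.
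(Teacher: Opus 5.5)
Your final step, ``each such match makes $v_u$ a child of $v_p$'', is false, and that is the gap. A child is a player of out-degree zero, i.e., one who never wins a game. But $v_p$ can also beat $v_u$ when $v_u$ is the reigning champion. For example, if $\sigma$ begins $\langle a_p, b_u, b_p, \dots\rangle$, then $b_u$ deterministically beats $a_p$, and $b_p$ deterministically beats $b_u$. That match earns value $1$, yet $b_u$ is on the backbone and is not a child of $b_p$. So your conclusion that at least $\gamma$ matches are won by a popular player against its own copy is correct, but it does not bound the number $q$ of popular players whose copy is a child. These non-child copy wins are the games of type $t_6$ in the paper, and they are the whole difficulty of the lemma.

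Meanwhile, your ``main step'' addresses something that cannot happen. As you note yourself, $a_u \beats b_p$ is deterministic for all $a \neq b$, so a popular player never beats a non-copy unpopular player, and your third kind of match is empty. Your supporting claim that an unpopular champion beats every later popular challenger also fails for that champion's own copy, which is the same phenomenon again. So the slack of $1$ does not come from where you locate it.

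What is missing is a charging argument for the $t_6$ games:
\begin{itemize}
\item A match in which $v_p$ beats the champion $v_u$ ends a maximal run of consecutive games won by unpopular players.
\item Every such run begins with a game in which a popular player loses to an unpopular one. The only possible exception is a single run starting with the very first game of the tournament, played between two unpopular players.
\item Each popular player loses at most once, so (popular--popular wins) $+$ (losses of popular players to unpopular ones) $\le n_p$. Your separate bound of $n_p-1$ on popular--popular wins is too weak here; you need this joint bound.
\item The score equals (popular--popular wins) $+\, q\, +$ (number of $t_6$ games). It is therefore at most $n_p + q$, apart from the possible exceptional initial run.
\item That exception is absorbed. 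Either the final champion is popular, so at most $n_p-1$ popular players lose. Or the final run of unpopular wins ends in no $t_6$ game, so there is one fewer $t_6$ game than there are runs.
\end{itemize}
Hence $n_p - 1 + \gamma \le n_p + q$, i.e., $q \ge \gamma - 1$. This is essentially the paper's accounting: it charges $t_6$ games to $t_7$/$t_8$ games and treats the unaccounted initial case separately. Note also that this is pure counting and needs no minimality of $A_\sigma$, on which your plan relied.
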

\begin{proof}
    Assume for contradiction that there are $q $ many players satisfying the property in the lemma statement such that $q< \gamma-1$. We denote the set of these (popular) players by $C_1$ and the rest of the popular players by $C_2$.
    The set of all games in $A_\sigma$ can be partitioned into the following types
    \begin{enumerate}[label=($t_\arabic*$).]
        \item a player in $C_1$ beats a player in $C_1$;
        \item a player in $C_1$ beats a player in $C_2$;
        \item a player in $C_2$ beats a player in $C_1$;
        \item a player in $C_2$ beats a player in $C_2$;
        \item a player $v_p\in C_1$ beats its unpopular copy $v_u$;
        \item a player in $v_p \in C_2$ beats its unpopular copy $v_u$;
        \item an unpopular player $v_u$ beats a player $v'_p \in C_1$;
        \item an unpopular player beats a player in $C_2$;
        \item an unpopular player beats an unpopular player.
    \end{enumerate}
    For the sake of convenience, let $t_i$ also denote the number of games of type $i$.
    Let $s_i$ denote the total value of the games in $\sigma$ of type $t_i$. 
    By construction, note that $t_i = s_i$, for all $i\in [6]$ and $s_i=0$ for $7\leq i\leq 9$. Hence, 
    the total value of $A_\sigma$ is $s_1+\dots+s_6$. 
    Observe that $s_6 = t_6 \le t_7 + t_8 + t_9$ since in any game of type $t_6$ where a player $v_p \in C_2$ beats its unpopular copy $v_u$, it must be the case that $v_u$ wins some game and is not a leaf in the arborescence (otherwise, the popular player $v_p \in C_1$ and not $C_2$, by definition). 

    Moreover, using the definitions of $C_1$ and $C_2$, we derive the following inequalities. 
    In every game of the type $t_1$, $t_3$ and $t_7$, a player in $C_1$ loses. 
    Thus, it follows that $t_1+t_3+t_7 \le |C_1|$. 
    Since $s_i=t_i$ for each $i\in [6]$, and $q=|C_1|$, we have $s_1 + s_3 + t_7 \le q$. 
    Similarly, arguing for games involving players in $C_2$, we obtain $s_2+s_4+t_8 \le |C_2|$. Combining these two yields
    \[
    s_1 + s_2 + s_3 + s_4+ t_7 + t_8 \le  |C_1| + |C_2|. 
    \]
    Moreover, we also have $s_5 = q$, and hence 
    \[ 
    s_1 + \ldots + s_5 + t_7 + t_8 \le 
    |C_1| + |C_2| +s_5 = (q) + (n_p-q) + q = n_p+q
    \tag{I}\label{eq:partial-scores}  
    \]
    
    To achieve our desired result, we argue by ``accounting''. 
    Recall that $A_\sigma$ is the arborescence corresponding to some probabilistic event that has the least score for the seeding $\sigma$.
    We account any game of type $t_6$ to the unpopular player who loses it. 
    Consider a game $g$ of type $t_6$, played between $v_p$ and $v_u$.  
    Note that by the aforementioned property of $C_2$, the unpopular player $v_u$ is the champion right before it loses the game to its popular copy $v_p$. 
    
    Case I: Consider the last popular player (if it exists), denoted by $w_p$, introduced before the game $g$. 
    We note that $w_p$ must be losing to some unpopular player in a game $g'$ (of type $t_7$ or $t_8$), who is possibly distinct from $v_u$. 
    Depending on whether $w_p$ is in $C_1$ or $C_2$, we account the game $g$ to the game $g'$ (that $w_p$ loses). 
    For each game of type $t_6$, if there exists such a player $w_p$, then we can map it to a game of type $t_7$ or $t_8$. Consequently, the left-hand-side of \cref{eq:partial-scores} gives the total score of the tournament. 
    But that yields the contradiction that the score is at most $n_p+q < n_p-1 +\gamma $.

    Case II: We are left to argue on the case where no popular player (such as $w_p$ above) is introduced before a game of type $t_6$, say between $v_p$ and $v_u$. 
    Such a game is so far {\it unaccounted} for. 
    Observe that there could be only one such unaccounted game in the entire tournament, because if there are two or more $t_6$ games, then the popular player in the first $t_6$ game is introduced before the popular player of the second $t_6$ game. 
    Hence, the second (and all subsequent) game(s) are accounted for. 
    Suppose that the game $g$ is unaccounted for. 
    So, all the players introduced before $v_u$ are unpopular.

    The idea for this case is to construct a new seeding that is easier to analyze but has the same score. 
    Let $S_u$ be the set of (unpopular) players other than $v_u$ introduced before the game $g$. 
    Observe that $S_u$ is non-empty, otherwise $v_u$ is the first player introduced and the game $g$, where $v_p$ beats $v_u$ would be a game of type $t_5$ and not $t_6$, thus contradicting the definition of Case II. 
    Moreover, all the games that players in $S_u$ participate in involve only unpopular players.
    Consider the new seeding $\sigma'$ where the players in $S_u$ are introduced at the very end of the tournament and $v_u$ becomes the very first player introduced in the tournament (and $v_p$ is the second player).
    Since the score of $\sigma'$ is same as that of $\sigma$, and there is no unaccounted game of type $t_6$ (the game between $v_p$ and $v_u$ in $\sigma'$ is of type $t_5$), we have from the analysis of Case I that the score of $\sigma$ is at most $n_p-1+\gamma$, a contradiction.
\end{proof}

\begin{lemma}
    \label{clm:backbone-unpop-ub}
    \label{lem:backbone-unpop-ub}
    There are at most $n_p-\gamma$ unpopular players on the backbone of $A_\sigma$.
\end{lemma}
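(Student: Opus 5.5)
The bound follows by counting the games in $A_\sigma$ that are won by unpopular players. Two facts do the work: the backbone consists exactly of the players who win at least one game, and the score of $A_\sigma$ is known exactly.

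First I fix the score. By \Cref{lem:min-val-arborescence}, $A_\sigma$ is a minimum-value arborescence for $\sigma$. By definition, the \vnar of $\sigma$ is the score of such an arborescence. Hence $\operatorname{score}(A_\sigma) = n_p - 1 + \gamma$.

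Next I count the games. The instance $\tourn$ has $n = 2n_p$ players, so every arborescence has exactly $2n_p - 1$ games. Each game is worth $1$ if a popular player wins it and $0$ if an unpopular player wins it. So the number of games in $A_\sigma$ won by unpopular players is
\[
(2n_p - 1) - \operatorname{score}(A_\sigma) = (2n_p-1) - (n_p - 1 + \gamma) = n_p - \gamma .
\]

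Finally I relate this count to the backbone. By the convention in \Cref{sec:notation-prelim}, the players on the backbone are exactly those who win at least one game. So each unpopular player on the backbone wins at least one game. Each game has a single winner, so these games are distinct for distinct players. This gives an injection from the unpopular backbone players into the games won by unpopular players. Therefore there are at most $n_p - \gamma$ unpopular players on the backbone of $A_\sigma$.

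No step here is a real obstacle. The only point to check is that $A_\sigma$'s score equals the \vnar exactly rather than merely being at least it. This is what makes the count of games won by unpopular players exactly $n_p - \gamma$, and it holds because $A_\sigma$ is chosen as a minimum-value arborescence.
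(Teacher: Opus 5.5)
Your argument is correct and is the same counting argument the paper uses. The paper phrases it by contradiction: it assumes $q > n_p - \gamma$ unpopular backbone players, each winning a value-$0$ game, so the score is at most $2n_p - 1 - q < n_p - 1 + \gamma$. One small correction to your closing remark: you do not need $\operatorname{score}(A_\sigma)$ to equal the \vnar exactly, because $\operatorname{score}(A) \ge \vnar(\sigma) = n_p - 1 + \gamma$ holds for every $A \in \mathcal{A}(\sigma)$. So the bound holds for every arborescence of $\sigma$, not just the minimum-value one.
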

\begin{proof}
    Assume for contradiction that there are $q>n_p-\gamma$ players satisfying the property in the claim statement.
    Due to the games won by the $q$ players satisfying the claim statement, the value of $\sigma$ is at most $2n_p-1-q < 2n_p-1-(n_p-\gamma) = n_p-1+\gamma$, a contradiction.
\end{proof}

\begin{lemma}
    \label{clm:backbone-pop-det-path}
    \label{lem:backbone-pop-det-path}
    Suppose that $a_p$, $b_p$ are two popular players such that in $A_\sigma$, $a_p$ is introduced after $b_p$, and both $a_u$ is a child of $a_p$ and $b_u$ is a child of $b_p$.
    Suppose that only popular players are introduced in $A_\sigma(a_u, b_u)$, and let $X$ denote that set.
    Then, there is a deterministic path $P$ from $a$ to $b$ in $D$ with intermediate vertices corresponding to those from $X$ alone, that is, $P \subseteq X \cup \{a,b\}$.
    Moreover, such a path can be computed in polynomial time.
\end{lemma}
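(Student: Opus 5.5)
The plan is to reason about the possible-winner sets $W[\cdot]$ from the dynamic program in the proof of \Cref{lem:min-val-arborescence}, not only about the single outcome $A_\sigma$. In $\sigma$, the player $b_u$ is introduced while $b_p$ is champion and loses to it. Then the popular players of $X$ are introduced, and then $a_u$ is introduced while $a_p$ is champion and loses to it. Hence $a_p \in X$. In $\tourn$, an unpopular player $v_u$ is deterministically beaten only by its own popular copy $v_p$; all its other edges are uncertain, because the edges $v_u \beats w_p$ go the other way.

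\textbf{Step 1: singleton possible-winner sets.} I will show that the set of possible champions just after $b_u$ is introduced is $\{b_p\}$, and that the set just before $a_u$ is introduced is $\{a_p\}$. Suppose some other player $c$ were a possible champion when $a_u$ arrives. Then there is an outcome in which $a_u$ beats $c$, since their edge is uncertain or goes $a_u \beats c$, and that game has value $0$. I would compare this outcome with $A_\sigma$, aiming to show that its score is strictly smaller, which contradicts the minimality of $A_\sigma$.

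\textbf{Step 2: the exchange in Step 1 (main obstacle).} After $a_u$ wins, the two outcomes have different champions, so their continuations are not directly comparable. My first attempt is an exchange in the style of the proof of \Cref{lem:vnar-nph-correctness}: let $a_u$ then win every subsequent uncertain game, so that every later game involving $a_u$ has value $0$. The remaining contributions would then be charged using the counting behind \Cref{clm:arborescence-pop-lb} and \Cref{lem:backbone-unpop-ub}. If this fails, the fallback is to redefine $A_\sigma$ as the minimum arborescence chosen, among all minimum ones, to maximize the number of unpopular leaves. The conclusion then becomes a property of a suitably chosen witness, which suffices for the later use of the lemma. The argument for $b_p$ is symmetric: if the set after $b_u$ contained another player, $b_u$ could also be a possible champion.

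\textbf{Step 3: extracting the path.} With both sets singletons, the interval between $b_u$ and $a_u$ is a sub-tournament on the popular players of $X$ that starts from the unique champion $b_p$ and whose only possible final champion is $a_p$. I then build the auxiliary digraph $H$ on $X \cup \{b_p\}$ with an arc $x \to y$ whenever $x$ is introduced after $y$ and $x$ deterministically beats $y$. I claim $H$ has a path from $a_p$ to $b_p$. Suppose not, and let $R$ be the set of vertices from which $b_p$ is reachable in $H$; so $a_p \notin R$. Consider the outcome in which each challenger outside $R$ loses every uncertain game to a current champion in $R$. An induction along the seeding should then show that a player of $R$ is a possible champion at the end, contradicting Step 1. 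Finally, arcs of $H$ are exactly edges of $\hat{E}$ between the corresponding vertices of $D$, because deterministic edges among popular players come only from $\hat{E}$. So the path in $H$ maps to a path from $a$ to $b$ in $D$ with intermediate vertices from $X$. It can be found by BFS in polynomial time.
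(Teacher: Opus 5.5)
\textbf{Step 1 is false, and Step 3 routes its contradiction through it.} You claim that the possible-winner sets right after $b_u$ and right before $a_u$ are $\{b_p\}$ and $\{a_p\}$. Minimality of $A_\sigma$ does not give this, and in general it fails. An outcome in which some $c\neq a_p$ is champion when $a_u$ arrives may diverge from $A_\sigma$ well before $b_p$ and carry a higher-scoring prefix. Letting $a_u$ beat $c$ then need not undercut $A_\sigma$.

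The continuation, which you flag in Step~2, is not the problem. Only $a_p$ deterministically beats $a_u$, and $a_p$ is already introduced, so $a_u$ can win every later game and the tail scores $0$.

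A concrete counterexample: let $\hat E$ contain $z^1\to c$, $z^2\to z^1$, $z^3\to z^2$, $z^3\to d$, $z^3\to b$, $b\to d$, $a\to b$, and take $\sigma=\langle c_u,d_u,c_p,z^1_p,z^2_p,z^3_p,d_p,b_p,b_u,a_p,a_u,z^1_u,z^2_u,z^3_u\rangle$.
\begin{itemize}
\item If $d_u$ beats $c_u$, then $d_u$ absorbs $c_p,z^1_p,z^2_p,z^3_p$. Everything from $d_p$ to $a_u$ is then deterministic, for score $5$.
\item If $c_u$ beats $d_u$, popular players win the next six games, ending with $z^3_p$ beating $b_p$. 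Then $b_u$ takes over, for score $6$.
\end{itemize}
So $A_\sigma$ is the first outcome and satisfies every hypothesis of the lemma. Yet $b_u$ is a possible champion both right after $b_u$ and right before $a_u$. Prepending a deterministic chain $h^3_p,h^3_u,h^2_p,h^2_u,h^1_p,h^1_u$ with $h^1\to h^2\to h^3$ makes $\gamma>0$ without changing this. The possible-winner sets depend only on $\sigma$, so your fallback of re-choosing the minimum witness cannot help, and no counting argument can rescue Step~1.

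The repair is already inside your Step~3: drop the sets $W$ and compare directly with $A_\sigma$. Build the comparison outcome in three stages:
\begin{enumerate}
\item Copy $A_\sigma$ through the game at $b_u$, so that $b_p$ is champion.
\item Play the block with your rule that keeps the champion in $R$. Every game there is between popular players, so it has value $1$, exactly as in $A_\sigma$.
\item Let $a_u$ beat that champion and win all remaining games. This is possible because the champion is popular and different from $a_p$, and $a_u\to w_p$ is deterministic for $w\neq a$. (Your remark that $v_u$'s other edges are uncertain is slightly off on this point.)
\end{enumerate}
Compared with $A_\sigma$, this outcome loses the value-$1$ game in which $a_p$ beats $a_u$ and scores $0$ afterwards. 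Its score is therefore at least one less, contradicting minimality. This direct exchange is the paper's proof. There, your reachability set $R$ is replaced by the latest-introduced $v_p\in X\cup\{b_p\}$ with a deterministic path to $b$, together with the champion right after $v_p$ is introduced.
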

\begin{proof}
    Suppose that there is no such path from $a$ to $b$ in $D$.
    Let $v_p \in X \cup \{b_p\}$ be the popular player who is introduced the latest in $A_\sigma[b_p,a_u)$ such that there is a deterministic path from $v$ to $b$ with intermediate vertices (if at all any) from $X$.
    Note that such a $v_p$ exists since it could potentially be $b_p$ itself.
    Note that $v_p \ne a_p$ by our supposition.
    We define a new arborescence (event) $A'_{\sigma}$ depending on who is the champion, say $c_p$, right after the introduction of $v_p$ in $A_\sigma$.
    \begin{itemize}
        \item If $c_p = v_p$ or $c_p$ deterministically beats $v_p$, then $A'_\sigma$ is the event in which $c_p$ beats all players introduced in $\sigma(v_p, a_u)$, loses to $a_u$, and then $a_u$ beats all players in $\sigma(a_u,*]$.
        Note that all players introduced in $\sigma(v_p, a_u)$ do not deterministically beat $v_p$ (by definition of $v_p$).
        \item Else $c_p$ has an uncertain edge with $v_p$ and $A'_\sigma$ is the event in which $v_p$ beats $c_p$, beats all players introduced in $\sigma(v_p, a_u)$, loses to $a_u$, and then $a_u$ beats all players in $\sigma(a_u, *]$.
        Similar to the previous case, note that all players in introduced in $\sigma(v_p, a_u)$ do not deterministically beat $v_p$ (by definition of $v_p$).
    \end{itemize}
    
    In all the above cases, the event $A'_\sigma$ has score that is at least one less than that of $A_\sigma$ (due to the unpopular player $s_u$ winning), contradicting the definition of $A_\sigma$.
    
    We note that such a path can be computed in polynomial time as follows: find a path from $a$ to $b$ by applying breadth-first search starting at vertex $a$ in the graph $D$ restricted to the vertices in $X \cup \{a,b\}$.
\end{proof}

Now, we are ready to prove our result.
Applying the pigeonhole principle on \Cref{clm:backbone-pop-lb,clm:backbone-unpop-ub}, we obtain a sequence $S$ of $(\gamma - 1) / (n_p-\gamma)$ popular players such that there is no unpopular player on the backbone between any two players in $S$.
After repeated application of \Cref{clm:backbone-pop-det-path} (on adjacent players in $S$), we obtain a deterministic path of length $(\gamma-1)/(n_p-\gamma)-1$ in the graph $D$.
Setting $\gamma=(1+n_p\delta)/(1+\delta)$ where $\delta=1+(n_p-1)/n_p^{1-\varepsilon}$, we obtain a deterministic path of length $(n_p-1)/n_p^{1-\varepsilon}$ in $D$, that is \nph to find.
\shortv{
    UNCOMMENT THE FOLLOWING TO SEE WHY THIS CHOICE OF GAMMA WORKS
    \begin{align*}
        (\gamma - 1) / (n_p-\gamma) -1 &= \frac{\frac{1+n_p\delta}{1+\delta}-1}{n_p-\frac{1+n_p\delta}{1+\delta}} -1 \\
        &= \frac{n_p\delta-\delta}{n_p-1}-1  = \delta-1\\
        &= \frac{n_p-1}{n_p^{1-\varepsilon}}.
    \end{align*}
}
So unless $\textsf{P=NP}$, there is no polynomial-time algorithm that can find a seeding with score at least $n_p-1+\gamma = (2n_p-1)-(n_p-\gamma)$.

Recall that $D$ is assumed to contain a Hamiltonian path.
By \Cref{lem:vnar-nph-correctness}, we know that the reduced instance has $\vnar$ $2n_p-1$.
Thus, we rule out a polynomial-time additive approximation algorithm of factor at most $n_p-\gamma$. Moreover, we show the following.

\begin{lemma}
\label{clm:calcution}
For all $\alpha >0$ and $0<\varepsilon <1$ and sufficiently large $n_p\in \mathbb{N}$, we have $n_p^{1-\varepsilon}/(1+\alpha) \le n_p-\gamma$.
\end{lemma}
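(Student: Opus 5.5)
The plan is to compute $n_p-\gamma$ in closed form from the choice $\gamma=(1+n_p\delta)/(1+\delta)$, $\delta=1+(n_p-1)/n_p^{1-\varepsilon}$ made above, and then reduce the claimed inequality to a comparison between a linear term and a sublinear term in $n_p$. Throughout I write $n=n_p$ for brevity.

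\textbf{Step 1 (closed form for $n-\gamma$).} We have
\[
n-\gamma=\frac{n(1+\delta)-(1+n\delta)}{1+\delta}=\frac{n-1}{1+\delta}.
\]
Substituting $1+\delta = 2+(n-1)/n^{1-\varepsilon}$ and multiplying numerator and denominator by $n^{1-\varepsilon}$ gives
\[
n-\gamma=\frac{(n-1)\,n^{1-\varepsilon}}{2n^{1-\varepsilon}+n-1}.
\]

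\textbf{Step 2 (reduce the inequality).} Since all quantities are positive for $n\ge 2$, the target inequality $n^{1-\varepsilon}/(1+\alpha)\le n-\gamma$ is equivalent, after dividing by $n^{1-\varepsilon}$ and cross-multiplying, to
\[
2n^{1-\varepsilon}+n-1\le(1+\alpha)(n-1).
\]
This simplifies to
\[
2n^{1-\varepsilon}\le \alpha(n-1).
\]

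\textbf{Step 3 (asymptotics).} Because $\varepsilon>0$ is fixed, we have $n^{1-\varepsilon}/(n-1)\to 0$. So for every fixed $\alpha>0$ there is an $n_0$ such that the inequality holds for all $n\ge n_0$; explicitly, $n^{\varepsilon}\ge 4/\alpha$ together with $n\ge 2$ suffices, using $n-1\ge n/2$.

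There is no real obstacle here. The only care needed is the algebraic simplification in Step 1 and checking that all denominators are positive, so that the equivalences in Step 2 are valid.
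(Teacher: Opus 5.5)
Your proposal is correct and follows essentially the same route as the paper: you derive the same closed form $n_p-\gamma=(n_p-1)\,n_p^{1-\varepsilon}/(2n_p^{1-\varepsilon}+n_p-1)$, reduce the claim to $2n_p^{1-\varepsilon}\le\alpha(n_p-1)$, and use the same threshold $n_p^{\varepsilon}\ge 4/\alpha$. Your shortcut $n_p-\gamma=(n_p-1)/(1+\delta)$ and the explicit side condition $n_p\ge 2$ (needed for $n_p-1\ge n_p/2$, and left implicit in the paper) make the algebra slightly cleaner.
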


\begin{proof}
Replacing the value of $\gamma$ and $\delta$ from earlier, we get

\begin{align*}
    n_p-\gamma & =   n_p - \frac{1+n_p\delta}{1+\delta} =   n_p-\frac{1+n_p\left(\frac{n_p-1}{n_p^{1-\varepsilon}}+1\right)}{1+\left(\frac{n_p-1}{n_p^{1-\varepsilon}}+1\right)} \\
        & = n_p - \frac{n_p^{1-\varepsilon}+n_p(n_p-1)+n_p \cdot n_p^{1-\varepsilon}}{2n_p^{1-\varepsilon}+n_p-1}\\
        & = \frac{2n_p\cdot n_p^{1-\varepsilon}+n_p(n_p-1)-n_p^{1-\varepsilon}-n_p(n_p-1)-n_p\cdot n_p^{1-\varepsilon}}{2n_p^{1-\varepsilon}+n_p-1}\\   
        & = \frac{n_p\cdot n_p^{1-\varepsilon}-n_p^{1-\varepsilon}}{2n_p^{1-\varepsilon}+n_p-1}\\
        & = n_p^{1-\varepsilon} \cdot \frac{n_p - 1}{n_p -1 + 2n_p^{1-\varepsilon}}  = n_p^{1-\varepsilon} \cdot \frac{1}{1 + \frac{2n_p^{1-\varepsilon}}{n_p-1}} \, .\\
\end{align*}

\noindent Choosing $n_p$ so that $\alpha \ge \frac{2n_p^{1-\varepsilon}}{n_p-1}$, or $n_p \ge (4/\alpha)^{1/\varepsilon}$, we obtain the lemma.
\end{proof}

Overall, we have the following hardness of approximation.

\thmadditivepnp*

\begin{sloppypar}
Bj{\"o}rklund, Husfeldt, and Khanna \cite{DBLP:conf/icalp/BjorklundHK04} also showed that assuming the \textsf{ETH} \cite{DBLP:journals/jcss/ImpagliazzoP01} (a stronger conjecture than $\textsf{P}\ne\textsf{NP}$), no polynomial-time algorithm can find a path of length $f(n_p) \log^2 n_p$ for any polynomial-time computable and non-decreasing function $f \in \omega(1)$ even in simple directed graphs on $n_p$ vertices where a Hamiltonian path is guaranteed to exist.
Fix any such function $f$.
Observe that $f/2 \in \omega(1)$.
Setting $\gamma = (1+n_p\delta)/(1+\delta)$ where $\delta=1+(f(n_p)/2)\log^2 n_p$, we obtain a deterministic path of length $(f(n_p)/2) \log^2 n_p$ in $D$, that is hard to find.
Following the same arguments as before, we rule out approximation factors that are at most $n_p-\gamma = (n_p-1)/(2+(f(n_p)/2)\log^2 n_p)$.
In particular, we rule out factor $n_p/f(n_p)\log^2 n_p \le (n_p-1)/(2 \times (f(n_p)/2)\log^2n_p) \le n_p-\gamma$.
Observe that when $n_p^\varepsilon > (f(n_p)/2) \log^2 n_p$ the approximation factor ruled out by assuming \textsf{ETH} is higher than the approximation factor ruled out by assuming $\textsf{P}\ne\textsf{NP}$.
Thus, we get a stronger inapproximability result depending on the choice of $f$.
Overall, we have the following.
\end{sloppypar}

\thmc*

\subsection{Multiplicative Hardness of Approximation}

In this subsection, we establish a different hardness of approximation, via a slightly different reduction compared to the one in \Cref{subsec:additive-hardness-approx}.
Let $D=(V,\hat{E})$ be a simple directed graph on $\hat{n}$ vertices where a Hamiltonian path is guaranteed to exist.
Let $d = \hat{n}^2$.
We construct an instance $\tourn$ with strength graph $G=(V_p \cup V_u, E)$ on $n = \hat{n}+\hat{n}d$ players as follows.
\begin{itemize}[itemsep=0pt]
    \item $V_p = \{ v_p \colon v \in V \}$ and all players in $V_p$ have popularity $1$. They form the set of popular players; note that $n_p = \hat{n}$);
    \item $V_u = \{ v_{u,i} \colon ~v \in V, i \in [d] \}$ and all players in $V_u$ have popularity $0$. They form the set of unpopular players. For each $v \in V$, we call the players in $\{v_{u,i} \colon i \in [d]\}$ as the unpopular copies of $v$;
    \item For each $v \in V$ and $i \in [d]$, add $v_p \beats v_{u,i}$ to $E$ (that is, $v_p$ beats its unpopular copies);
    \item For each pair $\{a,b\} \subseteq V$, 
    \begin{enumerate}[itemsep=0pt,wide=10pt]
        \item if $a \beats b \in \hat{E}$, we add $a_p \beats b_p$ to $E$;
        \item else (there is no edge between $a$ and $b$ in $\hat{E}$, and) we add an uncertain edge between $a_p$ and $b_p$ to $E$.
    \end{enumerate}
    \item For any pair $\{a,b\} \subseteq V_p \cup V_u$ that has not been considered before, we add an uncertain edge between $a$ and $b$ to $E$.
\end{itemize}

Suppose that a polynomial-time algorithm computes a seeding $\sigma$ with \vnar at least $\ell d + n_p - 1$ for some $\ell = (n_p)^\varepsilon$ where $\varepsilon>0.5$.
Note that the \vnar of a seeding can be computed in polynomial time, \Cref{lem:min-val-arborescence}.
Let $A_{\sigma}$ denote the spanning arborescence of $\sigma$ that corresponds to the (sequence of) following probabilistic events:
\begin{itemize}[itemsep=0pt]
    \item whenever there is a probabilistic game between an unpopular and a popular player, the unpopular player wins.
\end{itemize}
By construction, we have the following equivalent description of the probabilistic events that generate the arborescence $A_\sigma$: a game between a popular player $v_p$ and an unpopular player $w_u$ in $A_\sigma$ is won by $v_p$ implies that $v=w$ (that is, the latter is a copy of the former), else it is won by $w_u$.
We begin with the following property. 

\begin{observation}\label{obs:children-are-copies}
    \label{lem:children-are-copies}
    Any popular player in the backbone of $A_\sigma$ who has an unpopular child must be the latter's copy.
\end{observation}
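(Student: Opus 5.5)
The observation follows directly from how $A_\sigma$ is defined, together with the construction of $\tourn$. I argue by contradiction. Suppose a popular player $v_p$ on the backbone of $A_\sigma$ has an unpopular child $w_{u,i}$ with $w \neq v$.

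First, recall what being a child means under the convention in \Cref{sec:notation-prelim}. The player $w_{u,i}$ was introduced while $v_p$ was the champion, and $w_{u,i}$ lost that game to $v_p$. So in $A_\sigma$ there is a game between the popular player $v_p$ and the unpopular player $w_{u,i}$, and $v_p$ wins it.

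Next, identify the type of the edge between $v_p$ and $w_{u,i}$ in the strength graph $G$. The construction adds only three kinds of deterministic edges:
\begin{itemize}[itemsep=0pt]
    \item $v_p \beats v_{u,j}$, from a popular player to its own copies;
    \item edges between two popular players, coming from $\hat{E}$;
    \item nothing else: every remaining pair, including every pair $\{v_p, w_{u,i}\}$ with $w \ne v$, receives an uncertain edge.
\end{itemize}
Hence the game between $v_p$ and $w_{u,i}$ is a probabilistic game between a popular and an unpopular player.

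Finally, invoke the defining rule of $A_\sigma$: in every probabilistic game between an unpopular and a popular player, the unpopular player wins. Under this rule $w_{u,i}$ beats $v_p$. This contradicts $w_{u,i}$ being a child of $v_p$. Therefore $w = v$, that is, $v_p$ is the popular copy of its unpopular child, which is exactly the equivalent description of $A_\sigma$ noted just before the observation.

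There is no real obstacle here. The only point needing care is to confirm from the bullet list of the construction that no deterministic edge joins $v_p$ to an unpopular copy of a different vertex $w$, in either direction. That check is immediate.
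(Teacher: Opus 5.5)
Your argument is correct and matches the paper's reasoning. The paper gives no separate proof; it states the observation right after its ``equivalent description'' of $A_\sigma$, which it justifies by construction. In this reduction the only deterministic edges that touch unpopular players are $v_p \beats v_{u,j}$. So a game between $v_p$ and a copy of a different vertex is uncertain, and in $A_\sigma$ the unpopular player wins it. One small slip: you announce ``three kinds of deterministic edges'' but list only two, and two is in fact the correct count.
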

Due to the definition of $\sigma$, it follows that it has a prefix whose \vnar is exactly $\ell d + n_p-1$. Our argument will focus on this prefix. Formally, we proceed as follows.

Let $\sigma_0$ be the minimal-prefix sub-seeding of $\sigma$, such that the partial arborescence $A_{\sigma_0}$ obtained from $A_{\sigma}$ by only considering the players in $\sigma_0$ has score exactly $\ell d + n_p-1$.
In what follows, we will establish a set of structural properties about the sub-seeding $\sigma_0$ and the corresponding arborescence $A_{\sigma_0}$. 
These properties will enable us to obtain a sufficiently long path in $D$ which establishes the inapproximability of \vnar.
For the ease of exposition and to aid modularity, we break the analysis into paragraphs.

\paragraph{Decomposing the seeding into blocks.}
We will begin by decomposing the seeding into blocks punctuated by popular players that satisfy certain specific properties. The underlying idea is that these blocks allow us to identify deterministic paths, which are either present in the backbone or must be inferred from the uncertain edges. In certain scenarios, there is a large score increase if certain probabilistic outcomes are different. This score increase, in turn, allows us to identify a deterministic path such that the score increase is proportional to the length of that path in $D$.

We begin our formal discussion by identifying a scenario where a deterministic path in $D$ can be easily identified. 
Let $v^1_u$ be the last-introduced (in $A_{\sigma_0}$) unpopular child of $v^1_p$ and $v^2_u$ be the first-introduced (in $A_{\sigma_0}$) unpopular child of $v^2_p$.

\begin{definition}\label{def:clean-pair}
   We define a {\em clean pair} to be a pair of popular players $(v^1_p, v^2_p)$ if they satisfy the following properties in $A_{\sigma_0}$:
    \begin{enumerate}[label=(\roman*)]
        \item\label{prop:block1} both are on the backbone of $A_{\sigma_0}$ with $v^2_p$ introduced after $v^1_p$; 
        \item\label{prop:one-U-child} each of them has at least one unpopular child;
        \item\label{prop:non-interleaving} there is no other popular player on the backbone between them who has an unpopular child; and
        \item\label{item:clean-block-all-pop} either the backbone between them contains only popular players, or there is a deterministic path from $v^2_p$ to $v^1_p$ in $D$ such that the intermediate vertices correspond to the popular players introduced between $v^1_u$ and $v^2_u$ in $\sigma_0$. 
    \end{enumerate}
    Moreover, the sub-seeding $\sigma_0[v^1_p, v^2_u]$ is called a {\it clean block}.
\end{definition}

The following lemma identifies a scenario when a deterministic path is easy to detect: the existence of a clean block. 
\begin{lemma}
\label{lem:clean-block}
Let $(v^{i,1}_p, v^{i,2}_p)$, denote a clean block. 
Let $X$ denote the set of popular players introduced in $\sigma_0(v^{i,1}_u, v^{i,2}_u)$. 
Then, there exists a deterministic path $P$ from $v^{i,2}$ to $v^{i,1}$ in $D$ such that the intermediate vertices correspond to those from $X$ alone, that is, $P \subseteq X \cup \{v^{i,1}_p, v^{i,2}_p\}$.
Moreover, such a path can be computed in polynomial time.
\end{lemma}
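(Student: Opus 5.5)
The plan is to split according to the two alternatives in property~\ref{item:clean-block-all-pop} of \Cref{def:clean-pair}. If the second alternative holds, the path exists by definition, and we only need to find it. Take the subgraph of $D$ induced on the vertices corresponding to $X \cup \{v^{i,1}_p, v^{i,2}_p\}$ and run breadth-first search from $v^{i,2}$. This finds a path to $v^{i,1}$ in polynomial time, exactly as at the end of the proof of \Cref{lem:backbone-pop-det-path}.

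The substantive case is the first alternative, where the backbone of $A_{\sigma_0}$ between $v^{i,1}_p$ and $v^{i,2}_p$ contains only popular players. Here I would mimic the argument of \Cref{lem:backbone-pop-det-path}, using the minimality of $A_\sigma$ in place of the adversarial choice.

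Suppose no such deterministic path exists. Let $v_p \in X \cup \{v^{i,1}_p\}$ be the latest-introduced popular player in $\sigma_0[v^{i,1}_p, v^{i,2}_u)$ that has a deterministic path to $v^{i,1}$ through vertices of $X$. This $v_p$ exists, since $v^{i,1}_p$ itself qualifies, and $v_p \neq v^{i,2}_p$ by our assumption. By the choice of $v_p$, no popular player introduced after $v_p$ and before $v^{i,2}_u$ deterministically beats $v_p$, and the same holds for any champion that could be reachable via a deterministic chain ending at $v_p$.

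Next consider the unpopular players introduced in $\sigma_0(v_p, v^{i,2}_u)$. Each edge between a popular and an unpopular player that is not a copy pair is uncertain. Property~\ref{prop:non-interleaving} says no intermediate backbone player has an unpopular child, so every such unpopular player is a leaf of some copy of its own popular copy, or else it would be on the backbone, which contradicts the all-popular backbone.

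I would then build an alternative realization in two cases, as before. If the champion $c_p$ right after $v_p$'s introduction is $v_p$ or deterministically beats $v_p$, let $c_p$ hold the title. Otherwise let $v_p$ win the uncertain game. In either case the champion keeps winning until $v^{i,2}_u$ arrives. Because the champion is not $v^{i,2}_p$ and the edge to $v^{i,2}_u$ is uncertain, $v^{i,2}_u$ wins and then takes every remaining game. This realization yields strictly fewer popular wins.

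To get a contradiction, I would compare it with the choice of $A_\sigma$, which already resolves all popular-versus-unpopular uncertainty in favor of the unpopular player, and use the fact that $v^{i,2}_u$ lost to $v^{i,2}_p$ in $A_{\sigma_0}$. This last step is the main obstacle. $A_\sigma$ is not defined as the minimum-value arborescence here, so the contradiction has to come from the structure of $A_\sigma$. Popular–popular uncertain games are not fixed by the rule defining $A_\sigma$, so one must argue that the champion at $v^{i,2}_u$'s arrival is forced to be $v^{i,2}_p$. Then $v^{i,2}_p$ must be deterministically reachable, backward through beats among $X$, from $v^{i,1}_p$. Concretely, I would argue along the all-popular backbone: each consecutive backbone edge within $X$ either is deterministic or can be flipped to keep a non-$v^{i,2}_p$ champion, and the latter contradicts $v^{i,2}_u$ being a child of $v^{i,2}_p$. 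The deterministic edges then chain into the required path, and BFS recovers it in polynomial time.
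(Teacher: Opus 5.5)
There is a genuine gap, and it sits at the step you flag as the main obstacle. Your alternative realization is the right one: agree with $A_\sigma$ before $v_p$, let $c_p$ or $v_p$ hold the title up to $v^{i,2}_u$, then let $v^{i,2}_u$ win every remaining game. But you never say what it contradicts. It is neither minimality of $A_\sigma$ nor its structure. It is the hypothesis $\vnar(\sigma)\ge \ell d+n_p-1$, combined with the choice of $\sigma_0$ as the minimal prefix on which $A_\sigma$ scores \emph{exactly} $\ell d+n_p-1$.

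In the first alternative of condition~(iv), no unpopular player is introduced in $\sigma_0(v^{i,1}_u,v^{i,2}_u)$ at all. Such a player would either sit on the backbone, or be an unpopular child excluded by condition~(iii) and the choice of $v^{i,1}_u,v^{i,2}_u$. So every game of $A_{\sigma_0}$ from $v_p$'s introduction through the arrival of $v^{i,2}_u$ has value $1$. Your realization matches these values up to $v^{i,2}_u$. It gets $0$ in the game against $v^{i,2}_u$, where $A_{\sigma_0}$ got $1$. It gets $0$ afterwards, because $v^{i,2}_p$ (the only player who deterministically beats $v^{i,2}_u$) has already been introduced and every other edge at $v^{i,2}_u$ is uncertain. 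Hence it is a realization of the whole of $\sigma$ with value at most $\ell d+n_p-2$, contradicting the \vnar bound. This is exactly how the paper closes the argument; without it, your realization contradicts nothing.

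The substitute you sketch cannot replace this. Saying that flipping an uncertain backbone edge ``contradicts $v^{i,2}_u$ being a child of $v^{i,2}_p$'' confuses two realizations. $A_\sigma$ is one fixed outcome, and the existence of a different outcome with a different champion says nothing about it. In fact, the rule defining $A_\sigma$ leaves popular--popular uncertain games unconstrained. So a backbone $v^{i,2}_p\to y_p\to v^{i,1}_p$ built from two uncertain edges (which are non-edges of $D$) is consistent with $A_\sigma$ and with the first alternative of condition~(iv). Yet $D$ need not contain any path from $v^{i,2}$ to $v^{i,1}$ through $X$. So the lemma genuinely requires the \vnar hypothesis on $\sigma$; the structure of $A_\sigma$ alone does not force deterministic edges.

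Your treatment of the second alternative and the BFS step are fine. Your remark about intermediate unpopular players being leaves should instead say that there are none.
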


\begin{proof} 
    By the definition of a clean block, it is sufficient to consider the case where in $A_{\sigma_0}$, only popular players are introduced between $v^{i,1}_u$ and $v^{i_2}_u$ (the first case in Condition \ref{item:clean-block-all-pop}).
    If we exhibit a new probabilistic event within the sub-seeding $\sigma_0$ in which $v^{i,2}_u$ wins all its games after introduction and there is a decrease in score, then we also obtain such a probabilistic event in $\sigma$ in which $v^{i,2}_u$ wins all its games after introduction (with the same score since $v^{i,2}_u$ is unpopular).
    Since the new event has score strictly less than $\ell d + n_p - 1$ (due to the decrease), we have a contradiction to the fact that $\sigma$ has \vnar at least $\ell d + n_p -1$.
    Thus, by a proof that closely follows in the lines of the one given for \Cref{clm:backbone-pop-det-path}, we have our desired result.
    We add it below for the sake of completeness.

    Suppose that there is no such path from $v^{i,2}$ to $v^{i,1}$ in $D$.
    Let $x_p \in X \cup \{v^{i,1}_p\}$ be the popular player who is introduced the latest in $A_\sigma[v^{i,1}_p,v^{i,2}_u)$ such that there is a deterministic path from $x$ to $v^{i,1}$ with intermediate vertices (if at all any) from $X$.
    Note that such a $x_p$ exists since it could potentially be $v^{i,1}_p$ itself.
    Note that $x_p \ne v^{i,2}_p$ by our supposition.
    We define a new arborescence (event)e $A'_{\sigma_0}$ depending on who is the champion, say $c_p$, right after the introduction of $x_p$ in $A_{\sigma_0}$.
    \begin{itemize}
        \item If $c_p = x_p$ or $c_p$ deterministically beats $x_p$, then $A'_{\sigma_0}$ is the arborescence in which $c_p$ beats all players introduced in $\sigma_0(x_p, v^{i,2}_u)$, loses to $v^{i,2}_u$, and then $v^{i,2}_u$ beats all players in $\sigma_0(v^{i,2}_u,*]$.
        Note that all players introduced in $\sigma_0(x_p, v^{i,2}_u)$ do not deterministically beat $x_p$ (by definition of $x_p$).
        \item Else $c_p$ has an uncertain edge with $x_p$ and $A'_{\sigma_0}$ is the arborescence in which $x_p$ beats $c_p$, beats all players introduced in $\sigma_0(x_p, v^{i,2}_u)$, loses to $v^{i,2}_u$, and then $v^{i,2}_u$ beats all players in $\sigma_0(v^{i,2}_u, *]$.
        Similar to the previous case, note that all players in introduced in $\sigma_0(x_p, v^{i,2}_u)$ do not deterministically beat $x_p$ (by definition of $x_p$).
    \end{itemize}
    In all the above cases, the arborescence $A'_{\sigma_0}$ has score that is at least one less than that of $A_{\sigma_0}$ (due to the unpopular player $s_u$ winning).
    By extending $A'_{\sigma_0}$, we also obtain an arborescence in the whole seeding $\sigma$ in which $v^{i,2}_u$ wins all its games after introduction (with the same score as that of $A'_{\sigma_0}$ since $v^{i,2}_u$ is unpopular).
    Since the new arborescence has score strictly less than $\ell d + n_p - 1$ (due to the decrease), we have a contradiction to the fact that $\sigma$ has \vnar at least $\ell d + n_p -1$.

    We note that such a path can be computed in polynomial time as follows: find a path from $v^{i,2}$ to $v^{i,1}$ by applying breadth-first search starting at vertex $v^{i,2}$ in the graph $D$ restricted to the vertices in $X \cup \{v^{i,2},v^{i,1}\}$.
\end{proof}

We will show that $A_{\sigma_0}$ consists of several consecutive clean blocks which allow us to identify a sufficiently long path. 
Towards that, we analyze the possibility that between two popular players in the backbone, whose respective children are their unpopular copies, there exists an unpopular player in the backbone.

\begin{definition}\label{def:tagged-pair}
We define a \emph{tagged pair} as a pair of popular players $(v^1_p, v^2_p)$, illustrated in \Cref{fig:backbone-bad-case}, if they 
satisfy the following properties in $A_{\sigma_0}$:
\begin{enumerate}[label=(\roman*)]
    \item both are on the backbone of $A_{\sigma_0}$ with $v^2_p$ introduced after $v^1_p$;
    \item each of them has at least one unpopular child;
    \item\label{prop:non-interleaving-repeated} there is no other popular player on the backbone between them who has an unpopular child; and
    \item the backbone between them contains at least one unpopular player, and there is no path in the graph $D$ from $v^2_p$ to $v^1_p$ where the intermediate path vertices correspond to the popular players introduced between $v^1_u$ and $v^2_u$ in $\sigma_0$.
\end{enumerate}
Moreover, the members of a tagged pair are called the \emph{(tagged) partners},
and the sub-seeding $\sigma_0[v^{1}_p, v^{2}_u]$ is called a \emph{tagged block}. 
\end{definition}

\begin{figure}
    \centering
    \includegraphics[width=0.9\linewidth]{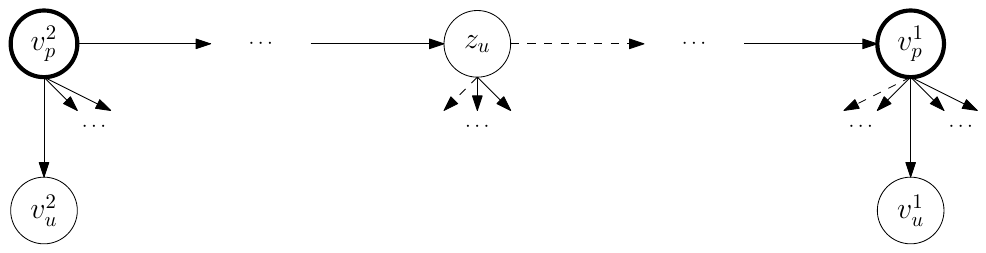}
    \caption{Illustration of a tagged block due to partners $(v^1_p, v^2_p)$. Here, the bold circles represent the popular backbone players, and the normal circles are unpopular players. The solid(dashed) arrows represent the deterministic (uncertain) edges between players.} 
    \label{fig:backbone-bad-case}
    \label{fig:tagged-pair-example}
\end{figure}

Let $\{(v^{i,1}_p, v^{i,2}_p)\}_{i\geq 1}$ denote the set of tagged pairs in $\sigma_0$, indexed in the order in which they are introduced in $\sigma_0$.
Note that condition~\ref{prop:non-interleaving-repeated} ensures that these pairs cannot interleave.
Additionally, we will call the members of a tagged pair to be (tagged) partners.

Consider a tagged pair $(v^{i,1}_p, v^{i,2}_p)$: from now on, we call it the $i^{th}$ pair.
As in the case of clean blocks, we will use $v^1_u$ to denote the last-introduced (in $A_{\sigma_0}$) unpopular child of $v^1_p$ and $v^2_u$ to denote the first-introduced (in $A_{\sigma_0}$) unpopular child of $v^2_p$.

Let $x^i_p$ denote either $v^{i,1}_p$ or the last popular player that is introduced between $v^{i,1}_u$ and $v^{i,2}_u$ such that there is a deterministic path in the graph $D$ from the vertex $x^i$ to $v^{i,1}$ in which the intermediate vertices of the path correspond to the popular players introduced between the tagged partners in $\sigma_0$.
(Note that this deterministic path also exists in the reduced instance within the subgraph induced on the popular players.)

We identify a special player $w^{i}$ depending on who is the champion right after the introduction of $x^i_p$. 
In the following definition, we consider the champion, denoted by $c$, right after $x^{i}_p$ is introduced. Thus, $c$ is either $x^i_p$ or whoever is the champion right before the introduction of $x^i_p$.   
We define $w^i$ as follows. 
\begin{itemize}
\item 
    If $x^i_p$ is {\it not} deterministically beaten by the previous champion, we set $w^i = x^{i}_p$. 
\item
    Else, $w^i = c$. 
\end{itemize}
Note that, irrespective of what value it takes, $w^i$ is a popular player. 
We first prove the following result.

\begin{lemma}
\label{lem:w^i-wins}
Consider any player $y$ introduced in $\sigma_0[x^i_p, v^{i,2}_u)$ such that $y \neq w^i$. 
Then player $y$ either deterministically loses to $w^i$ or has an uncertain edge to $w^i$. 
\end{lemma}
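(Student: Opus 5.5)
The plan is to take cases on whether $y$ is unpopular or popular. The unpopular case follows from the construction. The popular case follows from the maximality in the choice of $x^i_p$, together with condition (iv) of a tagged pair (\Cref{def:tagged-pair}). Throughout, recall that $w^i$ is popular, and that ``deterministically beats'' between two popular players corresponds exactly to an edge of $D$.

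\emph{Unpopular $y$.} By construction, the only deterministic edges incident to an unpopular player $v_{u,j}$ and a popular player are $v_p \beats v_{u,j}$. Every other popular--unpopular pair is joined by an uncertain edge. So an unpopular $y$ either is a copy of $w^i$, in which case it deterministically loses to $w^i$, or it shares an uncertain edge with $w^i$. In both cases the claim holds.

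\emph{Popular $y$.} Suppose, for contradiction, that $y \beats w^i$ deterministically, so $y \beats w^i$ is an edge of $D$. I first locate the champion $c$ just before $x^i_p$ is introduced.

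\begin{itemize}
\item If $x^i_p = v^{i,1}_p$, then $v^{i,1}_p$ is on the backbone and later has children, so in $A_{\sigma_0}$ it beat $c$. Hence $c$ does not deterministically beat it, and $w^i = x^i_p$.
\item If $x^i_p \neq v^{i,1}_p$, then $x^i_p$ is introduced after $v^{i,1}_u$. Since $v^{i,1}_u$ is a child of $v^{i,1}_p$, the champion right after $v^{i,1}_u$ is $v^{i,1}_p$. Therefore $c$ is either $v^{i,1}_p$ itself or a popular player introduced in $\sigma_0(v^{i,1}_u, x^i_p)$.
\end{itemize}

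Now split according to the definition of $w^i$.

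\begin{itemize}
\item If $w^i = x^i_p$, the edge $y \beats x^i$ extends the deterministic path from $x^i$ to $v^{i,1}$ into a path from $y$ to $v^{i,1}$.
\item If $w^i = c$ with $c \beats x^i_p$ deterministically, the walk $y \beats c \beats x^i \leadsto v^{i,1}$ gives such a path. If $c = v^{i,1}_p$, it is simply the edge $y \beats v^{i,1}$. In every case the new intermediate vertices lie among the popular players introduced between the partners.
\end{itemize}

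It remains to check that $y$ is itself an admissible endpoint, i.e.\ that it lies in the window between $v^{i,1}_u$ and $v^{i,2}_u$ and comes later than $x^i_p$.

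\begin{itemize}
\item If $y = v^{i,2}_p$, we obtain a path from $v^{i,2}$ to $v^{i,1}$ through admissible vertices, contradicting condition (iv) of the tagged pair.
\item If $y$ is another popular player introduced after $v^{i,1}_u$ and after $x^i_p$, the path contradicts the choice of $x^i_p$ as the \emph{last} such player.
\item The leftover situation is $x^i_p = v^{i,1}_p$ with $y$ introduced in $\sigma_0(v^{i,1}_p, v^{i,1}_u)$. There $v^{i,1}_p$ remains champion until its child $v^{i,1}_u$ arrives. So $y$ lost to $w^i = v^{i,1}_p$ in $A_{\sigma_0}$, and therefore $y$ cannot deterministically beat it.
\end{itemize}

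The step I expect to be the main obstacle is the bookkeeping in the case $w^i = c$. One must justify that $c$ is either $v^{i,1}_p$ or introduced after $v^{i,1}_u$, so that it is a legal intermediate vertex. This relies on the fact that a backbone player keeps the championship until the arrival of its last child. Pinning down exactly which players fall in the window $\sigma_0(v^{i,1}_u, v^{i,2}_u)$ is where care is needed.
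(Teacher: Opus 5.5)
Your proof is correct and is essentially the paper's argument: a deterministic win of $y$ over $w^i$ forces $y$ to be popular and yields a path in $D$ from $y$ through $w^i$ (and $x^i$) to $v^{i,1}$, contradicting either the maximality of $x^i_p$ or condition (iv) of the tagged pair. Your bookkeeping is more careful than the paper's, since you check that $c$ is an admissible intermediate vertex and separately handle the children of $v^{i,1}_p$ introduced before $v^{i,1}_u$ when $x^i_p = v^{i,1}_p$; the one case you leave implicit, $y = x^i_p \neq w^i$, is immediate because then $w^i = c$ deterministically beats $x^i_p$ by definition, which is exactly the paper's ``special case''.
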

\begin{proof}
    We first argue the case when $y\neq x^i_p$. 
    That is, $y$ is introduced after $x^i_p$. 
    There are two cases depending on: $w^i = x^i_p$ and $w^i \neq x^i_p$. 
    Both cases can be argued similarly, but for clarity, we break them into two parts. 
    
    Suppose that $w^i \neq x^i_p$. 
    Then, $w^i$ is the previous champion before the introduction of $x^i_p$ and this player deterministically beats $x^i_p$. 
    Moreover, if $y$ deterministically beats $w^i$, then $y$ and $w^i$ are both popular players and the former is deterministically beating the latter. 
    Since $y$ is introduced after $x^i_p$, $y$ has a path to $v^{i,1}_p$ in digraph $D$, a contradiction to the definition of $x^i_p$. 
    
    Suppose that $w^i= x^i_p$. 
    If $y$ deterministically beats $w^i$, then the existence of the path from $x^i$ to $v^{i,1}$ in $D$ implies that in fact there is path from $y$ to $v^{i,1}$ in $D$, a contradiction to the definition of $x^i_p$. 
    Note that this case also handles the situation where popular player $x^{i}_p$ is a child of $v^{i,2}_p$ and is thus introduced in $\sigma_0(v^{i,2}_p, v^{i,2}_u)$. 
    
    Hence, as long as $y \neq x^i_p$, $y$ is introduced after $x^{i}_p$. 
    Thus, either $y$ deterministically loses to $w^i$ or has an uncertain edge with $w^i$.
    
    For the special case, where $y=x^i_p$, we note that $y \neq w^i$, implies that $w^i$ is the champion in $\sigma_0$ right before the introduction of $x_i^p$. 
    Additionally, from the definition of $w^i$, it follows that $w^i$ must deterministically beat $x^i_p$. 
    Hence, the lemma is proved.
\end{proof}

We prove the following result, which allows us to analyze an alternate event in which the score of a tagged block may increase. 
\begin{lemma}
    \label{clm:bad-prob-outcome}
    There is a probabilistic event in which $w^i$ beats everyone from its introduction till $v^{i,2}_u$ and then loses to $v^{i,2}_u$.
\end{lemma}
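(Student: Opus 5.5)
We construct the event explicitly, keeping the outcomes of $A_{\sigma_0}$ up to the moment $w^i$ becomes champion, and then steering every later game in $\sigma_0(w^i, v^{i,2}_u)$ so that $w^i$ wins. Concretely, we take the realization of $A_{\sigma_0}$ on all games played before the introduction of $x^i_p$; these are valid outcomes because they occur in an actual arborescence. At the introduction of $x^i_p$ we split according to the definition of $w^i$. If the previous champion $c$ deterministically beats $x^i_p$, then $w^i=c$ and $w^i$ is champion right after $x^i_p$ is introduced in every event, in particular in ours. Otherwise $w^i=x^i_p$, and since $x^i_p$ is not deterministically beaten by $c$, we may choose the outcome in which $x^i_p$ beats $c$. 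Either outcome is legal: it is deterministic or we resolve an uncertain edge in our favour. One degenerate case needs a remark: if $x^i_p$ is the first player overall, so that there is no previous champion, then $w^i=x^i_p$ trivially. In every case $w^i$ is champion immediately after $x^i_p$'s introduction, and this is the start of the window we care about.

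Next we go through each player $y$ introduced in $\sigma_0(x^i_p, v^{i,2}_u)$, together with $y = x^i_p$ when $w^i \neq x^i_p$. By \Cref{lem:w^i-wins}, each such $y \ne w^i$ either deterministically loses to $w^i$ or shares an uncertain edge with it. In the first case $w^i$ wins automatically. In the second we choose the outcome in which $w^i$ wins. Because match outcomes are independent and every uncertain edge has probability strictly between $0$ and $1$ for either direction, this finite sequence of choices is a realizable event with positive probability. By induction on the order of introduction, $w^i$ remains champion up to the introduction of $v^{i,2}_u$. This sequence may include $v^{i,2}_p$ and any popular children introduced in $\sigma_0(v^{i,2}_p,v^{i,2}_u)$; \Cref{lem:w^i-wins} covers all of them.

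Finally we handle the match between $w^i$ and $v^{i,2}_u$. The only deterministic edges incident to an unpopular player in this construction are $v_p \beats v_{u,j}$ for copies. So we must check that $w^i$ is not the popular copy $v^{i,2}_p$. If $w^i$ were $v^{i,2}_p$, there would be a deterministic path in $D$ from $v^{i,2}$ to $v^{i,1}$ through the popular players introduced between the partners. When $w^i=x^i_p$, this follows from the definition of $x^i_p$. When $w^i=c$ deterministically beats $x^i_p$, $c$ must have been introduced before $x^i_p$. This contradicts the tagged-pair condition (iv). It should also be noted that $c$ cannot equal $v^{i,2}_p$ when $c$ precedes $x^i_p$ unless the path argument applies; we handle this by checking that a popular $c$ beating $x^i_p$ deterministically extends the path. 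With that settled, the edge between $w^i$ and $v^{i,2}_u$ is uncertain, and we choose the outcome in which $v^{i,2}_u$ wins.

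The main obstacle is this last step, ruling out $w^i = v^{i,2}_p$, especially when $w^i=c$ is an earlier champion. Here we would carefully use condition (iv) of the tagged pair together with the maximality in the definition of $x^i_p$.
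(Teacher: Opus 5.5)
Your proposal is correct and follows the paper's route: keep the outcomes of $A_{\sigma_0}$ until $w^i$ is champion, then use \Cref{lem:w^i-wins} to resolve every game in $\sigma_0(x^i_p, v^{i,2}_u)$ in favour of $w^i$. You also explicitly check that $w^i \neq v^{i,2}_p$, which the paper's one-line proof leaves implicit: otherwise either $x^i = v^{i,2}$, or the edge $v^{i,2}\to x^i$ extends the path from $x^i$ to $v^{i,1}$, and either way condition (iv) of the tagged pair is violated, so the final game against $v^{i,2}_u$ is uncertain and $w^i$ can lose it.
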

\begin{proof}
    Due to \Cref{lem:w^i-wins}, we know that $w^i$ either deterministically beats or can probabilistically beat each player introduced after it in $\sigma_0[w^i, v^{i,2}_u)$.
\end{proof}

\paragraph{Winning the block.}\label{para:winning-the-block} We refer to the condition of $w^i$ beating everyone till the introduction of $v^{i,2}_u$ and then losing to $v^{i,2}_u$, as $w^i$ {\it winning the block}.

Since $w^i$ is a popular player, $w^i$ winning the block ensures that, in the resulting arborescence, every game after the introduction of $w^i$ up to the introduction of $v^{i,2}$ gives value. Thus, the score up to here is at least as much as the score obtained by $A_{\sigma_0}$. But, there is zero score after the introduction of $v^{i,2}_u$, since the unpopular player $v^{i,2}_u$ wins all the remaining games. 

In what follows, we will analyze the score increase due to $w^i$ winning the block. We will argue that if the score increases, due to $w^i$ winning the block, is upper-bounded, then the score of the seeding up to the start of that block is lower bounded, \Cref{obs:max-score-of-tagged-block}. 
This allows us to identify a prefix of the seeding in which all tagged blocks have large score increases. 

Let $s^i$ denote the change in score from the original arborescence $A_{\sigma_0}$ to the one when $w^i$ is winning the block. If $s^i \leq \ell d/2+ n_p-1$, then we call $i$ a {\it low-scoring} block. 

The following result, \Cref{obs:max-score-of-tagged-block}, allows us to bound the value of the score {\it inside} a tagged block. 
We use the property of $w^i$ winning the block to give an estimate of the score that can be obtained within a tagged block. 
At the heart of this argument is the fact that in $A_{\sigma_0}$ only the popular players on the backbone between the tagged partners contribute to the score by playing each other. 
But since there are unpopular players in the backbone, the score obtained from within a tagged block can actually be higher, such as when $w^i$ wins the block. 
This insight allows us to bound the score of the sub-seeding up to the introduction of the first partner of the first low-scoring (tagged) block,~\Cref{lem:score-first-low-block}.

\begin{lemma}
\label{obs:max-score-of-tagged-block}
The score obtained by the sub-seeding $\sigma_0[v^{i,1}_p, *]$ when $w^i$ wins the block is at most $s^i + n_p-1$. 
\end{lemma}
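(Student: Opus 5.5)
The plan is to split the score of $\sigma_0[v^{i,1}_p,*]$, in the event where $w^i$ wins the block, at the introduction of $v^{i,2}_u$. After $v^{i,2}_u$ is introduced it beats $w^i$ and then, by the definition of winning the block, wins every remaining game. Since $v^{i,2}_u$ is unpopular, all games from that point on contribute $0$. So it suffices to bound the score of the games played by the players in $\sigma_0[v^{i,1}_p, v^{i,2}_u)$ in this event.

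\textbf{Step 1: the games before $w^i$ is introduced.} These are the games in $\sigma_0[v^{i,1}_p, w^i)$. Up to the introduction of $x^i_p$, the event winning-the-block coincides with $A_{\sigma_0}$, and $w^i$ is either $x^i_p$ or the champion at that moment. Hence the outcomes of these games are exactly those of $A_{\sigma_0}$. I will charge each valuable game to its loser and use the same accounting as in \Cref{clm:arborescence-pop-lb}. A game has value only if a popular player wins it. Each popular player loses at most once, so these games contribute at most (number of popular players who lose) plus (unpopular players beaten by popular players). By \Cref{obs:children-are-copies}, the latter are copies beaten by their own popular copies.

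\textbf{Step 2: the games inside the block.} These are the games in $\sigma_0[w^i, v^{i,2}_u)$. Here $w^i$ wins every game, so each of these games is worth $1$. The number of such games is exactly the number of players introduced after $w^i$ and before $v^{i,2}_u$.

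\textbf{Step 3: combine with the definition of $s^i$.} By definition, $s^i$ is the difference between the score of the winning-the-block event and the score of $A_{\sigma_0}$, computed over the whole seeding. The prefix $\sigma_0[*, v^{i,1}_p)$ is identical in both arborescences, so it cancels. The suffix $\sigma_0[v^{i,2}_u, *]$ contributes $0$ in the new event. Therefore
\[
\text{score}_{\text{new}}\bigl(\sigma_0[v^{i,1}_p,*]\bigr) \;=\; s^i + \text{score}_{A_{\sigma_0}}\bigl(\sigma_0[v^{i,1}_p,*]\bigr) - \text{score}_{A_{\sigma_0}}\bigl(\sigma_0[v^{i,2}_u,*]\bigr).
\]
It then remains to show that the contribution of $A_{\sigma_0}$ on $\sigma_0[v^{i,1}_p, v^{i,2}_u)$ is at most $n_p-1$. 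In $A_{\sigma_0}$, the only value-generating games inside the tagged block are the following:
\begin{itemize}[itemsep=0pt]
\item games among popular players, who are all distinct and each lose at most once, giving at most $n_p-1$;
\item games in which $v^{i,1}_p$ or $v^{i,2}_p$ beat their own copies.
\end{itemize}
Condition~(iii) of \Cref{def:tagged-pair} ensures that no other backbone popular player in the block has an unpopular child. The backbone also contains an unpopular player, and every probabilistic popular--unpopular game in $A_{\sigma_0}$ is lost by the popular player. So popular wins inside the block come only from popular-vs-popular games, plus the copy-children games that precede the unpopular backbone player.

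\textbf{Main obstacle.} The hard part is making this count tight: the total number of popular-vs-popular wins, together with the copy-children wins of $v^{i,1}_p$ that lie inside the range, must be at most $n_p-1$. My first attempt will treat $v^{i,1}_u$ as the last child of $v^{i,1}_p$, so that everything is measured from $v^{i,1}_u$ onward. If needed, I will instead interpret the bound directly: the new-event score equals (score of the part before $w^i$, which is at most $n_p-1$ popular wins because each popular player loses at most once and the part contains no unpopular wins credited except copies) plus the block length, and the block length is absorbed into $s^i$.
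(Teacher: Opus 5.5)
Your reduction in Step~3 does not go through, and the obstruction is exactly the one you flagged as the ``main obstacle''.

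First, the identity is inconsistent with the reading of $s^i$ you adopt. If $s^i$ is the change over the whole seeding, then, since the two events agree on $\sigma_0[*,v^{i,1}_p)$, one gets $\mathrm{score}_{\mathrm{new}}(\sigma_0[v^{i,1}_p,*]) = s^i + \mathrm{score}_{A_{\sigma_0}}(\sigma_0[v^{i,1}_p,*])$. There is no subtraction of the suffix, because its $A_{\sigma_0}$-score is already inside $s^i$. The lemma would then need $A_{\sigma_0}$ to score at most $n_p-1$ on everything from $v^{i,1}_p$ onward, which is false in general. Your identity is right only if $s^i$ is the gain \emph{inside} the block, which is how the paper uses it (e.g.\ in \Cref{clm:intro-bet-tagged-players}).

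Second, even with that reading, the inequality you reduce to fails: it is not true that $A_{\sigma_0}$ scores at most $n_p-1$ on $\sigma_0[v^{i,1}_p,v^{i,2}_u)$. This range contains the entry game of $v^{i,1}_p$ and all its wins over its unpopular children up to $v^{i,1}_u$, which can number up to $d=n_p^2$. These games are identical in the winning-the-block event: the events coincide before $x^i_p$, and if $w^i=x^i_p=v^{i,1}_p$ it is champion through $v^{i,1}_u$ in both. Your fallback (``the part before $w^i$ has at most $n_p-1$ valuable games'') fails for the same reason. So no sharper count can rescue a split at $v^{i,1}_p$: the bound $s^i+n_p-1$ breaks as soon as $v^{i,1}_p$ has $n_p$ or more unpopular children.

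The missing idea is to cut at $v^{i,1}_u$ and measure $s^i$ on the interior $\sigma_0(v^{i,1}_u,v^{i,2}_u)$. The games of $\sigma_0[v^{i,1}_p,v^{i,1}_u]$ are common to both events and belong with the prefix. That is what the paper's argument actually bounds. Because $v^{i,1}_u$ is the \emph{last} unpopular child of $v^{i,1}_p$, $v^{i,2}_u$ is the \emph{first} one of $v^{i,2}_p$, and condition~(iii) of \Cref{def:tagged-pair} excludes unpopular children of the backbone players in between, $A_{\sigma_0}$ scores at most $n_p-1$ on the interior. Adding $s^i$ and the zero contribution after $v^{i,2}_u$ gives the bound. (The paper's final sentence then transfers this to $\sigma_0[v^{i,1}_p,v^{i,2}_u)$, which is the same slip you ran into. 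The form that holds starts after $v^{i,1}_u$; with the game of $v^{i,1}_u$ added it yields the $s^k+n_p$ of \Cref{lem:score-first-low-block}, whose proof indeed names $\sigma_0[v^{k,1}_u,v^{k,2}_u)$ as the $k$th block.)

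For the interior count, your list (popular-vs-popular games plus copy-children) also misses one case: a popular challenger deterministically beating its own copy while that copy is an unpopular champion on the backbone. The clean way is to charge each valuable interior game to its challenger. Either the challenger is popular and wins, or a popular champion beats a child, and every interior child of a popular backbone player is popular. Either way the charge goes to a distinct popular player introduced in the interior, none of which is $v^{i,1}_p$, so there are at most $n_p-1$ such games.
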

    \begin{proof}
        Due to the definition of $v^{i,1}_u$, it follows that all other children of the backbone player $v^{i,1}_p$ introduced after $v^{i,1}_u$ must be popular. 
        Similarly, all children of the backbone player $v^{i,2}_p$ introduced before $v^{i,2}_u$ must be popular. 
        Additionally, due to the condition~\ref{prop:non-interleaving-repeated} of tagged pairs no popular player introduced in  $\sigma_0(v^{i,1}_u, v^{i,2}_u)$ can have an unpopular child. 
        Thus, the total score given by $A_{\sigma_0}$ in $\sigma_0(v^{i,1}_u, v^{i,2}_u)$ is at  most $n_p-1$. 
        
        However, we note that there are unpopular players in the backbone in $A_{\sigma_0}$. 
        The games won by these players do not contribute to the score in $A_{\sigma_0}$ since the winners are unpopular. 
        But when the popular player $w^i$ wins the block, it beats all players (possibly unpopular) introduced after it and before $v^{i,2}_u$, and so the score increases.
        We claim that the score does not decrease when $w^i$ wins the block.
        Suppose not.
        Let $A'_{\sigma_0}$ denote the arborescence in which $w^i$ wins the block. 
        Note that in $A'_{\sigma_0}$, the unpopular player $v^{i,2}_u$ wins all its games after introduction.
        By extending $A'_{\sigma_0}$, we also obtain an arborescence in the whole seeding $\sigma$ in which $v^{i,2}_u$ wins all its games after introduction (with the same score as that of $A'_{\sigma_0}$ since $v^{i,2}_u$ is unpopular).
        Since the new arborescence has score strictly less than $\ell d + n_p - 1$ (due to the decrease), we have a contradiction to the fact that $\sigma$ has \vnar at least $\ell d + n_p -1$.
        
        The increase is denoted by $s^i$, and thus the score of $\sigma_0[v^{i,1}_p, v^{i,2}_u)$ when $w^i$ wins the block is at most the score given by $A_{\sigma_0}$ for the sub-seeding $\sigma_0[v^{i,1}_p, v^{i,2}_u)$ plus $s^i$, which is at most $s^i + n_p-1$. 
        Moreover the score given by $\sigma_0[v^{i,2}_u, *]$ is 0, since the unpopular player $v^{i,2}_u$ beats everybody.
    \end{proof}

The next proof is crucial to our proof of detecting sufficiently long paths in $D$. 
We do so by showing that there is a prefix with large score and no low scoring tagged blocks. 
More specifically, we show that there is {\it prefix} of $\sigma_0$, namely $\sigma_0[*, v^{k,1}_p)$, where $k$ is the first low scoring block, that must have a large score. 

\begin{lemma}
\label{lem:score-first-low-block}
Suppose that $\sigma_0[v^{k,1}_p, v^{k,2}_u]$ denotes the first low-scoring block in $A_{\sigma_0}$. 
Then, the score obtained in $A_{\sigma_0}$ due to the sub-seeding $\sigma_0[*,v^{k,1}_p)$ is at least $\ell d/2 - n_p$.
\end{lemma}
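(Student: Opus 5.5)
The argument compares two arborescences of $\sigma$. One is $A_{\sigma_0}$ itself; the other, call it $A'$, agrees with $A_{\sigma_0}$ up to the introduction of $v^{k,1}_p$ and then has $w^k$ win the $k$-th block. Because $\sigma$ has $\vnar \ge \ell d + n_p - 1$, every arborescence obtainable from $\sigma$ must score at least $\ell d + n_p - 1$, and in particular so must $A'$.

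First I will check that $A'$ is a legitimate outcome. Its prefix up to $v^{k,1}_p$ is the same realization of games as in $A_{\sigma_0}$. From $v^{k,1}_p$ on, the champion sequence up to the introduction of $x^k_p$ is also the same, so $w^k$ is well defined in $A'$. \Cref{clm:bad-prob-outcome} then supplies an event in which $w^k$ beats everyone introduced from its appearance until $v^{k,2}_u$ and then loses to $v^{k,2}_u$. After that, $v^{k,2}_u$ wins all remaining games. This is allowed because every edge between distinct-type pairs not already fixed is uncertain, and no popular player other than $v^{k,2}_p$ deterministically beats $v^{k,2}_u$; the only exception, $v^{k,2}_p$, appears before $v^{k,2}_u$. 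Thus $A'$ is an arborescence of $\sigma$, and every game after $v^{k,2}_u$ has value $0$.

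Next I split the score of $A'$ into two parts:
\[
\operatorname{score}(A') = \operatorname{score}_{A_{\sigma_0}}\bigl(\sigma_0[*,v^{k,1}_p)\bigr) + \operatorname{score}_{A'}\bigl(\sigma_0[v^{k,1}_p,*]\bigr).
\]
By \Cref{obs:max-score-of-tagged-block}, the second term is at most $s^k + n_p - 1$. Since block $k$ is low-scoring, $s^k \le \ell d/2 + n_p - 1$, so the second term is at most $\ell d/2 + 2n_p - 2$. Combining this with $\operatorname{score}(A') \ge \ell d + n_p - 1$, the prefix score is at least $\ell d/2 - n_p + 1 \ge \ell d/2 - n_p$, which is the claim.

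One bookkeeping point needs care. Games in the prefix $\sigma_0[*,v^{k,1}_p)$ include the match in which $v^{k,1}_p$ is introduced, and I must make sure that match is attributed consistently, either to the prefix or to the block term bounded by \Cref{obs:max-score-of-tagged-block}. The slack of $1$ in the final inequality absorbs one such misattributed game.

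The main obstacle is the same bookkeeping issue in a second form. \Cref{obs:max-score-of-tagged-block} is stated for the block score when $w^k$ wins, measured against $A_{\sigma_0}$ on $\sigma_0(v^{k,1}_u,v^{k,2}_u)$. I need its bound to cover all games of $\sigma_0[v^{k,1}_p,*]$ in $A'$, including those of $v^{k,1}_p$ against its own children. If the lemma does not cover the latter, the loss is at most $n_p$ further units (popular children), or it can be charged into $s^k$; the stated bound $\ell d/2 - n_p$ has enough room only if I redo this accounting carefully. I would verify it directly from the lemma's proof.
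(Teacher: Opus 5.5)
Your main argument is the paper's proof. The paper uses the same alternate outcome: $A_{\sigma_0}$ until $x^k_p$, then $w^k$ winning the $k$-th block, then $v^{k,2}_u$ winning everything. It bounds the tail $\sigma_0[v^{k,1}_p,*]$ by quoting \Cref{obs:max-score-of-tagged-block}; it writes $s^k+n_p$ there, which is why it ends at exactly $\ell d/2-n_p$ instead of your $\ell d/2-n_p+1$. It then combines this with $s^k\le \ell d/2+n_p-1$ and the lower bound $\ell d+n_p-1$ on every outcome of $\sigma$. Taking that lemma's statement at face value, as the paper does, your argument is complete.

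Your last paragraph, however, needs correcting: the worry is justified, and the proposed patch does not work.

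\begin{itemize}
\item The proof of \Cref{obs:max-score-of-tagged-block} only bounds the score of $A_{\sigma_0}$ on the open interval $\sigma_0(v^{k,1}_u,v^{k,2}_u)$, by $n_p-1$. The part $\sigma_0[v^{k,1}_p,v^{k,1}_u]$, meaning the entry match of $v^{k,1}_p$ and its wins over its children up to $v^{k,1}_u$, is never controlled, in that proof or in the paper's proof of this lemma.
\item Those children are not just popular ones. Since $v^{k,1}_u$ is the \emph{last} unpopular child of $v^{k,1}_p$, all of its unpopular children (up to $d$ copies) lie in this range. The uncovered value can therefore be of order $d$, far beyond a slack of $1$ or $n_p$.
\item It cannot be charged to $s^k$ either. $v^{k,1}_p$ is champion throughout $\sigma_0[v^{k,1}_p,v^{k,1}_u]$ in both $A_{\sigma_0}$ and $A'$. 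So the two outcomes coincide on $\sigma_0[*,v^{k,1}_u]$, and these games contribute nothing to the change $s^k$.
\end{itemize}

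The clean repair is to move the cut to $v^{k,1}_u$:
\[
\operatorname{score}(A') = \operatorname{score}_{A_{\sigma_0}}\bigl(\sigma_0[*,v^{k,1}_u]\bigr) + \operatorname{score}_{A_{\sigma_0}}\bigl(\sigma_0(v^{k,1}_u,v^{k,2}_u)\bigr) + s^k .
\]
The middle term is at most $n_p-1$. Every popular win in that interval is the entry match of a distinct popular challenger, because no popular champion there beats an unpopular challenger (by the choice of first/last children and condition (iii)).

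Your arithmetic then gives $\operatorname{score}_{A_{\sigma_0}}(\sigma_0[*,v^{k,1}_u])\ge \ell d/2-n_p+1$. This bounds the slightly longer prefix $\sigma_0[*,v^{k,1}_u]$ rather than $\sigma_0[*,v^{k,1}_p)$. It is still what the later argument needs if $\sigma_1$ is taken to be this prefix, since it contains no part of the interior of block $k$.
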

\begin{proof}
    We decompose $\sigma_0$ into sub-seedings $\sigma_0[*, v^{k,1}_p)$, $\sigma_0[v^{k,1}_p, v^{k,2}_u)$, and $\sigma_0[v^{k,2}_u, *]$.
    Recall that the total score of $\sigma_0$ is $\ell d+ n_p-1$. 
    Next, we will analyze the score of $\sigma_0$ given by the alternate event when $w^k$ wins the block.
    
    Consider the sequence of probabilistic outcomes where everything is identical to that in $A_{\sigma_0}$ up to the introduction of $w^k$ in the $k$th block, followed by $w^k$ winning the block.
    The resulting score is due to the score in the sub-seedings $\sigma_0[*, v^{k,1}_u)$, the $k$th block $\sigma_0[ v^{k,1}_u, v^{k,2}_u)$, and the remaining $\sigma_0[v^{k,2}_u, *]$.
    Due to \Cref{obs:max-score-of-tagged-block},\hide{ we know that when $w^k$ wins the block,} the score obtained from $\sigma_0[v^{k,1}_p,*]$ in the new event is at most $s^k + n_p$.
    Let $A'_{\sigma_0}$ denote the arborescence in which $w^k$ wins the block. 
    Note that in $A'_{\sigma_0}$, the unpopular player $v^{k,2}_u$ wins all its games after introduction.
    By extending $A'_{\sigma_0}$, we also obtain an arborescence in the whole seeding $\sigma$ in which $v^{k,2}_u$ wins all its games after introduction (with the same score as that of $A'_{\sigma_0}$ since $v^{k,2}_u$ is unpopular).
    Since the \vnar of $\sigma$ is at least $\ell d + n_p-1$, we have that 
    \begin{align*}
        \score(A_{\sigma_0[*, v^{k,1}_p)}) + s^k + n_p &= \score(A'_{\sigma_0[*, v^{k,1}_p)}) + s^k + n_p \\
        &\ge \score(A'_\sigma) \geq \ell d+ n_p-1.    
    \end{align*}
    Since the $k$th block is low scoring, we have that $s_k \le \ell d/2+n_p-1$.
    Hence, the score obtained by $A_{\sigma}$ in $\sigma[*, v^{k,1}_p)$ is at least $\ell d/2-n_p$ and the lemma is proved. 
\end{proof}

Now we are ready to obtain a prefix $\sigma_1$ of $\sigma_0$ (and thus, of $\sigma$ as well) that does not contain any low-scoring blocks.
If $\sigma_0$ does not contain any low-scoring blocks, then we set $\sigma_1=\sigma_0$.
Otherwise, we set $\sigma_1 = \sigma_0[*, v^{k,1}_p)$, where $k$ denotes the first low scoring tagged block.
From now on, we use $A_{\sigma_1}$ to denote the partial arborescence obtained from $A_{\sigma_0}$ by only considering the players in $\sigma_1$.
Applying \Cref{lem:score-first-low-block}, we have the following.

\begin{corollary}\label{cor:properties-of-prefix}
There is a prefix $\sigma_1$ of $\sigma_0$, such that the score obtained by the arborescence $A_{\sigma_1}$ is at least $\ell d/2 - n_p$. 
Moreover, if there is a tagged block $i$ in $\sigma_1$, then $s^i > \ell d/2 + n_p-1$. 
\end{corollary}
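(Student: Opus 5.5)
The corollary follows by a case split on whether $\sigma_0$ contains a low-scoring tagged block. The prefix $\sigma_1$ is defined exactly as in the paragraph preceding the statement.

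\emph{Case 1: no tagged block of $\sigma_0$ is low-scoring.} We set $\sigma_1=\sigma_0$. By the definition of $\sigma_0$, the score of $A_{\sigma_1}=A_{\sigma_0}$ is exactly $\ell d+n_p-1$. This is at least $\ell d/2-n_p$, since $\ell d\ge 0$ and $n_p\ge 1$. The second claim holds because every tagged block $i$ in $\sigma_1$ is, by the case assumption, not low-scoring. By definition of low-scoring, this means $s^i>\ell d/2+n_p-1$.

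\emph{Case 2: some tagged block is low-scoring.} Let $k$ be the first low-scoring block in the order in which tagged pairs are indexed, and set $\sigma_1=\sigma_0[*,v^{k,1}_p)$. This is a prefix of $\sigma_0$, and $A_{\sigma_1}$ is the restriction of $A_{\sigma_0}$ to it. So the score of $A_{\sigma_1}$ is the score contributed in $A_{\sigma_0}$ by the sub-seeding $\sigma_0[*,v^{k,1}_p)$. By \Cref{lem:score-first-low-block}, this is at least $\ell d/2-n_p$.

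For the second claim in Case 2, I would argue as follows. Tagged pairs do not interleave, by condition~\ref{prop:non-interleaving-repeated}, and they are indexed in order of introduction. Hence any tagged block lying in $\sigma_1$ ends before $v^{k,1}_p$, so it has index $i<k$. By minimality of $k$, such a block is not low-scoring, i.e.\ $s^i>\ell d/2+n_p-1$.

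The one point that needs care is that the tagged blocks ``in $\sigma_1$'' and their quantities $s^i$ must mean the same thing as in $\sigma_0$. The quantities $w^i$ and $s^i$ depend only on $A_{\sigma_0}$ restricted to the block $\sigma_0[v^{i,1}_p,v^{i,2}_u]$. The change in score when $w^i$ wins the block is computed relative to $A_{\sigma_0}$, and the part after $v^{i,2}_u$ contributes the same regardless. Truncation therefore does not change them, and I would interpret the statement with blocks and $s^i$ inherited from $\sigma_0$. I expect no real obstacle beyond this bookkeeping; everything else is \Cref{lem:score-first-low-block} plus the ordering of blocks.
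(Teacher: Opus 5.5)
Your proposal is correct and is essentially the paper's own argument: $\sigma_1$ is either $\sigma_0$ itself, whose score $\ell d+n_p-1$ trivially exceeds $\ell d/2-n_p$, or $\sigma_0[*,v^{k,1}_p)$ for the first low-scoring block $k$, with the score bound supplied by \Cref{lem:score-first-low-block} and the second claim by the minimality of $k$. Your remark that the tagged blocks and the values $s^i$ are inherited from $\sigma_0$ is exactly the convention the paper uses implicitly; it also covers a block $k-1$ that shares its second partner with block $k$ and so only partially lies in $\sigma_1$, since that block's index is still below $k$.
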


Consequently, what remains to be shown is how the existence of a prefix with a large score allows us to obtain a path of a certain length. 
\Cref{fig:mul-hoa-overview} illustrates a snapshot of the arborescence at this point in the proof.
\definecolor{fig-green}{HTML}{008000}
\definecolor{fig-orange}{HTML}{FFA500}
\begin{figure}
    \centering
    \includegraphics[width=\linewidth]{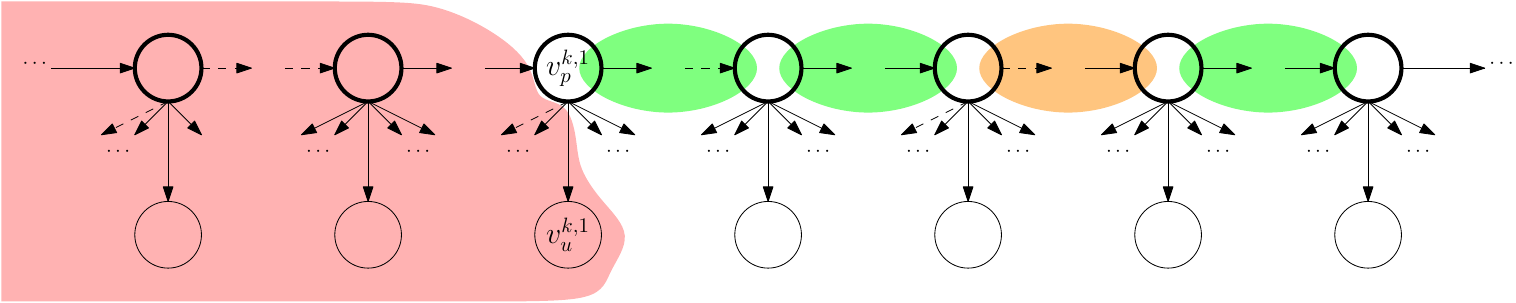}
    \caption{
    A figure illustrating the key ideas used thus far in the proof of \Cref{thm:multiplicative-approximation-NP-hardness}. Here we represent $A_\sigma$ using the same conventions as \Cref{fig:tagged-pair-example}.
   \hide{It contains $A_\sigma$, with the bold circles representing the popular backbone players, and the normal circles being their unpopular copies. The solid arrows represent the deterministic relation between these parents and children. The dashed arrows in the backbone represent potentially uncertain edges.} 
       The \textcolor{green}{{\bf green}} and \textcolor{fig-orange}{{\bf orange}} ovals denote clean blocks and high-scoring tagged blocks, respectively. The first low-scoring tagged block starts with $v^{k,1}_p$.
    The red area is the removed portion because of a low-scoring tagged block.
    \label{fig:mul-hoa-overview}
    }
\end{figure}

\paragraph{Computing a Long Path.}\label{para:path-computation}
Using \Cref{lem:clean-block}, we know that in clean blocks, there are deterministic paths connecting the two popular partners of the block with intermediate vertices from the block itself. By stitching together these paths, we can find a longer path in the graph $D$.
\hide{More generally, between any pair of popular players $(v^1_p, v^2_p)$ in the backbone who have their unpopular copies as children, there may exist some unpopular player in the sub-seeding who is also part of the backbone. 
As long as there exists a deterministic path from $v^2_p$ to $v^1_p$ with intermediate vertices corresponding to popular players introduced between $v^1_u$ and $v^2_u$, the blocks are clean and we are able to use this path, along with those obtained from clean blocks, to stitch together a longer path in the graph $D$.}
However, it is not clear how to deal with tagged blocks -- pairs of popular players who have their unpopular copies as children but do not have a deterministic path.

In the following lemma, we address this matter based on the fact that the score of $A_{\sigma_1}$ is at least $\ell d/2-n_p$. The critical idea in the proof is to pin-point a sub-seeding of $\sigma_1$ which does not have a tagged block, and the backbone does not have any unpopular players and has a large number of popular players who beat each of their unpopular copies. Such a sub-seeding implies a directed path, as shown earlier in \Cref{lem:backbone-pop-det-path}.

\begin{lemma}
\label{lem:obtaining-the-long-path}
For large enough $n_p$, there exists an $\epsilon'>0$ such that in polynomial time we can find in $D$ a path of length at least $n_p^{\epsilon'}$.
\end{lemma}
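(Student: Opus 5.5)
We work entirely inside the prefix $\sigma_1$ from \Cref{cor:properties-of-prefix}, whose arborescence $A_{\sigma_1}$ has score at least $\ell d/2 - n_p$ with $\ell = n_p^{\varepsilon}$, $d = n_p^2$. The plan is to convert this large score into many consecutive clean pairs along the backbone, and then concatenate the paths given by \Cref{lem:clean-block}.

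\textbf{Step 1: count the popular backbone players with unpopular children.} In $A_{\sigma_1}$, every game that scores is won by a popular player. Popular-versus-popular games contribute at most $n_p-1$ in total, since each popular player loses at most once. Every other scoring game is a popular player beating an unpopular child, and by \Cref{obs:children-are-copies} that child is one of its $d$ copies. So if $m$ popular backbone players have an unpopular child, the score is at most $md + n_p - 1$. This gives $m \ge (\ell d/2 - 2n_p+1)/d \ge \ell/2 - 1$ for large $n_p$. List these players in order of introduction as $u^1_p, \dots, u^m_p$. By the definitions, every consecutive pair $(u^j_p, u^{j+1}_p)$ satisfies conditions (i)--(iii), so it is either a clean pair or a tagged pair.

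\textbf{Step 2: bound the number of tagged pairs.} This is the main obstacle. Every tagged block $i$ in $\sigma_1$ is high scoring, so $s^i > \ell d/2 + n_p - 1$. The increase $s^i$ comes only from games that $w^i$ wins against backbone unpopular players and the unpopular children they beat inside the block. Those unpopular players are copies of popular players, and they lie in $\sigma_0(v^{i,1}_u, v^{i,2}_u)$, which is disjoint across different blocks. The key fact is that an unpopular copy of $v$ that sits on the backbone, or is beaten by an unpopular backbone player, is not scored by $v_p$. So each high-scoring tagged block uses up more than $\ell d/2$ unpopular players that produced no score in $A_{\sigma_1}$.

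I would compare this with the total supply. There are $n_p d$ unpopular players in all, and at least $\ell d/2 - n_p$ of them are consumed as scoring children. Each tagged block consumes more than $\ell d/2$ non-scoring ones. This yields $t \cdot \ell d/2 < n_p d$, so the number $t$ of tagged pairs satisfies $t < 2n_p/\ell = 2n_p^{1-\varepsilon}$. The delicate part is checking that the unpopular players counted in $s^i$ really are disjoint across blocks and from the scoring children; I would verify this from the interval structure of the blocks.

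\textbf{Step 3: extract a long run of clean pairs.} The $t$ tagged pairs split the sequence $u^1_p, \dots, u^m_p$ into at most $t+1$ runs of consecutive clean pairs. By pigeonhole, some run contains at least
\[
\frac{m-1-t}{t+1} \;\gtrsim\; \frac{n_p^{\varepsilon}/2}{2n_p^{1-\varepsilon}} \;=\; \tfrac14\, n_p^{2\varepsilon - 1}
\]
clean pairs. This is positive polynomially in $n_p$ exactly because $\varepsilon > 0.5$.

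For each clean pair in that run, \Cref{lem:clean-block} gives in polynomial time a deterministic path in $D$ from the later endpoint to the earlier one. Its interior vertices come from the popular players introduced strictly inside that block, and these sets are disjoint for distinct blocks. So the paths concatenate into a simple path. Its length is at least the number of clean pairs in the run, i.e. at least $n_p^{\epsilon'}$ for any $0 < \epsilon' < 2\varepsilon - 1$ and large $n_p$. Every step is constructive: \Cref{lem:min-val-arborescence} computes $A_\sigma$, then we identify the blocks and run BFS within each block.
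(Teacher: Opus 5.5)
Your argument is correct. It reaches the same $\Theta(n_p^{2\varepsilon-1})$ path length as the paper, but it bounds the number of tagged blocks by a different count.

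The paper counts through popular players. Its inner claim uses a further adversarial event, in which an unpopular $y_u$ whose popular copy is absent from the block wins until $v^{i,2}_u$. This shows that every tagged pair has at least $\ell/2$ popular players introduced between $v^{i,1}_u$ and $v^{i,2}_u$. Separately, the score forces at most $n_p-\ell/2$ popular players without unpopular children, so there are at most $2n_p/\ell-1$ tagged blocks. The paper then discards the unclean parts of the tagged blocks, losing at most $n_p$, and hands the leftover pieces to the adjacent clean chunks. Finally it averages the remaining score $\ell d/2-2n_p$ over at most $2n_p/\ell$ chunks, giving a chunk of score about $(\ell^2/4n_p)d$ and hence that many consecutive clean pairs.

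You count through unpopular players instead. Each high-scoring block's increase $s^i$ forces more than $\ell d/2$ unpopular challengers, in the block's own pairwise-disjoint interval, whose games were non-scoring in $A_{\sigma_0}$. This gives $t\cdot \ell d/2<n_pd$ against the global supply. You then pigeonhole clean pairs directly over the at most $t+1$ runs. Your route is a pure double count and is shorter: it needs neither the inner claim with its event argument nor the chunk bookkeeping. The paper's route gives extra information about the popular content of tagged blocks, which this lemma does not actually need.

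Two small accounting slips, neither affecting the conclusion:

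In Step 1, a scoring game can also be a popular challenger $w_p$ beating its own copy $w_u$ while $w_u$ is champion. The correct bound is therefore score $\le md+2n_p$: at most $n_p$ games lost by popular players, at most $n_p$ games won by a popular challenger against an unpopular champion, and at most $md$ games won against unpopular children. This still gives $m\ge \ell/2-1$ for large $n_p$, since $d=n_p^2$.

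In Step 2, the increase $s^i$ also counts popular challengers that lost to an unpopular champion in $A_{\sigma_0}$. There are at most $n_p-1$ of these, which is exactly absorbed by the $+\,n_p-1$ in the high-scoring threshold, so more than $\ell d/2$ unpopular challengers remain.

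Also, $A_\sigma$ in this section is not the minimum-value arborescence of Lemma~\ref{lem:min-val-arborescence}. It is the specific event in which unpopular players win every uncertain popular--unpopular game, and you compute it by direct simulation.
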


\begin{proof} 
In the following claim, we show that any existing tagged pair in $A_{\sigma_0}$ contains a ``large'' number of popular players introduced between the corresponding unpopular children. Note that in $\sigma_0$ every tagged block is a high-scoring block.
Also, recall that $v^{i,2}_u$ is the first unpopular child of $v^{i,2}_p$, and $v^{i,1}_u$ is the last unpopular child of $v^{i,1}_p$.

The first claim, \Cref{clm:intro-bet-tagged-players}, lower bounds the number of popular players in the backbone that are inside a tagged block, i.e, do not have any unpopular children. In the next claim, we are able to upper bound the total number of such players across the seeding. In doing so, in \Cref{clm:nice-sub-seeding}, we are able to show the existence of a sub-seeding that does not have a tagged block, has a large number of popular players in the backbone who have a large number of children, and no unpopular player is in the backbone. After that, using that sub-seeding, and by invoking \Cref{lem:clean-block}, we infer the existence of a deterministic path in $D$.

\begin{claim-inside-lemma}
    \label{lem:intro-bet-tagged-players}
    \label{clm:intro-bet-tagged-players}
    Between the two unpopular players $v^{i,1}_u$ and $v^{i,2}_u$ of any existing tagged pair $(v^{i,1}_p, v^{i,2}_p)$, at least $\ell/2$ popular players must be introduced.
\end{claim-inside-lemma}

\begin{proof}
        Assume for contradiction that at most $\ell/2-1$ popular players are introduced in $\sigma_1(v^{i,1}_u, v^{i,2}_u)$.
        Due to the definition of $v^{i,1}_u$, it follows that all other children of the backbone player $v^{i,1}_p$ introduced after $v^{i,1}_u$ must be popular. 
        Similarly, all children of the backbone player $v^{i,2}_p$ introduced before $v^{i,2}_u$ must be popular. 
        Additionally, due to the condition~\ref{prop:non-interleaving-repeated} of tagged pairs no popular player introduced in  $\sigma_1(v^{i,1}_u, v^{i,2}_u)$ can have an unpopular child. 
        Thus, the total score given by $A_{\sigma_1}$ in $\sigma_1(v^{i,1}_u, v^{i,2}_u)$ is at most $n_p-1$.
        
        Since the score increase due to $w^i$ is at least $\ell d/2+n_p$, and each popular player has only $d$ unpopular copies, we have that there is at least one unpopular player $y_u$ introduced in $\sigma_1(v^{i,1}_u, v^{i,2}_u)$ whose popular copy is not introduced in $\sigma_1(v^{i,1}_u, v^{i,2}_u)$.
        Consider the probabilistic event in which 
        \begin{itemize}[itemsep=0pt]
            \item $y_u$ wins all games after its introduction and then loses to $v^{i,2}_u$;
            \item $v^{i,2}_u$ wins all games after its introduction till the end.
        \end{itemize}
        Since $v^{i,2}_p$ is introduced before $v^{i,2}_u$ in $\sigma_1$, we have an arborescence $A'_{\sigma_1}$ that has score that is at least one less than that of $A_{\sigma_1}$ (due to the unpopular player $v^{i,2}_u$ winning).
        By extending $A'_{\sigma_1}$, we also obtain an arborescence in the whole seeding $\sigma$ in which $v^{i,2}_u$ wins all its games after introduction (with the same score as that of $A'_{\sigma_1}$ since $v^{i,2}_u$ is unpopular).
        Since the new arborescence has score strictly less than $\ell d + n_p - 1$ (due to the decrease), we have a contradiction to the fact that $\sigma$ has \vnar at least $\ell d + n_p -1$.
\end{proof}

We begin by showing the existence of a sub-seeding with some structural properties which make the path-finding process easy: one of them is the absence of tagged blocks.

\begin{claim-inside-lemma}
\label{clm:nice-sub-seeding}
There exists an $\varepsilon'>0$, and a sub-seeding of $\sigma_0$ s.t. 
\begin{enumerate}
    \item if a popular player beats an unpopular player as a child, then the latter is a copy of the former;
    \item there is no tagged block; and
    \item the score is at least $(n_p^{\varepsilon'} +1)d$.
\end{enumerate}
\end{claim-inside-lemma}
\begin{proof}
    We begin by noting that \Cref{clm:intro-bet-tagged-players} deals with popular players that do not have any unpopular children, such players can together contribute at most $n_p-1$ to the score. We observe that if there are strictly more than $n_p - \ell/2 $ such players in $\sigma_1$, then there can only be strictly fewer than $n_p - (n_p-\ell/2)=\ell/2$ popular players that have children in $A_{\sigma_1}$. Since, by \Cref{obs:children-are-copies}, the children of such players can only be their unpopular copies, so even if each one had the maximum number of children possible, that is $d$, the total score obtained in $A_{\sigma_1}$ is at most $(\ell/2-1)d + n_p-1$. Since, $d = n_p^2$, this is strictly smaller number than the score $\ell d/2 - n_p$, given by \Cref{cor:properties-of-prefix}. Hence, the number of popular players without unpopular children is at most $n_p - \ell/2$.
    
    Since these players make up the interior of a tagged block, each of which must have at least $\ell/2$ popular players, it follows that there can only be at most $(n_p - \ell/2)/(\ell/2)$ tagged blocks. Moreover, these many tagged blocks partition $\sigma_1$ into at most $(n_p - \ell/2)/(\ell/2)+1= 2(n_p - \ell/2)/\ell +1 $ (possibly empty) sub-seedings, called {\it clean chunks}, each of which do not have any tagged blocks. In other words, $\sigma_1$ may be viewed as alternating between clean chunks and tagged blocks, where a clean chunk may actually be empty or otherwise may contain several successive clean blocks. For example $\sigma_1$ may look as follows: $[C_1, T_1, C_2,T_2,C_3]$, where each of the $C_i$s and $T_i$s denote clean chunks and tagged blocks. 
    
    The simplest case where there is no tagged block in $\sigma_1$ reduces to the seeding being a single clean chunk and there is nothing more to prove. 
    
    Consider an arbitrary tagged block, denoted by $\sigma_1[v^{i,1}_p, v^{i,2}_u]$. 
    From this block, we are removing from consideration, the sub-seeding $\sigma_1(v^{i,1}_u, v^{i,2}_p)$, called the {\it unclean part} of this tagged block. 
    After removal, we only consider the following sub-seedings from this tagged block:
    $\sigma_1[v^{i,1}_p, v^{i,1}_u]$ and $\sigma_1[v^{i,2}_p, v^{i,2}_u]$.
    We note that the unclean part contains all of the unpopular players after $v^{i,1}_u$ that are inside this tagged block as well as the popular players that do not have any unpopular children. 
    Thus, each popular player from an unclean part can contribute by beating other popular players or at most one unpopular player in the backbone. Therefore, the total contribution to the score by the unclean parts across all the tagged blocks is at most $n_p$. Consequently, the total score in $\sigma_1$ outside the unclean parts is at least $\ell d/2 - n_p - n_p= \ell d/2 - 2n_p$. Moreover, we note that this score can be viewed as being split among clean chunks.
    
    As noted above, there may be at most $2(n_p - \ell/2)/\ell +1 $ clean chunks in $\sigma_1$. Moreover, after removal of the unclean parts, it looks as $[C_1, X^1_1, X^2_1, C_2, X^1_2, X^2_2, \ldots, X^1_{t-1}, X^2_{t-1}, C_t]$, where 
    $X^1_i$ and $X^2_i$ are the leftover clean pieces of the 
    $i$th tagged block. Hence, $[C_i, X^1_i]$ and $[X^2_i, C_{i+1}]$ are sub-seedings of $\sigma_1$. 
    Consequently, the scores from the leftover parts of the tagged blocks can be apportioned to the adjacent clean chunks. 
    
    Moreover, by an averaging argument, since the score of $A_{\sigma_1}$ (that is at least $\ell d/2 - 2n_p$) is split among $1+(n_p-\frac\ell2)$ 
    sub-seedings, there must be one with value at least the following, given that $\ell = (n_p)^\epsilon$. 
    \begin{align*}
        \frac{1}{1+(n_p-\frac{\ell}{2}) \frac2\ell} \left(\frac{\ell d}2-2n_p\right) &= \frac{\ell}{2n_p} \left(\frac{\ell d}2-2n_p\right) = \left(\frac{\ell^2}{4n_p}-\frac{\ell}{d}\right)d \\
        &= \left(\frac{n_p^{2\varepsilon}}{4n_p}-\frac{n_p^\varepsilon}{d}\right)d > (n_p^{\varepsilon'}+1)d.
    \end{align*}
    Note that there always exists some fixed constant $0 < \varepsilon' \le 1$ that makes the last inequality hold since $\varepsilon > 0.5$ and $n_p$ is sufficiently large.
    
    We conclude the argument by noting that each sub-seeding considered above \hide{obtained after removing the unclean part from a tagged block} satisfies the first two properties of the claim. The first one is due to the definition of $A_{\sigma}$ (\Cref{obs:children-are-copies}), and the second one is due to the fact that it is a sub-seeding obtained after removing the unclean parts. 
\end{proof}

The rest of the proof is to deal with  
the sub-seeding, denoted $\sigma_2$, obtained by the application of \Cref{clm:nice-sub-seeding} to $\sigma_1$. Let $A_{\sigma_2}$ denote the corresponding arborescence obtained from $A_{\sigma_1}$ by deleting vertices corresponding to players not in $\sigma_2$.
Observe that the lower-bound on score and the properties of $A_{\sigma_2}$ imply that there are at least $n_p^{\varepsilon'}+1$ popular players who have an unpopular copy of theirs as a child.
Every two such adjacent popular players form a clean pair as they satisfy the properties in \Cref{def:clean-pair}.
After repeated application of \Cref{lem:clean-block} 
on these adjacent pairs of popular players,
we obtain, in polynomial time, a path of length $n_p^{\varepsilon'}$ in the graph $D$.
Thus, the lemma is proved. 
\end{proof}

Recall that the input digraph $D$ is assumed to contain a Hamiltonian path, and it is \nph to find a path of length $(n_p-1)/n_p^{1-\varepsilon''} \le n_p^{\varepsilon'}$ for an appropriately chosen $\varepsilon''$ given $\varepsilon'$ (\Cref{prop:source-hardness}).
Combining \Cref{lem:obtaining-the-long-path} with the fact that the \vnar of $\sigma$ is at least $\ell d + n_p-1$ where $\ell = n_p^\varepsilon$ with $0.5 < \varepsilon \le 1$, we have the following statement on the hardness of approximation.

\thma*

\section{Algorithms for the \vnar}\label{dynamic}
\label{sec:finding-apx-seeding}

We first show a non-adaptive algorithm for a special case where each uncertain edge in the strength graph is either between two popular players, or two unpopular players. Thus, any edge between a popular and an unpopular player is deterministic. We establish a lower bound on \vnar in terms of the chromatic number of the graph $G_p=(V_p,E_p)$, which is a subgraph of the strength graph where $V_p$ is the set of popular players and $E_p$ is the set of uncertain edges incident on $V_p$.

\chromaticrestricted*
\begin{proof}
As before, we say a player $p$ {\it beats} $p'$ if $p$ beats $p'$ deterministically. We describe our algorithm for the required seeding.
\begin{enumerate}[]
    \item Let $C_1, \ldots, C_k$ be a proper $k$-coloring of $G_p$.
    Note that there is no edge between two vertices in any set/color class $C_i$.
    In particular, there is no uncertain edge between any two vertices in $C_i$.
    
    \item For each $i \in [k]$, let $P_i$ be a Hamiltonian path in the strength graph restricted to the vertices of $C_i$; note that such a path exists (and can be found in polynomial time) because the considered graph is a complete directed graph. 
    Let $f_i$ and $\ell_i$ denote the first and last vertices of $P_i$, respectively.
    We will process the color classes in the increasing order of their label.
    
    \item\label{item:alg-chr-num-step3} We construct the seeding by introducing the player $\ell_1$ first, and then sequentially introducing each unpopular player\hide{ deterministically} beaten by $\ell_1$ (in an arbitrary order), one at a time. 
    After this, we introduce the next popular player along the Hamiltonian path, and then each unpopular player\hide{ deterministically} beaten by it (who has not played before).
    Continuing in this manner we place all the vertices in $C_1$, and all the unpopular players\hide{ deterministically} beaten by at least one player in $P_1$.
    Observe that the champion\hide{ of the tournament} after all players in $C_1$ have played is $f_1$.
    
    \item We consider the relationship between $\ell_2$ and $f_1$.
    If $\ell_2$ beats $f_1$\hide{ deterministically}, we proceed by introducing $\ell_2$, then each unpopular player\hide{ deterministically} beaten by it, and so on following Step \ref{item:alg-chr-num-step3}. 
    Otherwise, either $f_1$ beats $\ell_2$\hide{ deterministically} or there is an uncertain edge between $f_1$ and $\ell_2$. 
    Let $X$ be the set of unpopular players that are\hide{ deterministically} beaten by $\ell_2$.
    One of the following two scenarios must occur:
    \begin{enumerate}[wide=10pt]
       \item\label{item:alg-chr-num-step4} There exists an unpopular player $v \in X$ that beats $f_1$\hide{ deterministically}. 
       Then, we first introduce $v$ in the seeding, followed by $\ell_2$, and then the remaining vertices in $X$ (if any); and then move along the path $P_2$ following Step \ref{item:alg-chr-num-step3}.
       \item All players in $X$ are\hide{ deterministically} beaten by $f_1$. 
       \hide{In this case}Then , we first introduce all vertices in $X$, followed by $\ell_2$.
       If $\ell_2$ wins, we move along the path $P_2$ following Step \ref{item:alg-chr-num-step3}; else, we repeat the same arguments while comparing $f_1$ with the next player along the path $P_2$, and continue the above process on the remaining vertices in this color class, and then on the remaining color classes.
    \end{enumerate}
\end{enumerate}
Next, we bound the value of the constructed seeding.

\setcounter{personalcounterone}{\value{theorem}}
\setcounter{personalcounterzero}{\value{section}}
\setcounter{theorem}{\thepersonalcountertwo}
\setcounter{section}{\thepersonalcounterthree}
\begin{lemma-inside-theorem} 
    \label{clm:chrm-score-bound}
    The constructed seeding has \vnar at least $(n_p+n_u-1)-(k-1)$.
\end{lemma-inside-theorem}
\setcounter{theorem}{\thepersonalcounterone}
\setcounter{section}{\thepersonalcounterzero}
 \begin{proof}
    Recall that the maximum value that can be achieved by any seeding is bounded by $|n_p|+|n_u|-1$.
    Thus, it is sufficient to show that the number of games in the tournament that are won by players in $U$ is at most $k-1$.
    Such games are precisely the ones described in Step \ref{item:alg-chr-num-step4}.
    By construction, there is at most one such game while processing each color class $C_i$, and there is no such game when processing color class $C_1$.
    The number of color classes is $k$, and we have the desired lemma.
      \end{proof}
      This completes the proof of the theorem.
\end{proof}

Recall that we supposed that we have an instance in which for each uncertain edge, the end points are either both popular or both unpopular.
\hide{Let $OPT$ denote the maximum $\vnar$ witnessed by any seeding in this instance.} Let $\vnar^* $ denote the optimal \vnar score in the given instance. 
Note that the proof of \Cref{thm:chromatic1} yields the following result since each step can be executed in polynomial time.

\begin{observation}
    \label{obs:coloring-to-seeding}
    Given a proper coloring of $G_p$ with $\hat{k}$ colors, a seeding with \vnar at least $(n_p+n_u-1)-(\hat{k}-1) \ge \vnar^* - (\hat{k}-1)$ can be computed in polynomial time.
\end{observation}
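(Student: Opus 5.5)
This observation follows from two facts, each already essentially established in the paper. First, the seeding produced in the proof of \Cref{thm:chromatic1} guarantees \vnar at least $(n_p+n_u-1)-(\hat{k}-1)$. Second, \Cref{prop:optimalvnar} bounds $\vnar^*$ above by $n_p+n_u-1$. What remains is to check that the construction runs in polynomial time, and that it never uses optimality of the coloring, only properness.

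For the value bound, we reread the proof of \Cref{clm:chrm-score-bound} with an arbitrary proper $\hat{k}$-coloring $C_1,\dots,C_{\hat{k}}$ in place of an optimal one. The argument uses only one property of the coloring: each class $C_i$ is independent in $G_p$. Hence the strength graph restricted to $C_i$ is a tournament with only deterministic edges, and its Hamiltonian path $P_i$ is played out deterministically. The loss count is local to each class. At most one game per class $C_i$ with $i\ge 2$ is won by an unpopular player, namely the game in Step~\ref{item:alg-chr-num-step4}, and none occurs in $C_1$. So at most $\hat{k}-1$ games lose value, and the \vnar is at least $(n_p+n_u-1)-(\hat{k}-1)$. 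Combining this with \Cref{prop:optimalvnar} gives $\vnar^*\le n_p+n_u-1$, and the inequality $(n_p+n_u-1)-(\hat{k}-1)\ge \vnar^*-(\hat{k}-1)$ follows immediately.

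For the running time, we check each step of the construction.
\begin{enumerate}
\item A Hamiltonian path in a tournament on $|C_i|$ vertices can be found in polynomial time, for instance by insertion sort on the beats relation.
\item For each popular player on a path, the set of not-yet-placed unpopular players it beats is read off the strength graph in $O(n)$ time.
\item The case distinction in Step~4 (whether $\ell_i$ beats $f_{i-1}$, or whether some $v\in X$ beats $f_{i-1}$) is an $O(n)$ scan.
\item Where the branching depends on whether $\ell_2$ wins, the outcome is in fact deterministic: every edge involved is deterministic, since popular--unpopular edges are deterministic by assumption.
\end{enumerate}
The whole seeding is therefore produced in $O(n^2)$ time. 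Optionally, its \vnar can be verified using \Cref{lem:min-val-arborescence}.

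The main subtlety is the last point. The construction must be a fixed (non-adaptive) seeding, so every branching decision has to depend only on deterministic outcomes, namely matches between popular and unpopular players or within a single color class. The one place that needs care is a probabilistic edge between $f_{i-1}$ and $\ell_i$. There, the proof avoids relying on the outcome by inserting an unpopular player beaten by $\ell_i$ who deterministically beats $f_{i-1}$, or by introducing $X$ first. So I would verify that every player encountered after a class boundary is compared with the current champion only through deterministic edges, which ensures the seeding is well defined before any matches are played.
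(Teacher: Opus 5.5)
Your route is the paper's. The paper justifies the observation only by remarking that the construction in the proof of \Cref{thm:chromatic1} uses nothing beyond properness of the coloring and runs in polynomial time, combined with \Cref{prop:optimalvnar}.

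The gap is in your item~4 and your closing paragraph, exactly at the subtlety you single out. In Step~4(b), the branch ``if $\ell_i$ wins \dots\ else \dots'' is decided by the match between $\ell_i$ and $f_{i-1}$. These are two \emph{popular} players in different color classes, so the restriction allows their edge to be uncertain, and the construction explicitly lists this case. Introducing $X$ first does not avoid the match: every player of $X$ loses to $f_{i-1}$, so $f_{i-1}$ is still champion when $\ell_i$ enters. So your claim that every edge involved is deterministic is false. The two branches prescribe different continuations, and the construction as you (and the paper) state it does not define a fixed seeding.

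This matters for the value. Take $C_1=\{a\}$ and $C_2=\{b_1,b_2,b_3\}$, where $b_1$ beats $b_2,b_3$ and $b_2$ beats $b_3$, and make all edges $a$--$b_j$ uncertain. Add unpopular players $y_1,y_2$, with $y_j$ beaten only by $b_j$. Then $\hat{k}=2$ and the target is $5-1=4$. Committing to the branch in which $\ell_2=b_3$ wins gives $\langle a,b_3,b_2,y_2,b_1,y_1\rangle$. In the outcome where $a$ beats $b_3$ and $b_2$, the last three games are won by unpopular players, for value $2$.

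To close the gap, you must say what the fixed seeding does in this subcase and redo the count. One repair is to keep the current champion $c$ deterministic while walking along $P_i$ from $\ell_i$. For the current vertex $p$:
\begin{enumerate}
\item if $p$ beats $c$, introduce $p$ and then its unplaced unpopular victims;
\item else, if some unplaced victim $v$ of $p$ beats $c$, introduce $v$ (the single loss of this class), then $p$, then the remaining victims;
\item otherwise, introduce $p$'s unplaced victims (all lose to $c$) and \emph{postpone} $p$.
\end{enumerate}
As soon as the championship passes to a vertex of $P_i$, follow the rest of $P_i$ as in Step~3. Every game up to this point is deterministic. Each class $i\ge 2$ costs at most one loss, and $C_1$ costs none. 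Then introduce the postponed popular players: the champion is popular, so each of these games is won by a popular player whatever the outcome. Place $\mathcal{W}$ last. This is a genuinely non-adaptive, polynomial-time seeding with at most $\hat{k}-1$ unpopular wins among the $n_p+n_u-1$ games not involving $\mathcal{W}$, which is what the observation needs.
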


Let $\vcprob$ denote the size of a minimum vertex cover in $G_p$.
It is known that a vertex cover in $G_p$ of size at most $2\vcprob$ can be computed in polynomial time.
Let $S$ denote such a vertex cover.
From $S$, we obtain a $(|S|+1)$-proper coloring of $G_p$ as follows: $|S|$ many colors for each vertex in $S$ and 1 color for the vertices in $V_p \setminus S$.
Applying \Cref{obs:coloring-to-seeding}, we have the following.

\begin{theorem}
    \label{thm:additive-approx-vc}
    Suppose that for each uncertain edge in the strength graph, the end points are either both popular or both unpopular.
    A seeding with \vnar at least $n_p+n_u-1 - (2\vcprob-1) \ge \vnar^* -(2\vcprob-1)$ can be computed in polynomial time.
\end{theorem}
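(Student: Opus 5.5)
The plan is to reduce the statement to \Cref{obs:coloring-to-seeding} by building a small proper coloring of $G_p$ from an approximate vertex cover. In the restricted setting, every uncertain edge has both endpoints popular or both unpopular. So the edges of $G_p$ (uncertain edges incident on $V_p$) all lie inside $V_p$, and $G_p$ is an ordinary undirected graph on the popular players.

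First, we run the standard maximal-matching $2$-approximation for Vertex Cover on $G_p$. It takes a maximal matching and outputs both endpoints of every matched edge. This gives, in polynomial time, a vertex cover $S$ with $|S|\le 2\vcprob$.

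Next, we define a coloring as follows. Each vertex of $S$ gets its own distinct color, and all of $V_p\setminus S$ gets one additional common color. We check that it is proper. Two vertices of $S$ have different colors by construction, and a vertex of $S$ differs in color from any vertex outside $S$. Two vertices of $V_p\setminus S$ share a color but cannot be adjacent, since otherwise the edge between them would be uncovered by $S$. Hence this is a proper coloring with $\hat k\le |S|+1\le 2\vcprob+1$ colors.

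Finally, we apply \Cref{obs:coloring-to-seeding} with this $\hat k$. It yields in polynomial time a seeding with \vnar at least $(n_p+n_u-1)-(\hat k-1)\ge (n_p+n_u-1)-2\vcprob$. We compare with $\vnar^*$ using \Cref{prop:optimalvnar}, which gives $\vnar^*\le n_p+n_u-1$.

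The main obstacle is the off-by-one: the construction above naively gives a loss of $2\vcprob$, not $2\vcprob-1$. To recover the extra unit, I would argue as follows.

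If $\vcprob=0$, then $G_p$ has no edges, one color suffices, and the loss is $0$.

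Otherwise I would use the slack in the matching-based cover. Take any edge $uv$ of the maximal matching. Vertex $u$ can be merged into the color class of $V_p\setminus S$, provided $u$ has no neighbor in $V_p\setminus S$. If instead every matched endpoint has a neighbor outside $S$, I would look for the saving elsewhere: either $|S|$ is strictly less than $2\vcprob$, or the two endpoints of some matched edge can share a color with different parts. Alternatively, I would exploit the fact that the proof of \Cref{clm:chrm-score-bound} loses nothing on the first color class $C_1$. One can therefore choose the order of classes so that one singleton class from $S$ costs nothing, for instance by absorbing it into $C_1$ when it has no conflicts.

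If none of these refinements goes through cleanly, I would settle for the bound with $2\vcprob$ and flag the discrepancy. The structural part of the argument (cover $\Rightarrow$ coloring $\Rightarrow$ seeding via \Cref{obs:coloring-to-seeding}) is routine and should carry the proof.
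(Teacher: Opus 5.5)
\paragraph{Verdict.} Your main construction matches the paper's proof, but neither argument reaches the stated $2\vcprob-1$, and your proposed fixes for the missing unit do not work.

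\paragraph{What matches the paper, and the shared gap.} The construction is: a $2$-approximate cover $S$, a private color for each vertex of $S$ plus one shared color for $V_p\setminus S$, then \Cref{obs:coloring-to-seeding}. This is exactly the paper's proof, and the paper stops there. As you observed, it only gives a loss of $\hat k-1=|S|\le 2\vcprob$. The paper's own argument has the same off-by-one and never earns the extra unit, so you were right to flag it.

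\paragraph{Why your repairs fail.}
For $\vcprob=0$ you conclude that the loss is $0$, but the target is $2\vcprob-1=-1$. In fact the statement is false in that case whenever $n_p\ge 1$: it would demand \vnar at least $n_p+n_u$, exceeding the bound $n_p+n_u-1$ of \Cref{prop:optimalvnar}.

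Exploiting that $C_1$ costs nothing in \Cref{clm:chrm-score-bound} is double counting. That saving is precisely the $-1$ in $\hat k-1$.

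The dichotomy ``either $|S|<2\vcprob$ or some matched pair can be recolored'' is exactly what needs proof, and you leave it unargued.

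\paragraph{Completing the argument for $\vcprob\ge 1$.} Here the sharper bound does hold, and your merging idea can be finished. Let $M$ be the maximal matching, $S=V(M)$, and $I=V_p\setminus S$, which is independent.
\begin{enumerate}
\item If $I=\emptyset$, you already use $2|M|$ colors.
\item If some vertex of $S$ has no neighbor in $I$, move it into the class of $I$; this gives $2|M|$ colors.
\item If two vertices of $S$ are non-adjacent, give them a common color; this also gives $2|M|$ colors.
\end{enumerate}
In these cases $\hat k\le 2|M|\le 2\vcprob$.

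Otherwise $G_p[S]$ is a clique on $2|M|$ vertices, every vertex of $S$ has a neighbor in $I$, and $\hat k=2|M|+1$.
\begin{itemize}
\item If $|M|\ge 2$, then $\vcprob\ge 2|M|-1\ge |M|+1$.
\item If $M=\{uv\}$, take neighbors $a$ of $u$ and $b$ of $v$ in $I$. Either $a=b$ gives a triangle, or $ua,vb$ are two disjoint edges. Either way $\vcprob\ge 2=|M|+1$.
\end{itemize}
So $2|M|+1\le 2\vcprob$ here too.

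Hence $\hat k\le 2\vcprob$ in every case, i.e.\ a loss of at most $2\vcprob-1$, and all checks run in polynomial time. The correct form of the theorem therefore has loss $\max\{0,2\vcprob-1\}$. Alternatively it can state loss $2\vcprob$, which is all that the unrefined argument (yours and the paper's) actually proves.
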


\paragraph{Beyond Polynomial Time.} More generally, consider any graph class $\mathcal{C}$ for which there is an algorithm, be it exact-exponential, \FPT\ or \XP\ etc,  to compute the chromatic decomposition with time complexity $T_{\C}$. Then, we can use that algorithm in conjunction with our construction to create a seeding with a guaranteed score.

\begin{theorem}
Suppose that a proper coloring of $G_p$ that uses $k$ colors can be computed in time $T(G_p)$. 
Then, there exists an algorithm that computes a seeding with \vnar at least $(n_p \!+\! n_u \!-\!1)\!-\!(k\!-\!1)$ in time $T(G_p) \!+\! n^{\bigoh(1)}$, where $n$ is the number of players in the instance.
\end{theorem}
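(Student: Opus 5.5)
This follows directly from the construction in the proof of \Cref{thm:chromatic1}, which never exploits the specific coloring beyond its being proper. First, run the assumed coloring procedure on $G_p$; it takes time $T(G_p)$ and yields color classes $C_1,\ldots,C_k$ that are independent sets in $G_p$. Second, feed this coloring into the construction of \Cref{thm:chromatic1}, as packaged in \Cref{obs:coloring-to-seeding} with $\hat{k}=k$. That observation gives a seeding with \vnar at least $(n_p+n_u-1)-(k-1)$.

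It remains to check that the construction runs in $n^{\bigoh(1)}$ time once the coloring is given. Within each class $C_i$, every pair of popular players is joined by a deterministic edge, because any uncertain edge between them would be an edge of $G_p$. So the strength graph restricted to $C_i$ is a tournament, and a Hamiltonian path in it can be found in polynomial time by insertion. The remaining steps scan unpopular players and compare endpoints $f_i$ and $\ell_{i+1}$. These steps are:
\begin{itemize}
\item introducing the unpopular players beaten by the current popular player;
\item testing whether some $v\in X$ beats $f_{i}$;
\item advancing along $P_{i+1}$.
\end{itemize}
Each is a polynomial scan over the players. Verifying the resulting \vnar, if desired, takes $\bigoh(n^2)$ by \Cref{lem:min-val-arborescence}.

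The correctness bound is exactly \Cref{clm:chrm-score-bound}. The only games lost by popular players to unpopular ones arise in Step~\ref{item:alg-chr-num-step4}, at most once per color class after the first, giving at most $k-1$ lost units relative to $n_p+n_u-1$. The one point to watch is that the statement inherits the hypothesis of \Cref{thm:chromatic1}, namely that every uncertain edge joins two popular or two unpopular players. Without it, the games between popular and unpopular players need not be deterministic, and that accounting breaks. I will state the result under that standing assumption, as the surrounding section does.

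The main obstacle is therefore only bookkeeping. I need to confirm that the $n^{\bigoh(1)}$ overhead does not depend on $k$ in a way that matters; it does not, since $k\le n_p$. Adding the two phases gives the total time $T(G_p)+n^{\bigoh(1)}$.
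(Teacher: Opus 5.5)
Your proposal is correct and is essentially the paper's argument: compute the $k$-coloring in time $T(G_p)$, then run the polynomial-time construction of \Cref{thm:chromatic1} (as packaged in \Cref{obs:coloring-to-seeding}), whose accounting in \Cref{clm:chrm-score-bound} uses only that the coloring is proper. You were also right to carry over the standing restriction that every uncertain edge joins two popular or two unpopular players: the paper states this theorem in that context, and the instance of \Cref{fig:example-with-pu-prob-edges} shows the bound fails for seedings without it.
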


Specifically, for classes for which an optimal chromatic decomposition can be found efficiently, we have an algorithm with the same time complexity and additive polynomial time that can compute a seeding of value at least $(n_p + n_u \!-\! 1)-(\chi(G_p)\!-\!1)$. 
Such classes include perfect graphs and bipartite graphs, where the problem is polynomial-time solvable; bounded tree-width graphs, where it is FPT w.r.t. $k+ tw$~\cite{DBLP:books/sp/CyganFKLMPPS15}; and $P_5$-free graphs, where it is XP w.r.t. $k$~\cite{DBLP:journals/algorithmica/HoangKLSS10}.

From now on, we consider the problem without any restrictions on the end points of uncertain edges.
We note that the instance in \Cref{fig:example-with-pu-prob-edges} exhibits an example where the guarantees of \Cref{thm:chromatic} do not hold.
Observe that in that instance we have $n_p=2$, $n_u=8$,$\chi(G_p)=2$, and although the bound given by \Cref{thm:chromatic} is 8, the \vnar is only 7, witnessed by the seeding $\langle a, x_1, x_2, x_3, x_4, y_1, y_2, b, y_3, y_4 \rangle$.

\subsection{Adaptive Algorithm}
\label{subsec:adaptive-seeding}

Recall that an algorithm is \emph{adaptive} if it constructs a seeding adaptively, depending on the outcomes of earlier matches.

\thmb*
\begin{proof}
The first three steps of our algorithm (\Cref{alg:adaptive_algorithm}) are the same as that in the proof of \Cref{thm:chromatic1}.

\DontPrintSemicolon 
\begin{algorithm}[t]
\caption{An Adaptive Algorithm}
\label{alg:adaptive_algorithm}
\SetKwInOut{Input}{Input}
\SetKwInOut{Output}{Output}
\SetKwFunction{Match}{Match}

\newcommand{\currpop}{\mathrm{curr\_pop}}

\Input{Sets $\mathcal{P}, \mathcal{U}$, graph $G$, and proper coloring $(C_1, \dots, C_k)$ of $G_p$}

\LinesNumbered
\tcp{Preprocessing}
For each $i \in [k]$, let $P_i = (a^i_1, \dots, a^i_{n_i})$ be a Hamiltonian path in $G[C_i]$, where $a^i_j$ beats $a^i_{j+1}$\;
$\currpop \gets \bot$ \quad \quad \tcp{Assume all players beat dummy player $\bot$}
\For{$i \gets 1$ \KwTo $k$}{
    \For{$j \gets n_i$ \KwTo $1$}{
        \If{$a^i_j$ beats $\currpop$}{
            $a^i_j$ plays $\currpop$\;
            $\currpop \gets a^i_j$\;
        }
        \ForEach{unpopular player $u$ beaten by $a^i_j$}{
            $u$ plays $\currpop$\;
            \If{the winner is $u$ \quad \quad \tcp{Only possible if $\currpop \neq a^i_j$}}{
                $u$ plays $a^i_j$\;
                $\currpop \gets a^i_j$\;
            }
        }
        \If{$\currpop \ne a^i_j$}{
            $a^i_j$ plays $\currpop$\;
            \If{the winner is $a^i_j$}{
                $\currpop \gets a^i_j$\;
            }
        }
    }
}
\end{algorithm}

\begin{enumerate}[]
    \item Let $C_1, \ldots, C_k$ be the proper $k$-coloring of $G_p$.
    Note that there is no edge between two vertices in any set/color class $C_i$.
    In particular, there is no uncertain edge between any two vertices in $C_i$.
    
    \item For each $i \in [k]$, let $P_i$ be a Hamiltonian path in the strength graph restricted to the vertices of $C_i$; note that such a path exists (and can be found in polynomial time) because the considered graph is a complete directed graph. 
    Let $f_i$ and $\ell_i$ denote the first and last vertices of $P_i$, respectively.
    We will process the color classes in the increasing order of their label.
    
    \item\label{item:alg-ada-chr-num-step3} We construct the seeding by introducing the player $\ell_1$ first, and then sequentially introducing each unpopular player\hide{ deterministically} beaten by $\ell_1$ (in an arbitrary order), one at a time. 
    After this, we introduce the next popular player along the Hamiltonian path, and then each unpopular player\hide{ deterministically} beaten by it (who has not played before).
    Continuing in this manner we place all the vertices in $C_1$, and all the unpopular players\hide{ deterministically} beaten by at least one player in $P_1$.
    Observe that the champion\hide{ of the tournament} after all players in $C_1$ have played is $f_1$.

\item \label{item:alg-ada-chr-num-step4}
If $\ell_2$ beats $f_1$\hide{ deterministically}, we proceed by introducing $\ell_2$, then each unpopular player\hide{ deterministically} beaten by $\ell_2$, and so on following Step \ref{item:alg-chr-num-step3}. 
Otherwise, either $f_1$ beats $\ell_2$\hide{ deterministically} or there is an uncertain edge between $f_1$ and $\ell_2$. 
Let $X$ be the set of unpopular players that are\hide{ deterministically} beaten by $\ell_2$.
Note that there is no player in $X$ that\hide{ deterministically} loses to $f_1$.
One of the following two scenarios must occur:
\begin{enumerate}[wide=10pt]
   \item\label{item:alg-ada-chr-num-step4a} There exists an unpopular player $v$ in $X$ that beats $f_1$\hide{ deterministically}. 
   In this case, we first introduce $v$ in the seeding, followed by $\ell_2$, and then the remaining vertices in $X$ (if any), and then move along the path $P_2$ following Step \ref{item:alg-chr-num-step3}.
   \item All players in $X$ have an uncertain edge with $f_1$. 
   We introduce the players in $X$ (in an arbitrary order), and stop if any such player beats $f_1$.
   If one of the players beat $f_1$, then we introduce $\ell_2$, and then the remaining vertices in $X$ (if any), and then move along the path $P_2$ following Step \ref{item:alg-chr-num-step3}.
   Else all players in $X$ have been introduced and $f_1$ is the champion, and we introduce $\ell_2$.
   If $\ell_2$ wins, we move along the path $P_2$ following Step \ref{item:alg-chr-num-step3}; else, we repeat the same arguments while comparing $f_1$ with the next player along the path $P_2$, and continue the above process on the remaining vertices in this color class, and then on the remaining color classes.
\end{enumerate}
\end{enumerate}
The algorithm obtains \vnar at least $(|n_p|+|n_u|-1)-(k-1)$, via a proof identical to \Cref{clm:chrm-score-bound} in \Cref{thm:chromatic1}.
\end{proof} 

\begin{remark}
    Consider the execution of \Cref{alg:adaptive_algorithm} on the instance shown in Figure~\ref{fig:example-with-pu-prob-edges}, with the coloring $C_1 =\{a\}$ and $C_2 = \{b\}$. 
    The algorithm first introduces $a$, who defeats the unpopular players $x_1, \dots, x_4$ deterministically. 
    When transitioning to player $b$, the algorithm does not match $a$ and $b$ directly due to the uncertain edge between them. 
    Instead, it uses the unpopular players $y_1, \dots, y_4$ (who are deterministically beaten by $b$) to challenge $a$. 
    If $a$ defeats all $y_i$, $a$ remains the champion. 
    If $a$ loses to some $y_i$, then $b$ immediately enters, defeats $y_i$ deterministically, becomes the champion, and proceeds to defeat the remaining $y$'s. 
    In either case, the only potential loss in value comes from a single match where a popular player ($a$) loses to an unpopular player ($y_i$).
    Thus, \Cref{alg:adaptive_algorithm} obtains a seeding with \vnar 8.
\end{remark}
\section{Conclusion}
In this work, we introduced and studied the \vnar objective for \ctc tournaments. \vnar provides a robust guarantee, ensuring a minimum achievable value regardless of the outcomes of uncertain games. Our results include polynomial-time algorithms and nearly matching hardness of approximation results.

\begin{sloppypar}
Coping with this hardness is a direction for future work. Potential approaches include exponential-time algorithms, exploration of parameters for fixed-parameter tractable algorithms, and algorithms for specific input restrictions.
Beyond \vnar, related objective functions\hide{ for Challenge-the-Champ tournaments} in tournaments with probabilistic outcomes include: maximizing expected value 
and maximizing the likelihood of achieving a certain value.
Finally, the scoring function too can be generalized to allow player and round dependencies.  
\end{sloppypar}

\section*{Acknowledgments}
UB and JC acknowledge support from the Department of Atomic Energy, Government  of India, under project no. RTI4014. Part of this work was done while JC was a visiting fellow in TIFR.

\bibliographystyle{alpha} 
\bibliography{tournaments}

@inproceedings{chaudhary2024make,
  title={How to make knockout tournaments more popular?},
  author={Chaudhary, Juhi and Molter, Hendrik and Zehavi, Meirav},
  booktitle={Proceedings of the AAAI Conference on Artificial Intelligence},
  volume={38},
  number={9},
  pages={9582--9589},
  year={2024}
}

@article{dagaev2018competitive,
  title={Competitive intensity and quality maximizing seedings in knock-out tournaments},
  author={Dagaev, Dmitry and Suzdaltsev, Alex},
  journal={Journal of Combinatorial Optimization},
  volume={35},
  number={1},
  pages={170--188},
  year={2018},
  publisher={Springer}
}

@inproceedings{bhaskar2025maximizing,
  title={Maximizing Value in Challenge the Champ Tournaments},
  author={Bhaskar, Umang and Chaudhary, Juhi and Dey, Palash},
  booktitle={Proceedings of the 24th International Conference on Autonomous Agents and Multiagent Systems},
  pages={307--315},
  year={2025}
}

@inproceedings{chaudhary2025adaptive,
  title={Adaptive Manipulation for Coalitions in Knockout Tournaments},
  author={Chaudhary, Juhi and Molter, Hendrik and Zehavi, Meirav},
  booktitle={Proceedings of the AAAI Conference on Artificial Intelligence},
  volume={39},
  number={13},
  pages={13700--13708},
  year={2025}
}

@article{suksompong2016scheduling,
  title={Scheduling asynchronous round-robin tournaments},
  author={Suksompong, Warut},
  journal={Operations Research Letters},
  volume={44},
  number={1},
  pages={96--100},
  year={2016},
  publisher={Elsevier}
}

@inbook{williams_moulin_2016, place={Cambridge}, title={Knockout Tournaments}, DOI={10.1017/CBO9781107446984.020}, booktitle={Handbook of Computational Social Choice}, publisher={Cambridge University Press}, author={Williams, Virginia Vassilevska}, editor={Brandt, Felix and Conitzer, Vincent and Endriss, Ulle and Lang, Jérôme and Procaccia, Ariel D.Editors}, year={2016}, pages={453–474}}

@inproceedings{russell2009manipulating,
  title={Manipulating tournaments in cup and round robin competitions},
  author={Russell, Tyrel and Walsh, Toby},
  booktitle={Algorithmic Decision Theory: First International Conference, ADT 2009, Venice, Italy, October 20-23, 2009. Proceedings 1},
  pages={26--37},
  year={2009},
  organization={Springer}
}

@inproceedings{suksompong2021tournaments,
  title={Tournaments in Computational Social Choice: Recent Developments},
  author={Suksompong, Warut},
  booktitle={IJCAI},
  pages={4611--4618},
  year={2021}
}

@inproceedings{aziz2014fixing,
  title={Fixing a balanced knockout tournament},
  author={Aziz, Haris and Gaspers, Serge and Mackenzie, Simon and Mattei, Nicholas and Stursberg, Paul and Walsh, Toby},
  booktitle={Proceedings of the AAAI Conference on Artificial Intelligence},
  volume={28},
  number={1},
  year={2014}
}

@article{ArlegiD20,
  author       = {Ritxar Arlegi and
                  Dinko Dimitrov},
  title        = {Fair elimination-type competitions},
  journal      = {Eur. J. Oper. Res.},
  volume       = {287},
  number       = {2},
  pages        = {528--535},
  year         = {2020}
}

@book{knockout,
title={{Knockout Tournaments}},  
booktitle={Handbook of computational social choice},
  author={Brandt, Felix and Conitzer, Vincent and Endriss, Ulle and Lang, J{\'e}r{\^o}me and Procaccia, Ariel D},
  year={2016},
  publisher={Cambridge University Press}
}

@inproceedings{efremenko2025tournament,
  title={Tournament Robustness via Redundancy},
  author={Efremenko, Klim and Molter, Hendrik and Zehavi, Meirav},
  booktitle={Proceedings of the 26th ACM Conference on Economics and Computation},
  pages={66--85},
  year={2025}
}

@book{laslier1997tournament,
  title={Tournament solutions and majority voting},
  author={Laslier, Jean-Fran{\c{c}}ois},
  volume={7},
  year={1997},
  publisher={Springer}
}

@misc{rosen1985prizes,
  title={Prizes and incentives in elimination tournaments},
  author={Rosen, Sherwin},
  year={1985},
  publisher={National Bureau of Economic Research Cambridge, Mass., USA}
}

@book{buchanan1980toward,
  title={Toward a Theory of the Rent-seeking Society},
  author={Buchanan, J.M. and Tollison, R.D. and Tullock, G.},
  number={no. 4},
  isbn={9780890960905},
  lccn={79005276},
  series={Texas A \& M University economics series},
  url={https://books.google.co.in/books?id=DC-8AAAAIAAJ},
  year={1980},
  publisher={Texas A \& M University}
}

@inproceedings{vu2009complexity,
  title={On the complexity of schedule control problems for knockout tournaments.},
  author={Vu, Thuc and Altman, Alon and Shoham, Yoav},
  booktitle={AAMAS (1)},
  pages={225--232},
  year={2009}
}

@article{VuShoham2011, 
title = {Fair seeding in knockout tournaments}, 
author = {Vu, Thuc and Shoham, Yoav}, 
journal = {ACM Transactions on Intelligent Systems and Technology}, volume = {3}, 
number = {1}, 
year = {2011}}

@inproceedings{ChatterjeeIbsenJensenTkadlec2016,
  author       = {Krishnendu Chatterjee and
                  Rasmus Ibsen{-}Jensen and
                  Josef Tkadlec},
  editor       = {Subbarao Kambhampati},
  title        = {Robust Draws in Balanced Knockout Tournaments},
  booktitle    = {Proceedings of the Twenty-Fifth International Joint Conference on
                  Artificial Intelligence, {IJCAI} 2016},
  pages        = {172--179},
  publisher    = {{IJCAI/AAAI} Press},
  year         = {2016},
  url          = {http://www.ijcai.org/Abstract/16/032},
  bibsource    = {dblp computer science bibliography, https://dblp.org}
}

@inproceedings{blavzej2024controlling,
  title={On Controlling Knockout Tournaments Without Perfect Information},
  author={Bla{\v{z}}ej, V{\'a}clav and Gupta, Sushmita and Ramanujan, MS and Strulo, Peter},
  booktitle={19th International Symposium on Parameterized and Exact Computation},
  year={2024}
}

@article{chaudhary2025parameterized,
  title={Parameterized analysis of bribery in challenge the champ tournaments},
  author={Chaudhary, Juhi and Molter, Hendrik and Zehavi, Meirav},
  journal={Journal of Artificial Intelligence Research},
  volume={83},
  year={2025}
}

@article{YangD21,
  title={Weak transitivity and agenda control for extended stepladder tournaments},
  author={Yang, Yongjie and Dimitrov, Dinko},
  journal={Economic Theory Bulletin},
  volume={9},
  pages={27--37},
  year={2021},
  publisher={Springer}
}

@article{mattei2015complexity,
  title={On the complexity of bribery and manipulation in tournaments with uncertain information},
  author={Mattei, Nicholas and Goldsmith, Judy and Klapper, Andrew and Mundhenk, Martin},
  journal={Journal of Applied Logic},
  volume={13},
  number={4},
  pages={557--581},
  year={2015},
  publisher={Elsevier}
}

@article{aziz2018fixing,
  title={Fixing balanced knockout and double elimination tournaments},
  author={Aziz, Haris and Gaspers, Serge and Mackenzie, Simon and Mattei, Nicholas and Stursberg, Paul and Walsh, Toby},
  journal={Artificial Intelligence},
  volume={262},
  pages={1--14},
  year={2018},
  publisher={Elsevier}
}

@article{stanton2013structure,
  title={The structure, efficacy, and manipulation of double-elimination tournaments},
  author={Stanton, Isabelle and Williams, Virginia Vassilevska},
  journal={Journal of Quantitative Analysis in Sports},
  volume={9},
  number={4},
  pages={319--335},
  year={2013},
  publisher={De Gruyter}
}

@article{sauer2024improving,
  title={Improving ranking quality and fairness in Swiss-system chess tournaments},
  author={Sauer, Pascal and Cseh, {\'A}gnes and Lenzner, Pascal},
  journal={Journal of Quantitative Analysis in Sports},
  volume={20},
  number={2},
  pages={127--146},
  year={2024},
  publisher={De Gruyter}
}

@article{sziklai2022efficacy,
  title={The efficacy of tournament designs},
  author={Sziklai, Bal{\'a}zs R and Bir{\'o}, P{\'e}ter and Csat{\'o}, L{\'a}szl{\'o}},
  journal={Computers \& Operations Research},
  volume={144},
  pages={105821},
  year={2022},
  publisher={Elsevier}
}

@inproceedings{gupta2024exercise,
  title={An exercise in tournament design: When some matches must be scheduled},
  author={Gupta, Sushmita and Sridharan, Ramanujan and Strulo, Peter},
  booktitle={Proceedings of the AAAI Conference on Artificial Intelligence},
  volume={38},
  number={9},
  pages={9749--9756},
  year={2024}
}

@inproceedings{BlazejG0S24,
  author       = {V{\'{a}}clav Blazej and
                  Sushmita Gupta and
                  M. S. Ramanujan and
                  Peter Strulo},
  editor       = {{\'{E}}douard Bonnet and
                  Pawel Rzazewski},
  title        = {On Controlling Knockout Tournaments Without Perfect Information},
  booktitle    = {19th International Symposium on Parameterized and Exact Computation,
                  {IPEC} 2024},
  series       = {LIPIcs},
  volume       = {321},
  pages        = {7:1--7:15},
  publisher    = {Schloss Dagstuhl - Leibniz-Zentrum f{\"{u}}r Informatik},
  year         = {2024}
}

@inproceedings{DBLP:conf/icalp/BjorklundHK04,
  author       = {Andreas Bj{\"{o}}rklund and
                  Thore Husfeldt and
                  Sanjeev Khanna},
  editor       = {Josep D{\'{\i}}az and
                  Juhani Karhum{\"{a}}ki and
                  Arto Lepist{\"{o}} and
                  Donald Sannella},
  title        = {Approximating Longest Directed Paths and Cycles},
  booktitle    = {Automata, Languages and Programming: 31st International Colloquium,
                  {ICALP} 2004},
  series       = {Lecture Notes in Computer Science},
  volume       = {3142},
  pages        = {222--233},
  publisher    = {Springer},
  year         = {2004},
  doi          = {10.1007/978-3-540-27836-8\_21},
  bibsource    = {dblp computer science bibliography, https://dblp.org}
}

@article{DBLP:journals/jcss/ImpagliazzoP01,
  author       = {Russell Impagliazzo and
                  Ramamohan Paturi},
  title        = {On the Complexity of k-SAT},
  journal      = {J. Comput. Syst. Sci.},
  volume       = {62},
  number       = {2},
  pages        = {367--375},
  year         = {2001},
  doi          = {10.1006/JCSS.2000.1727},
  bibsource    = {dblp computer science bibliography, https://dblp.org}
}

@article{DBLP:journals/algorithmica/HoangKLSS10,
  author       = {Ch{\'{\i}}nh T. Ho{\`{a}}ng and
                  Marcin Kaminski and
                  Vadim V. Lozin and
                  Joe Sawada and
                  Xiao Shu},
  title        = {Deciding \emph{k}-Colorability of \emph{P}\({}_{\mbox{5}}\)-Free Graphs
                  in Polynomial Time},
  journal      = {Algorithmica},
  volume       = {57},
  number       = {1},
  pages        = {74--81},
  year         = {2010},
  doi          = {10.1007/S00453-008-9197-8},
  bibsource    = {dblp computer science bibliography, https://dblp.org}
}

@book{DBLP:books/sp/CyganFKLMPPS15,
  author       = {Marek Cygan and
                  Fedor V. Fomin and
                  Lukasz Kowalik and
                  Daniel Lokshtanov and
                  D{\'{a}}niel Marx and
                  Marcin Pilipczuk and
                  Michal Pilipczuk and
                  Saket Saurabh},
  title        = {Parameterized Algorithms},
  publisher    = {Springer},
  year         = {2015},
  doi          = {10.1007/978-3-319-21275-3},
  isbn         = {978-3-319-21274-6},
  bibsource    = {dblp computer science bibliography, https://dblp.org}
}

@book{DBLP:books/daglib/0022205,
  author       = {J{\o}rgen Bang{-}Jensen and
                  Gregory Z. Gutin},
  title        = {Digraphs - Theory, Algorithms and Applications, Second Edition},
  series       = {Springer Monographs in Mathematics},
  publisher    = {Springer},
  year         = {2009},
  isbn         = {978-1-84800-997-4},
  bibsource    = {dblp computer science bibliography, https://dblp.org}
}

@article{KhannaLS00,
  author       = {Sanjeev Khanna and
                  Nathan Linial and
                  Shmuel Safra},
  title        = {On the Hardness of Approximating the Chromatic Number},
  journal      = {Comb.},
  volume       = {20},
  number       = {3},
  pages        = {393--415},
  year         = {2000}
}

@article{BjorklundHK09,
  author       = {Andreas Bj{\"{o}}rklund and
                  Thore Husfeldt and
                  Mikko Koivisto},
  title        = {Set Partitioning via Inclusion-Exclusion},
  journal      = {{SIAM} J. Comput.},
  volume       = {39},
  number       = {2},
  pages        = {546--563},
  year         = {2009}
}

\end{document}